\documentclass[a4paper,ukenglish,cleveref, autoref, thm-restate,authorcolumns]{lipics-v2019}
\usepackage[utf8]{inputenc}
\usepackage[algo2e,linesnumbered]{algorithm2e}
\usepackage{amsmath}
\usepackage{amsthm}
\usepackage{verbatim}

\nolinenumbers{}
\newcommand{\congest}{\textsc{Congest}}
\newcommand{\local}{\textsc{Local}}
\newcommand{\pl}{\mbox{polylog}}
\newcommand{\pll}{\mbox{polyloglog}}
\newcommand{\eps}{\varepsilon}
\newcommand{\Local}{\textsc{Local}}
\newcommand{\nbhd}{\mathrm{Nbr}}
\newcommand{\dos}{\textsc{DegOrderedSparsify}}

\newif\ifdraft
\newif\ifhideproofs
\draftfalse
\hideproofstrue

\usepackage[textsize=tiny]{todonotes}

\newcommand{\shreyas}[1]{\ifdraft{} \todo[color=red!20]{#1}\fi}

\title{Sample-and-Gather: Fast Ruling Set Algorithms in the Low-Memory MPC Model}

\author{Kishore Kothapalli}{IIIT Hyderabad, India}{kkishore@iiit.ac.in}{}{}
\author{Shreyas Pai}{The University of Iowa, USA}{shreyas-pai@uiowa.edu}{https://orcid.org/0000-0003-2409-7807}{}
\author{Sriram V.~Pemmaraju}{The University of Iowa, USA}{sriram-pemmaraju@uiowa.edu}{}{}

\authorrunning{K. Kothapalli, S. Pai, and S. V. Pemmaraju}
\Copyright{Kishore Kothapalli, Shreyas Pai, and Sriram V.~Pemmaraju}
\keywords{Distributed Algorithms, Massively Parallel Computation, Maximal Independent Set, Ruling Set, Simulation Theorems}
\ccsdesc[500]{Theory of computation~Distributed algorithms}

\date{}
\EventEditors{Nitin Saxena and Sunil Simon}
\EventNoEds{2}
\EventLongTitle{40th IARCS Annual Conference on Foundations of Software Technology and Theoretical Computer Science (FSTTCS 2020)}
\EventShortTitle{FSTTCS 2020}
\EventAcronym{FSTTCS}
\EventYear{2020}
\EventDate{December 14--18, 2020}
\EventLocation{BITS Pilani, K K Birla Goa Campus, Goa, India (Virtual Conference)}
\EventLogo{}
\SeriesVolume{182}
\ArticleNo{12}

\begin{document}

\maketitle
\begin{abstract}
  Motivated by recent progress on symmetry breaking problems such as maximal independent set (MIS)
and maximal matching in the low-memory Massively Parallel Computation (MPC) model
(e.g., Behnezhad et al.~PODC 2019; Ghaffari-Uitto SODA 2019), we investigate the complexity of ruling set problems in this model.
The MPC model has become very popular as a model for large-scale distributed computing and it comes with the constraint that the memory-per-machine is strongly sublinear in the input size. For graph
problems, extremely fast MPC algorithms have been designed assuming $\tilde{\Omega}(n)$ memory-per-machine,
where $n$ is the number of nodes in the graph (e.g., the $O(\log\log n)$ MIS algorithm of Ghaffari et al., PODC 2018). However, it has proven
much more difficult to design fast MPC algorithms for graph problems in the
\textit{low-memory} MPC model, where the memory-per-machine is  restricted to being strongly sublinear in the number of nodes, i.e., $O(n^\eps)$ for $0 < \eps
< 1$.

In this paper, we present an algorithm for the 2-ruling set problem, running in $\tilde{O}(\log^{1/6} \Delta)$
rounds whp, in the low-memory MPC model.
We then extend this result to $\beta$-ruling sets for any integer $\beta > 1$. Specifically, we show that a $\beta$-ruling set can be computed in the low-memory
MPC model with $O(n^\eps)$ memory-per-machine in $\tilde{O}(\beta \cdot \log^{1/(2^{\beta+1}-2)} \Delta)$
rounds, whp. From this it immediately follows that a $\beta$-ruling set for $\beta = \Omega(\log\log\log \Delta)$-ruling set can be computed in
in just $O(\beta \log\log n)$ rounds whp. The above results assume a total memory of $\tilde{O}(m + n^{1+\eps})$. We also present algorithms for $\beta$-ruling sets in the low-memory MPC model assuming that the total memory over all machines is restricted to $\tilde{O}(m)$.
These algorithms are all substantially faster than the Ghaffari-Uitto $\tilde{O}(\sqrt{\log \Delta})$-round MIS algorithm in the low-memory MPC model.

All our results follow from a \textit{Sample-and-Gather Simulation Theorem} that shows how random-sampling-based \textsc{Congest} algorithms can be efficiently simulated in the low-memory MPC model.
We expect this simulation theorem to be of independent interest beyond the
ruling set algorithms derived here.
 \end{abstract}

\section{Introduction}
There has been considerable recent progress in the design and study of large-scale distributed computing models that are closer to reality, yet mathematically tractable.
Of these, the \textit{Massively Parallel Computing (MPC)} model~\cite{Karloff2010,YaroslavtsevVadapalliICML2018} has gained significant attention due to its flexibility and its ability to closely model existing distributed computing frameworks used in practice such as MapReduce~\cite{DeanG08},
Spark~\cite{spark}, Pregel~\cite{pregel}, and Giraph~\cite{ChingetalVLDB2015}.

The MPC model is defined by a set of machines, each having at most $S$ words of memory. The machines are connected to each other via an all-to-all communication network. 
Communication and computation in this model are synchronous.
In each round, each machine receives up to $S$ words from other machines, performs local computation, and sends up to $S$ words to other machines. 
The key characteristic of the MPC model is that both the memory upper bound $S$ and the number of machines used are assumed to be strongly sublinear in the input size $N$, i.e., bounded by $O(N^{1-\eps})$ for some constant $\eps$, $0 < \eps < 1$.
This characteristic models the fact that in modern large-scale computational problems the input is too large to fit in a single machine and is much larger than the number of available machines. 

Even though the MPC model is relatively new, a wide variety of classical graph problems have been studied in this model.
This stream of research includes the design of fast algorithms~\cite{Assadi2018a,BehnezhadetalPODC2019,Czumaj2017,Czumaj2019,Ghaffari2018} as well as lower bound constructions~\cite{Charikar2020,Ghaffari2019,RoughgardenVWJACM2018}.
A particular, though not exclusive, focus of this research has been on \textit{symmetry breaking} problems such as maximal independent set (MIS)~\cite{BehnezhadetalPODC2019,Ghaffari2018,Ghaffari2020}, maximal matching~\cite{Behnezhad2019}, and $(\Delta+1)$-coloring~\cite{ChangetalPODC2019,Assadi2018b}, along with related graph optimization problems such as minimum vertex cover and maximum matching.

For graph problems, the input size is $\tilde{O}(m+n)$ where $m$ is the number of edges and $n$ is the number of nodes of the input graph. Thus, $O((m+n)^{1-\eps})$, for some constant $\eps$, $0 < \eps < 1$, is an upper bound on both the number of machines that can be used and the size $S$ of memory per machine.
It turns out that the difficulty of graph problems varies significantly based on how $S$ relates to
the number of nodes (\(n\)) of the input graph. Specifically, three regimes for $S$ have been considered in the literature.
\begin{itemize}
\item \textbf{Strongly superlinear memory (\(S = O(n^{1 + \eps})\)):} 
For this regime to make sense in the MPC model, the input graph needs to be
highly dense, i.e., \(m \gg S \gg n\) such that \(S\) is strongly sublinear in \(m\). 
Even though the input graph is dense, the fact that each machine has $O(n^{1+\eps})$ local memory makes this model quite powerful. For example, in this model, problems such as minimum spanning tree, MIS, and \(2\)-approximate minimum vertex cover, all have $O(1)$-round algorithms~\cite{Karloff2010,HarveyetalSPAA2018}.

\item \textbf{Near-linear memory
(\(S=\tilde{O}(n)\)):} Problems become harder in this regime, but symmetry breaking problems such as MIS, vertex cover, and maximal matching can still be solved in \(O(\log \log n)\) rounds~\cite{Czumaj2017,Assadi2017a,Ghaffari2018a,GhaffariJNSPAA20}.
Furthermore, recently Assadi, Chen, and Khanna~\cite{Assadi2018b} presented an $O(1)$-round algorithm for \((\Delta+1)\)-vertex coloring.

\item \textbf{Strongly sublinear memory (\(S = O(n^{\eps})\)):} 
Problems seem to get much harder in this regime and whether there are sublogarithmic-round algorithms for certain graph problems in this regime is an important research direction.
For example, it is conjectured
that the problem of distinguishing if the input graph is a single
cycle vs two disjoint cycles of length \(n/2\) requires \(\Omega(\log n)\) rounds \cite{YaroslavtsevVadapalliICML2018,Ghaffari2019}.
However, even in this regime, Ghaffari and Uitto~\cite{Ghaffari2018} have recently shown that MIS does have a sublogarithmic-round algorithm, running in  $\tilde{O}(\sqrt{\log \Delta})$ rounds, where $\Delta$ is the maximum degree of the input graph.
This particular result serves as a launching point for the results in this paper. 
\end{itemize}

The MIS problem has been called 
``a central problem in the area of locality in distributed computing''
(2016 Dijkstra award citation).
Starting with the elegant, randomized MIS algorithms from the mid-1980s by Luby~\cite{LubySICOMP1986} and by Alon et al.~\cite{AlonetalJAlg1986}, several decades of research has now been devoted to designing MIS algorithms in various models of parallel and distributed computing (e.g., PRAM, \textsc{Local}, \textsc{Congest}, \textsc{Congested-Clique}, and MPC).
A \textit{ruling set} is a natural relaxation of an MIS and considerable research has been devoted to solving the 
ruling set problem in different models of distributed 
computation as well~\cite{BEPS16,KothapalliP12,BishtKP13,GhaffariSODA2016}.
An \textit{$(\alpha, \beta)$-ruling set} of a graph $G = (V, E)$ is a  subset $S \subseteq V$ such that (i) every pair of nodes in $S$ are at distance at least $\alpha$ from each other and (ii) every node in $V$ is at distance at most $\beta$ from some node in $S$.
An MIS is just a \((2, 1)\)-ruling set.
Research on the ruling set problem has focused on the question of how much faster distributed ruling set algorithms can be relative to MIS algorithms and whether there is a provable separation in the distributed complexity of these problems in different 
models of distributed computing.
For example, in the \Local\ model\footnote{The \Local{} model is a synchronous, message passing model of distributed computation \cite{Linial92,peleg00} with unbounded messages. See Section \ref{subsection:notation} for definitions of related models of computation.}, Kuhn, Moscibroda, and Wattenhofer \cite{KMWPODC2004,KMWJACM2016} show an $\Omega\left(\min\left\{\frac{\log \Delta}{\log\log \Delta}, \frac{\log \Delta}{\log\log \Delta}\right\}\right)$ lower bound for MIS, even for randomized algorithms. However, combining the recursive sparsification procedure of Bisht et al.~\cite{BishtKP13} with the improved MIS algorithm of Ghaffari \cite{GhaffariSODA2016} and the recent deterministic network decomposition algorithm of Rozhon and Ghaffari \cite{RozhonGSTOC20}, it is possible to compute $\beta$-ruling sets in $O(\beta \log^{1/\beta} \Delta + \pll(n))$ rounds, thus establishing a separation between these problems, even for $\beta = 2$, in the \Local\ model.
In this paper, we are interested only in $(2, \beta)$-ruling sets and so as a short hand, we drop the first parameter ``2'' and call these objects $\beta$-ruling sets.
As a short hand, we will use \textit{low-memory MPC model} to refer to the \textit{strongly sublinear memory MPC model}.
As mentioned earlier, Ghaffari and Uitto~\cite{Ghaffari2018} recently presented an algorithm that solves MIS in the low-memory MPC model in $\tilde{O}(\sqrt{\log \Delta})$ rounds.  However, nothing more is known about the \(2\)-ruling set problem in this model and the fastest \(2\)-ruling set algorithm in the low-memory MPC model is just the above-mentioned MIS algorithm.
This is in contrast to the situation in the linear-memory MPC model.
In this model, the fastest algorithm for solving MIS runs in $O(\log\log n)$ rounds~\cite{Ghaffari2018a}, whereas the fastest \(2\)-ruling set algorithm runs in $O(\log\log\log n)$ rounds~\cite{HegemanPSarxiv2014}.
This distinction between the status of MIS and \(2\)-ruling sets in the linear-memory MPC model prompts the following related questions. 
\begin{quote}
\textit{
Is it possible to design an $o(\sqrt{\log \Delta})$-round, \(2\)-ruling set algorithm in the low-memory MPC model? Could we in fact design 2-ruling set algorithms in the low-memory MPC model that run in $O(\pll(n))$ rounds?} 
\end{quote}

\subsection{Main Results}\label{subsection:mainResults}
We make progress on the above question via the following results proved in this paper. 
\begin{enumerate}
\item
We show (in Theorem~\ref{theorem:2rslow} part (i)) that a \(2\)-ruling set of a graph $G$ can be computed in $\tilde{O}(\log^{1/6} \Delta)$ rounds in the low-memory MPC model.
We generalize this result to $\beta$-ruling sets, for $\beta \ge 2$ (in Theorem~\ref{thm:betarsMPC} part (i)), and show that a $\beta$-ruling set of a graph $G$ can be computed in $\tilde{O}(\log^{1/(2^{\beta+1}-2)} \Delta)$ rounds in the low-memory MPC model.
These algorithms are substantially faster than the MIS algorithm~\cite{Ghaffari2018} for the low-memory MPC model.
The inverse exponential dependency on $\beta$ in the running time of the $\beta$-ruling set algorithm is worth noting. This dependency implies that for any 
$\beta = \Omega(\log\log\log \Delta)$, we can compute a $\beta$-ruling set in only $O(\beta \cdot \pll(n))$ rounds.
This is in contrast to the situation in the \Local\ model; using the $O(\beta \cdot \log^{1/\beta} \Delta + \pll(n))$-round $\beta$-ruling set algorithm in the \Local\ model mentioned earlier, one can obtain an $O(\pll(n))$-round algorithm only for $\beta = \Omega(\log\log \Delta)$.

\item Even though the above-mentioned results are in the low-memory MPC model, they assume no restrictions on the total memory used by all the machines put together.
Specifically, we obtain the above results allowing a total of $\tilde{O}(m + n^{1+\eps})$ memory.
Note that the input uses $\tilde{O}(m)$ memory and thus these algorithms make use of $\tilde{O}(n^{1+\eps})$ extra total memory.
If we place the restriction that the total memory cannot exceed the input size, i.e., $\tilde{O}(m)$, then we get slightly weaker results.
Specifically, we show (in Theorem~\ref{theorem:2rslow} part (ii)) that a \(2\)-ruling set can be computed in $\tilde{O}(\log^{1/4} \Delta)$ rounds in the low-memory MPC model using
$\tilde{O}(m)$ total memory.
Additionally, we show (in Theorem~\ref{thm:betarsMPC} part (ii)) that a $\beta$-ruling set, for any $\beta \ge 2$, can be computed in $\tilde{O}(\log^{1/2\beta} \Delta)$ rounds in the low-memory MPC model using $\tilde{O}(m)$ total memory.
Note that even though these results are weaker than those we obtain in the setting where total memory is unrestricted, these algorithms are much faster than the $\tilde{O}(\sqrt{\log \Delta})$-round,
low-memory MPC model algorithm for MIS that uses $\tilde{O}(m)$ total memory \cite{Ghaffari2018}.
\end{enumerate}

\noindent
\textbf{Technical Contributions.} We obtain all of these results by applying new Simulation Theorems (Theorems~\ref{theorem:sampleAndGather} and~\ref{theorem:mTotalMemory}) that we develop and prove.
These Simulation Theorems provide a general method for deriving fast MPC algorithms
from known distributed algorithms in the \congest\ model\footnote{The \congest\ model \cite{peleg00} is similar to the \local\ model except that in the \congest\ model there is an $O(\log n)$ bound on the size of each message.} 
and they form the main technical contribution of this paper.

A well-known technique \cite{Ghaffari2017,Ghaffari2018,HegemanPSarxiv2014,ParterYogevDISC2018} for designing fast algorithms in ``all-to-all'' communication models such as MPC is the following ``ball doubling'' technique. 
Informally speaking, if for every node $v$ we know the state of the $k$-neighborhood around node $v$, then by exchanging this information, ideally in $O(1)$ rounds, it is possible to learn the state of the $2k$-neighborhood around each node.
Thus, having learned the state of an $\ell$-neighborhood around each node $v$ in $O(\log \ell)$ rounds, it is possible to simply use local computation at each node to ``fast forward'' the algorithm by $\ell$ rounds, without any further communication.
In this manner, a phase consisting of $\ell$ rounds in the \congest\ model can be compressed into $O(\log \ell)$ rounds in the MPC model.
This description of the ``ball doubling'' technique completely ignores the main obstacle to using this technique: the $k$-neighborhoods around nodes may be so large that bandwidth constraints of the communication network may disallow rapid exchange of these $k$-neighborhoods.

Our main contribution is to note that in many randomized, distributed algorithms in the \congest\ model, there is a natural \textit{sparsification} that occurs, i.e., in each round a randomly sampled subset of the nodes are active, and the rest are silent.
This implies that the $k$-neighborhoods that are exchanged only need to involve sparse subgraphs induced by the sampled nodes.
A technical challenge we need to overcome is that the subgraph induced by sampled nodes is not just from the next round, but from the $\ell$ future rounds; so we need to be able to estimate which nodes will be sampled in the future.
On the basis of this idea, we introduce the notion of $\alpha$-sparsity of a randomized \congest\ algorithm, for a parameter $\alpha$; basically smaller the $\alpha$ greater the sparsification induced by random sampling.
We present \textit{Sample-and-Gather} Simulation Theorems in which, roughly speaking, an $R$-round \congest\ algorithm is simulated in $\tilde{O}(R/\sqrt{\log_\alpha n})$ rounds (respectively, $\tilde{O}(R/\sqrt{\log_\alpha \Delta})$ rounds) in the low-memory MPC model, where the total memory is $\tilde{O}(m + n^{1+\eps})$ (respectively, $\tilde{O}(m)$).

Our Simulations Theorems are inspired by a Simulation Theorem due to Behnehzhad et al.~\cite[Lemma 5.5]{BehnezhadetalPODC2019}. Using their Simulation Theorem, an $R$-round state-congested algorithm can be simulated in (roughly) $R/\log_\Delta n$ low-memory MPC rounds. In contrast, our Simulation Theorem (Theorem \ref{theorem:sampleAndGather}) yields a running time of (roughly) $R/\sqrt{\log_\alpha n}$, where $\alpha$ is a sparsity parameter. When the input graph has high degree, but the state-congested algorithm samples a very sparse subgraph (i.e., $\alpha$ is small) then our Simulation Theorems provide a huge advantage over the Behnehzhad et al.~Simulation Theorems.

To obtain our results for ruling sets, we apply the Sample-and-Gather Simulation Theorems to the sparsification procedure of Kothapalli and Pemmaraju~\cite{KothapalliP12} and Bisht et al.~\cite{BishtKP13} and to the sparsified MIS algorithm of Ghaffari~\cite{Ghaffari2017}. 
We note that by applying the Sample-and-Gather Simulation Theorems to the sparsified MIS algorithm of Ghaffari~\cite{Ghaffari2017}, we recover the Ghaffari-Uitto low-memory MPC algorithm for MIS \cite{Ghaffari2018}, built from scratch.
We believe that the Sample-and-Gather Theorems will be of independent interest because they simplify the design of fast MPC algorithms.

\ifhideproofs{}
\subsection{Other Related Work}\label{subsection:relatedWork}
The MPC model has received a lot of interest on problems other than the ones mentioned in the previous section \cite{Goodrich2011,LattanziMSVSPAA11,BeameKS17,KoutrisBS16,BeameKS14,AhnGSPAA15,ImBSSTOC17,BoroujeniEGHS18}. For example~\cite{AndoniNOY14,AssadiKZPODC19,Inamdar2018} consider clustering problems and~\cite{AndoniSZ20,AssadiSWPODC19,Andoni2018,ShankhaBiswas2020} look at distance computation problems like minimum spanning tree, shortest paths, and spanners.

There has been some progress in recent years in simulating distributed algorithms from one model of computation to another. Karloff et al.~\cite{Karloff2010} show how to simulate certain PRAM algorithms in the MPC (or Map-Reduce) model. Hegeman and Pemmaraju \cite{HegemanPTCS15} show that algorithms designed in the \textsc{Congested-clique} model can be simulated in the Map-Reduce model~\cite{Karloff2010}. The upper bounds shown by Klauck et al. \cite{KlauckNPRSODA15} also are the result of converting algorithms designed in the \textsc{Congest} model to algorithms in the $k$-machine model. The $k$-machine model \cite{KlauckNPRSODA15} is a recent distributed computing model that consists of a set of $k$ pairwise interconnected machines with link bandwidth of $B$ per round.  Konrad et al. \cite{Konrad} show that algorithms in the beeping model, a model that is inspired in part by communication in biological processes, can be simulated to run in the $k$-machine model. Behnezhad et al. \cite{BehnezhadDH18} show that algorithms designed in the \textsc{Congested-Clique} model can be simulated in the semi-MPC model. \shreyas{Need conference version of \cite{BehnezhadDH18}}
\fi

\subsection{Technical Preliminaries}\label{subsection:notation}
\textbf{Models.} In the \textsc{Congest} model \cite{peleg00} a communication network is abstracted as an $n$-node graph. In synchronous rounds each node can send a $O(\log n)$ bit message to each of its neighbors. The \emph{complexity} is the number of rounds until each node has computed its output, e.g., whether it belongs to an MIS or not.
The \textsc{Congested-Clique} model is similar to the \textsc{Congest} model, but nodes  can send $O(\log{n})$-bits messages to all other nodes, not only to its neighbors in the input graph $G$~\cite{Peleg03}. The \textsc{Local} model~\cite{Linial92} is the same as the \textsc{Congest} model, except the message sizes can be unbounded.

\noindent
\textbf{MPC simulations.} 
In the low-memory MPC model, even a single round of a \congest\ algorithm in which every node sends a message to every neighbor, is hard to simulate.
This is because the degree of a node could be larger than the memory volume $n^\epsilon$ of a machine.
To deal with this issue, we first assume that a node $v$ with $\deg(v) > n^\epsilon$ is split into copies that are distributed among different machines and we have a virtual
$O(1/\eps)$-depth balanced tree on these copies of $v$.
The root of this tree coordinates communication between $v$ and its neighbors in the input graph. By itself, this is insufficient because information from $v$'s neighbors cannot travel up $v$'s tree without running into a memory bottleneck.
However, if computation at each node can be described by a \textit{separable} function, then this is possible. 
The following definition of separable functions captures functions such as $\max$, $\min$, sum, etc. This issue and the proposed solution have been discussed in \cite{Ghaffari2018,BehnezhadetalPODC2019}.
\begin{definition}\label{def:separable}
Let $f: 2^{\mathbb{R}} \rightarrow \mathbb{R}$ denote a set function. We call $f$ \emph{separable} iff for any set of reals $A$ and for any $B \subseteq A$, we have $f(A) = f\big(f(B), f(A \setminus B)\big)$.
\end{definition}
The following lemma \cite{BehnezhadetalPODC2019} shows that it is possible to compute the value of a separable function $f$ on each of the nodes in merely $O(1/\eps)$ rounds. 
The bigger implication of this lemma is that a single round of a \congest\ algorithm can be simulated in $O(1/\eps)$ low-memory MPC rounds. 

\begin{lemma} \label{lem:generalmaxload}
  Suppose that on each node $v \in V$, we have a number $x_v$ of size $O(\log n)$ bits and let $f$ be a separable function. There exists an algorithm that in $O(1/\eps)$ rounds of MPC, for every node $v$, computes $f(\{x_u \, | \, u\in \nbhd(v) \})$ whp in the low-memory MPC model with $\tilde{O}(m)$ total memory.
\end{lemma}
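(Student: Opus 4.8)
The plan is to route every value $x_u$ along a family of virtual aggregation trees --- one ``rooted at'' each node --- and then collapse each tree bottom-up using the separability of $f$. Fix a branching parameter $b = \lceil n^{\eps/2} \rceil$. For every node $v$, I would partition the $\deg(v)$ edges incident to $v$ into $O(\deg(v)/b)$ groups of size at most $b$, make each group a leaf, and build a balanced $b$-ary tree $T_v$ on these leaves whose root represents $v$. Since $T_v$ has at most $n/b = n^{1-\eps/2}$ leaves, its depth is $O(\log_b n) = O(1/\eps)$. Summed over all $v$, the trees contain $O(\sum_v \deg(v)/b) = O(m/n^{\eps/2})$ nodes, each of constant size; distributing these tree-nodes to machines by a hash function spreads them so that every machine hosts $O(n^{\eps/2})$ of them whp (Chernoff), which respects both the $\tilde O(m)$ total-memory and $O(n^\eps)$ per-machine-memory budgets.

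I would then run three phases, each costing $O(1/\eps)$ rounds. \emph{(i) Dissemination:} each node $u$ broadcasts $x_u$ down $T_u$; after $O(1/\eps)$ rounds every leaf of $T_u$ (a group of edges incident to $u$) holds $x_u$. A machine hosting $O(n^{\eps/2})$ tree-nodes, each forwarding one $O(\log n)$-bit word to its $\le b$ children, sends $O(n^\eps)$ words, so this stays within the bandwidth bound. \emph{(ii) Exchange:} for each edge $\{u,v\}$, the leaf of $T_u$ responsible for it sends $x_u$ to the leaf of $T_v$ responsible for it (one word per edge; each leaf handles $O(b)$ edges, and the pairing of these two leaves is recorded during preprocessing). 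Now each leaf $\ell$ of $T_v$ knows $\{x_u : u \text{ a neighbor of } v \text{ assigned to } \ell\}$. \emph{(iii) Aggregation:} each leaf $\ell$ of $T_v$ computes $y_\ell := f(\{x_u : u \text{ assigned to } \ell\})$ by local computation, and then the $y$-values are combined up $T_v$: each internal node, on receiving the (at most $b$) $f$-values of its children, applies $f$ to them and passes the result up. After $O(1/\eps)$ rounds the root of $T_v$ holds a value, and --- if the output must reside with $v$ itself --- one more top-down broadcast (again $O(1/\eps)$ rounds) delivers it everywhere along $T_v$.

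Correctness is an induction on the height of $T_v$: I claim that whenever a tree-node is the root of a subtree whose leaves jointly cover a neighbor set $A$, the value it computes equals $f(A)$. The base case (a leaf) is immediate. For the step, let the children's subtrees cover the disjoint sets $A_1,\dots,A_k$ with $\bigcup_i A_i = A$; repeatedly invoking $f(A) = f\big(f(A_i),\, f(A \setminus A_i)\big)$ from Definition~\ref{def:separable} shows $f\big(\{f(A_1),\dots,f(A_k)\}\big) = f(A)$, which is exactly what the node computes. For a multiset-style $f$ such as sum, tag each $x_u$ with its node-ID so that the sets remain genuinely disjoint; nothing else changes. The ``whp'' qualifier comes from the hashing used for load balancing.

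The step I expect to be the main obstacle is the memory bookkeeping rather than the algebra: one has to verify that at every round no machine exceeds $O(n^\eps)$ words --- in particular that a single high-degree $v$ does not place too many of $T_v$'s nodes on one machine, and that the per-edge messages of the Exchange phase can be scheduled without a congestion bottleneck --- and that the preprocessing that builds the trees (grouping the edge lists, naming the leaves, and wiring parent/child pointers) can itself be done in $O(1/\eps)$ rounds using standard MPC primitives such as sorting, prefix sums, and hashing. These are precisely the points handled in~\cite{Ghaffari2018, BehnezhadetalPODC2019}, and I would adapt their treatment.
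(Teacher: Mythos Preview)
The paper does not prove this lemma; it quotes it from Behnezhad et al.~\cite{BehnezhadetalPODC2019} and only sketches the mechanism in the paragraph preceding the statement (``a node $v$ with $\deg(v) > n^\eps$ is split into copies \ldots\ and we have a virtual $O(1/\eps)$-depth balanced tree on these copies''). Your reconstruction follows exactly that sketch --- one $n^{\eps/2}$-ary aggregation tree per vertex, a top-down broadcast, an edge-wise exchange, and a bottom-up fold using separability --- and is essentially the argument in~\cite{BehnezhadetalPODC2019}. The round count, per-machine bandwidth, and total-memory accounting are all in order.

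One small wrinkle worth tightening: from Definition~\ref{def:separable} alone you get $f(A)=f\bigl(f(B),f(A\setminus B)\bigr)$, which by iteration gives a \emph{nested} binary combination $f\bigl(f(A_1),f\bigl(f(A_2),\dots\bigr)\bigr)$ rather than the flat $f\bigl(\{f(A_1),\dots,f(A_k)\}\bigr)$ you wrote. This does not affect the round complexity --- the internal tree-node simply performs the pairwise fold locally --- but as stated your inductive step slightly overreaches what the definition provides. Your remark about tagging values with IDs to handle multiset-style $f$ (e.g.\ sum) is a correct and necessary observation, since the definition is phrased for sets.
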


\noindent
\textbf{Graph-theoretic notation.} For a node $v\in V$ we denote its non-inclusive neighborhood in $G$ by \(\nbhd(v)\). Moreover, we define \(\nbhd^{+}(v) = \nbhd(v) \cup \{v\}\), \(\nbhd(S) = \bigcup_{v \in S}\nbhd(v)\), \(\nbhd^{+}(S) = \bigcup_{v \in S}\nbhd^{+}(v)\).

\ifhideproofs{}
\else{}

\noindent
\textbf{Note about proofs:} Due to space constraints, we omit all the proofs in the paper; a full version of the paper, with all proofs, is included in the appendix for better readability. 
\fi{} 
\section{The Sample-and-Gather Simulation}\label{section:SampleAndGather}
Our simulation theorems apply to a subclass of \congest\ model algorithms
called \textit{state-congested} algorithms \cite{BehnezhadetalPODC2019}.
\begin{definition}
An algorithm in the \congest\ model is said to be \textit{state-congested} if
\begin{itemize}
    \item[(i)] by the end of round $r$, for any $r$, at each node 
    $v$, the algorithm stores a state $\sigma_r(v)$ of size $O(\deg(v) \pl(n))$ bits, i.e., an average of $O(\pl(n))$ bits per neighbor. The initial state $\sigma_0(v)$ of each node $v$ is its \texttt{ID}.
    Furthermore, we can update the state at each node $v$ in each round $r$ using an additional temporary space of size $O(\deg(v) \cdot \pl (n))$ bits.
    \item[(ii)] The states of the nodes after the last round of the algorithm
    are sufficient in determining, collectively, the output of the
    algorithm.
\end{itemize}
\end{definition}

\noindent
A key feature of a state-congested algorithm is that the local state at each node stays bounded in size
throughout the execution of the algorithm.

We inductively design a fast low-memory MPC algorithm that simulates a given state-congested algorithm. For this purpose, we start by assuming that we have a state-congested, possibly randomized, algorithm $Alg$, whose first $t$ rounds have
been correctly simulated in the low-memory MPC model. Our goal now is to simulate a \textit{phase} consisting of the next $\ell$ rounds of $Alg$, i.e., rounds $t+1, t+2, \ldots, t+\ell$, in just $O(\log \ell)$ low-memory MPC rounds.
We categorize each node $u$ in a round $\tau$, $t+1 \le \tau \le t+\ell$,
based on its activity in round $\tau$. Specifically, a node $u$ is  a
\textit{sending node} 
in round $\tau$ if sends at least one message in round $\tau$. Moreover, a node is called a \textit{sending-only node} if it does not update its state in round $\tau$.

Consider a node $u$ at the start of the phase we want to compress. Since this is immediately after round $t$, node
$u$ knows its local state $\sigma_t(u)$.
Let $p_{t+1}(u)$ denote the probability that node $u$ is a sending node in round $t+1$. We call this the
\textit{activation probability} of node $u$ in round $t+1$. 
Also, for any node $v$, let $A_{t+1}(v) := \sum_{u \in Nbr(v)} p_{t+1}(u)$ denote the \textit{activity level} in $v$'s neighborhood in round $t+1$.
Note that $p_{t+1}(u)$ is completely determined by $\sigma_t(u)$ and so
node $u$ can locally calculate $p_{t+1}(u)$ after round $t$.
In order to simulate rounds $t+1, t+2, \ldots, t+\ell$ in a compressed
fashion in the MPC model, every node $u$ needs to know the probability of it being a sending node in each of these rounds. But, rounds $t+2, t+3, \ldots, t+\ell$ are in the future and so node $u$, using current knowledge, can only \textit{estimate} an 
upper bound $\tilde{p}_\tau(u)$ on the probability that it will be a sending node
in round $\tau$, $t+2 \le \tau \le t+\ell$.

\ifhideproofs{}
To do this estimation, node $u$ considers all \textit{feasible} current global states $\Pi$. 
As a short hand, we will use \textit{round-$\tau$ local state} 
(respectively, \textit{round-$\tau$ global state}) to denote a local (respectively, global) state immediately after round $\tau$.
Now note that from $u$'s point of view, for a global state $\Pi$ to be a feasible round-$t$ global state, the local state of $u$ in $\Pi$ should equal $\sigma_t(u)$.
Further note that if $u$ knows an upper bound on the number of nodes in the network, this set
of global states is finite.
For each such global state $\Pi$, let $S_{\tau-1}(u, \Pi)$ denote the collection 
of all round-$(\tau-1)$ local states of node $u$ reachable 
from the round-$t$ global state $\Pi$.
Given that $Alg$ is randomized, its execution induces a probability distribution over $S_{\tau-1}(u, \Pi)$. Let $S^{hp}_{\tau-1}(u, \Pi)$ denote an arbitrary high probability subset of $S_{\tau-1}(u)$.
Then, we define $\tilde{p}_\tau(u, \Pi)$
as the maximum probability of node $u$ being a send-only node in any state in $S^{hp}_{\tau-1}(u, \Pi)$.
Finally, we define $\tilde{p}_\tau(u) := \max_{\Pi} p_{\tau}(u, \Pi)$ as the worst case estimate, over all feasible global states.
This definition of $\tilde{p}_\tau(u)$ implies that whp\footnote{We use ``whp'' as short for ``with high probability'' which
refers to the probability that is at least $1 - 1/n^c$ for $c \ge 1$.} in any execution starting in
a round-$t$ global state in which the local state of node $u$ is $\sigma_t(u)$,
the probability that node $u$ will be a sending node in round $\tau$ is
bounded above by $\tilde{p}_\tau(u)$.
This definition of $\tilde{p}_\tau(u)$ holds for all rounds $\tau = t+2, t+3, \ldots, t+\ell$. For round $\tau = t+1$, we simply set $\tilde{p}_{t+1}(u) := p_{t+1}(u)$, i.e., the estimated activation probability in round $t+1$ is the actual activation probability.
One final remark about these probability estimates $\tilde{p}_\tau(u)$ is that they 
can all be computed by node $u$, using just local knowledge.
In theory, this may take super-polynomial time, which is allowed in the \congest\ model.
But in practice, as can be seen from the applications in Section \ref{sec:2rs}, estimating these probabilities is a polynomial-time computation.
\else{}
In general, to do this estimation, node $u$ can locally generate all possible global states consistent with respect to its current local states and then generate all possible execution sequences from these global states. 
By examining this space of execution sequences, node $u$ can estimate the upper bound $\tilde{p}_\tau(u)$. This is described in more detail in Section 2 in the full paper that appears in the Appendix. 
But in practice, as can be seen from the applications in Section \ref{sec:2rs}, estimating these probabilities is a very easy polynomial-time computation.
\fi{}

Let $\tilde{p}_{t+1}(u) = p_{t+1}(u)$ for any node $u$.
For any $\tau$, $t+1 \le \tau \le t+\ell$, for any node $v$, let $\tilde{A}_\tau(v) := \sum_{u \in Nbr(v)} \tilde{p}_\tau(u)$ denote the \textit{estimated activity level} in node $v$'s neighborhood in round $\tau$.
Note that for the first round round in the phase, $\tau = t+1$, the estimated and actual activity levels are identical. 
Finally, let $\tilde{A}_\tau$ be the maximum $\tilde{A}_\tau(v)$, where the maximum is taken
over all nodes $v$ that are not sending-only nodes.

\begin{lemma} 
\label{lemma:sampleAndGather}
Suppose $\ell$ is such that
\begin{equation}
\label{equation:ellBound}
\left(\sum_{\tau = t+1}^{t+\ell} \tilde{A}_\tau \log n\right)^{\ell} \le O(n^{\eps/2}).
\end{equation}
Then the next phase of the algorithm $Alg$ consisting of rounds $t+1, t+2, \ldots, t+\ell$ can be simulated in $O(\log \ell)$ 
rounds in the low-memory MPC model with $\tilde{O}(m + n^{1+\eps})$ total
memory.
\end{lemma}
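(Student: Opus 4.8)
The plan is to realize the phase by a \emph{sample--gather--replay} scheme whose cost is $O(\log\ell)$ rounds of graph exponentiation. First, every node $u$ samples locally (using fresh randomness) a \emph{potential-sender} flag for each round $\tau\in\{t+1,\dots,t+\ell\}$, setting its round-$\tau$ flag independently with probability $\tilde p_\tau(u)$. Coupling the execution of $Alg$ with these flags in the standard way --- using the flag to realize whether $u$ sends in round $\tau$ and fresh randomness for message contents --- the defining property of $\tilde p_\tau(u)$ gives that, for fixed $u,\tau$, whp $u$ sends in round $\tau$ only if its round-$\tau$ flag is set; a union bound over the $O(n\log n)$ node/round pairs of the phase makes this hold everywhere, whp. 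The flags are fixed upfront, hence a single (random) object for the rest of the argument.

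\textbf{The dependency ball.} I would then form the directed graph $\hat H$ on $V$ with an arc $x\to y$ whenever $y\in\nbhd(x)$ and $y$'s flag is set in some round of the phase in which $x$ is not sending-only (so only rounds that can change $x$'s state contribute to $x$'s out-arcs), and for each node $v$ that updates its state somewhere in the phase let $A(v)$ be the set of nodes reachable from $v$ in $\hat H$ within $\ell$ arcs. Unrolling the state recursion of a state-congested algorithm shows that $\sigma_{t+\ell}(v)$ is determined by $G[A(v)]$, the round-$t$ states $\sigma_t(w)$ for $w\in A(v)$ (each restricted to its $A(v)$-neighbors), the per-round flags of the nodes of $A(v)$, and the parts of their random strings relevant to edges of $G[A(v)]$; the containment of $v$'s true backward dependency in $A(v)$ uses that true senders are flagged (whp, from the previous step) and that a sending-only node ignores its received messages. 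For the size: the out-degree of a node in $\hat H$ is at most the number of distinct flagged neighbors it has, summed over the (non-sending-only) rounds that supply its out-arcs --- a sum of independent Bernoullis of total mean $\le\sum_{\tau=t+1}^{t+\ell}\tilde A_\tau$ (here is where $\tilde A_\tau$ being the maximum over non-sending-only nodes is used) --- hence $O\!\big(\sum_\tau\tilde A_\tau\log n\big)$ whp by a Chernoff bound, so $|A(v)|=O\!\big((\sum_\tau\tilde A_\tau\log n)^\ell\big)=O(n^{\eps/2})$ by \eqref{equation:ellBound}, and the data attached to $A(v)$ fits in $O(n^\eps)$ words --- the factor $(\log n)^\ell$ in \eqref{equation:ellBound} absorbing the $\pl(n)$ cost of each stored state entry / random string and the step from $|A(v)|$ to $|A(v)|^2$ (the regime $\ell=O(1)$ is degenerate and handled round by round).

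\textbf{Gather and replay in the low-memory MPC model.} Each such node $v$ assembles $A(v)$ and its attached data by $\lceil\log_2\ell\rceil$ doubling steps: a node that knows its out-$2^{i-1}$-ball in $\hat H$ with attached data collects the same from every node in that ball and unions it, obtaining its out-$2^i$-ball. Every intermediate ball lies inside $A(v)$, so each node stores $O(n^\eps)$ words and the whole computation uses $\tilde O(n^{1+\eps})$ words beyond the $\tilde O(m)$ words of input. One doubling step runs in $O(1/\eps)$ low-memory MPC rounds using the split-node/balanced-tree representation of high-degree nodes together with sorting, prefix-sum, and within-group-broadcast primitives (as in \cite{Ghaffari2018,BehnezhadetalPODC2019}). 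Finally each such $v$ locally replays rounds $t+1,\dots,t+\ell$ of $Alg$ on $G[A(v)]$ from $\sigma_t$ using its gathered data --- reproducing $\sigma_{t+\ell}(v)$ whp by the dependency analysis --- while a node that never updates its state in the phase keeps $\sigma_{t+\ell}(v)=\sigma_t(v)$. The total is $O((\log\ell)/\eps)=O(\log\ell)$ rounds.

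\textbf{Main obstacle.} The crux is the bound $|A(v)|=O(n^{\eps/2})$: one must (a) use the \emph{directed, flag-indexed} ball, so that a node of degree $\gg n^\eps$ does not drag all its neighbors in, and (b) run the coupling-plus-Chernoff argument so that whp both the true dependency of $\sigma_{t+\ell}(v)$ sits inside $A(v)$ and $|A(v)|$ meets the bound \eqref{equation:ellBound} is calibrated to. Everything downstream --- the graph exponentiation and its MPC realization --- is routine apart from the standard bookkeeping for nodes of degree above $n^\eps$.
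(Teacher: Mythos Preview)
Your proposal is correct and follows essentially the same approach as the paper: sample each node for each round of the phase with probability $\tilde{p}_\tau(u)$ using randomness that is \emph{coupled} to the algorithm's own coin flips, define the sparse $\ell$-hop ``dependency ball'' induced by sampled nodes, bound its size by $(\sum_\tau \tilde{A}_\tau \log n)^\ell$ via Chernoff on the sampled-degree, gather the balls by $O(\log\ell)$ doubling steps (each sending and receiving $O(n^{\eps/2})\cdot O(n^{\eps/2})=O(n^\eps)$ words), and then locally replay the phase. The paper phrases the ball as the subgraph of $B_G(v,\ell)$ induced by marked nodes rather than as out-reachability in a flagged digraph, but this is only a cosmetic difference; your directed formulation makes the role of sending-only nodes in the definition of $\tilde{A}_\tau$ a bit more explicit, which is fine.
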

\ifhideproofs{}\begin{proof}
Simulating rounds $t+1, t+2, \ldots, t+\ell$ of algorithm $Alg$ is equivalent to computing the state $\sigma_{t+\ell}(v)$ for every node $v \in V$.
We use the 2-step algorithm below to do this computation.
First, we introduce some notation. Let $B_G(v, \ell)$ denote the labeled subgraph of
$G$, induced by nodes that are at most $\ell$ hops from $v$ in $G$ and in which each node $u$ is labeled with its local state $\sigma_t(u)$ after round $t$.
\begin{description}
\item[Step 1:] For each node $v \in V$, designate a distinct machine $M_v$ at which we gather a ``sampled'' subgraph $S_G(v, \ell)$ of $B_G(v, \ell)$. The definition of 
$S_G(v, \ell)$ is provided below.
\item[Step 2:] Using the subgraph $S_G(v, \ell)$, machine $M_v$ locally simulates rounds $t+1, t+2, \ldots, t+\ell$ of $Alg$ and computes $\sigma_{t+\ell}(v)$. 
\end{description}

In the rest of the proof, we will first define the subgraph $S_G(v, \ell)$. We will then show in Claim \ref{claim:SGsuffices} that using this subgraph, it is possible for machine $M_v$ to locally simulate rounds $t+1, t+2, \ldots, t+\ell$ of $Alg$.
We then show in Claim \ref{claim:sizeBound} that assuming $\ell$ satisfies (\ref{equation:ellBound}), the size of $S_G(v, \ell)$ is
$O(n^\eps)$ whp.
Finally, in Claim \ref{claim:gatheringTime}, we show that the subgraph $S_G(v, \ell)$ can be gathered at each machine $M_v$ in parallel in $O(\log \ell)$ rounds.
These claims together complete the proof of the lemma.

Each node $u \in V$ generates a sequence of uniformly distributed random bits $r^1_\tau(u)$, $r^2_\tau(u)$, $\ldots$, $r^{c \cdot \log n}_\tau(u)$ for a large enough constant $c$.
These bits are designated for round 
$\tau$, $t+1 \le \tau \le t+\ell$ and they serve two purposes: (i) they are used to randomly sample $u$
based on the estimate $\tilde{p}_\tau(u)$ that $u$ will be 
a sending node in round $\tau$, and (ii) they are used to simulate $u$'s actions in round $\tau$.
It is important that the same bits be used for both purposes so that there is consistency in $u$'s random actions.
Specifically, $u$ constructs a real number $R_\tau(u)$ that is uniformly distributed over $\{i/2^{c\log n} \mid 0 \le i < c\log n\}$ using these bits.
Node $u$ adds these $O(\ell \cdot \log n)$ bits to its local state after round $t$, $\sigma_t(u)$.
Node $u$ then \textit{marks itself for round $\tau$} 
if $R_\tau(u) \le p_\tau(u)$. If a node $u$ marks itself
for  a round $\tau$ it means that in $u$'s estimate
after round $t$, $u$ will be a sending node in round $\tau$.
Further, node $u$ is \textit{marked} if it is marked for round $\tau$ for 
any $\tau$, $t+1 \le \tau \le t+\ell$.
The ``sampled'' subgraph $S_G(v, \ell)$ is the subgraph of $B_G(v, \ell)$ induced by $v$ along with all nodes $u$ in $B_G(v, \ell)$ that are marked.

\begin{claim}
\label{claim:SGsuffices}
For any node $v \in V$, information in  $S_G(v, \ell)$ is enough
to locally compute $\sigma_{t+\ell}(v)$.
\end{claim}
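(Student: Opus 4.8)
The plan is to prove the claim by a ``shrinking ball'' induction over the $\ell$ rounds of the phase, carried out on top of a single high-probability event $\mathcal{E}$ under which marking captures all message activity. Define $\mathcal{E}$ to be the event that for every node $u$ and every round $\tau$, $t+1 \le \tau \le t+\ell$, if $u$ is a sending node in round $\tau$ in the actual execution of $Alg$ then $u$ is marked for round $\tau$. The first step is to show $\Pr[\mathcal{E}] \ge 1 - 1/n^c$. This is exactly where the sharing of random bits between sampling and simulation pays off: node $u$'s decision to send in round $\tau$ is taken by comparing the uniform value $R_\tau(u)$ against the activation probability $p_\tau(u)$ (which depends on the runtime state $\sigma_{\tau-1}(u)$), while $u$ marks itself for round $\tau$ precisely when $R_\tau(u) \le \tilde p_\tau(u)$, where we round $\tilde p_\tau(u)$ up to the $\Theta(\log n)$-bit grid on which $R_\tau(u)$ lives so that discretization can only help. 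By the defining property of $\tilde p_\tau(u)$ (via the high-probability sets $S^{hp}_{\tau-1}(u,\Pi)$ for the actual round-$t$ global state $\Pi$), whp the realized state $\sigma_{\tau-1}(u)$ lies in $S^{hp}_{\tau-1}(u,\Pi)$ and hence $p_\tau(u) \le \tilde p_\tau(u)$; so $R_\tau(u) \le p_\tau(u)$ implies $R_\tau(u) \le \tilde p_\tau(u)$, i.e.\ ``$u$ sends in round $\tau$'' implies ``$u$ marked for round $\tau$''. A union bound over all (node, round) pairs of the phase (at most $n\ell \le n^2$ of them, since (\ref{equation:ellBound}) forces $\ell = O(\log n)$) gives $\Pr[\mathcal{E}] \ge 1 - 1/n^c$ for a suitable constant $c$.

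Conditioned on $\mathcal{E}$, I then prove by induction on $j = 0, 1, \ldots, \ell$ the invariant: \emph{machine $M_v$ can reconstruct $\sigma_{t+j}(w)$ for every node $w \in S_G(v,\ell)$ with $\mathrm{dist}_G(v,w) \le \ell - j$.} Throughout, $M_v$ also holds, for each $w \in S_G(v,\ell)$, the entire sequence of random bits $w$ designated for the phase, since $w$ appended these to its label $\sigma_t(w)$ and $S_G(v,\ell)$ carries all such labels. The base case $j=0$ is immediate from the labeling of $B_G(v,\ell) \supseteq S_G(v,\ell)$. For the inductive step, assume the invariant for $j-1$ and fix $w \in S_G(v,\ell)$ with $\mathrm{dist}_G(v,w) \le \ell - j$. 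To run round $t+j$ of $Alg$ at $w$, machine $M_v$ needs $\sigma_{t+j-1}(w)$ (known, since $\mathrm{dist}_G(v,w) \le \ell - j \le \ell-(j-1)$), $w$'s round-$(t+j)$ random bits, and the multiset of messages $w$ receives in round $t+j$. Every neighbor $x$ of $w$ has $\mathrm{dist}_G(v,x) \le \ell - j + 1 \le \ell$, so $x \in B_G(v,\ell)$; if $x$ sends to $w$ in round $t+j$, then under $\mathcal{E}$ node $x$ is marked, so $x \in S_G(v,\ell)$, and by the inductive hypothesis $M_v$ knows $\sigma_{t+j-1}(x)$ and $x$'s round-$(t+j)$ random bits and can thus reconstruct the exact message $x$ sends to $w$. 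Conversely, any neighbor of $w$ outside $S_G(v,\ell)$ is unmarked and so, under $\mathcal{E}$, sends nothing in round $t+j$, so it can safely be ignored. Therefore $M_v$ knows exactly the messages delivered to $w$ in round $t+j$, applies the round-$(t+j)$ transition of $Alg$ at $w$, and computes $\sigma_{t+j}(w)$; this establishes the invariant for $j$.

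Taking $j = \ell$ and $w = v$ (here $\mathrm{dist}_G(v,v) = 0 \le \ell - \ell$ and $v \in S_G(v,\ell)$ by construction) gives that $M_v$ reconstructs $\sigma_{t+\ell}(v)$, which is the claim; the whole argument is conditioned on $\mathcal{E}$, which holds whp, and this conditioning is folded into the overall whp guarantee of Lemma~\ref{lemma:sampleAndGather}. I expect the crux to be the first step, certifying $\mathcal{E}$: it is the point where three things in tension must be reconciled --- activation probabilities are revealed only at runtime, each node must commit to its sampling status from its post-round-$t$ state alone via the upper estimates $\tilde p_\tau$, and the bits used for sampling must also drive the simulated execution. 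The remaining two steps are then essentially bookkeeping: the distance count ensuring $B_G(v,\ell)$ contains every node whose state is ever consulted, and the observation that the appended random bits make each gathered node's label self-sufficient for local re-simulation.
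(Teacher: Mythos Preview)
Your proof is correct and follows essentially the same ``shrinking ball'' induction as the paper's own proof of Claim~\ref{claim:SGsuffices}: both argue that knowing the round-$(t+j-1)$ states on $S_G(v,\ell)$ restricted to distance $\le \ell-(j-1)$ suffices to advance one round inward, using that any sending neighbor must (whp) be marked and hence present in $S_G(v,\ell)$. Your version is slightly more explicit in isolating the global high-probability event $\mathcal{E}$ up front and in handling the discretization of $\tilde p_\tau$, but the structure and key idea are identical.
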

\ifhideproofs{}\begin{proof}
We prove this claim inductively. Specifically, we prove the following:
\begin{quote}
For any $i$, $0 < i \le \ell$,
in addition to knowing $S_G(v, \ell)$,
if we know the states $\sigma_{t+\ell-i}(u)$ for all $u \in S_G(v, i)$ then we can compute the
states $\sigma_{t+\ell-i+1}(u)$ for all $u \in S_G(v, i-1)$.
\end{quote}
The premise of this statement is true for $i = \ell$ because $S_G(v, \ell)$ contains the round-$t$ local states $\sigma_t(u)$ for all $u \in S_G(v, \ell)$.
For $i=1$ this claim is equivalent to saying that in addition to $S_G(v, \ell)$, if we know $\sigma_{t+\ell-1}(u)$ for all neighbors of $v$ in $S_G(v, \ell)$ then we can compute $\sigma_{t+\ell}(v)$. This is what we need to show.

To be able to compute $\sigma_{t+\ell-i+1}(u)$ for any $u$ in $S_G(v, i-1)$, we need to know 
the round-$(t+\ell-i)$ local states $\sigma_{t+\ell-i}(w)$ for all neighbors $w$ of $u$ that are sending nodes in round $t+\ell-i$. 
With high probability, the probability $p_w$ that
a neighbor $w$ of $u$ sends messages in round $t+\ell-i$ is upper bounded by the estimate $\tilde{p}_{t+\ell-i}(w)$ that $w$ computed after round $t$.
Node $w$ sends messages in round $t+\ell-i$ if $R_{t+\ell-i}(w) \le p_w$.
Since $p_w \le \tilde{p}_{t+\ell-i}(w)$, we know that $R_{t+\ell-i}(w) \le \tilde{p}_{t+\ell-i}(w)$ and therefore $w$ is marked and included in $S_G(v, \ell)$.
Also note that since $u \in S_G(v, i-1)$ and $w$ is a neighbor of $u$, we see that $w \in S_G(v, i)$.
Thus any node $w$ that sends a message to node $u$ in round $t+\ell-i$ 
belongs to $S_G(v, i)$ and by the hypothesis of the inductive claim we know $\sigma_{t+\ell-i}(w)$.
With the knowledge of $\sigma_{t+\ell-i}(w)$, we can simulate round $t+\ell-i+1$
at each node $w$, using the random real $R_{t+\ell-i+1}(w)$ to execute any random
actions $w$ may take. 
Then using the message received by $u$ from all such neighbors $w$ in round $t+\ell-i+1$, we can update $u$'s local state, thus computing $\sigma_{t+\ell-i+1}(u)$.
\end{proof}\fi{}

\begin{claim}
\label{claim:sizeBound}
For any node $v \in V$, the size of $S_G(v, \ell)$ is at most $\left(\sum_{\tau = t+1}^{t+\ell} \tilde{A}_\tau \log n\right)^{\ell}$ whp.
\end{claim}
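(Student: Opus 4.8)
The plan is to bound $|S_G(v,\ell)|$ by a branching-process (breadth-first-layering) argument whose per-vertex branching factor is the number of \emph{marked} neighbors of that vertex. Write $D:=\sum_{\tau=t+1}^{t+\ell}\tilde A_\tau$, so the target is $(D\log n)^\ell$. The starting observation is that marked neighbors are rare in expectation: a neighbor $w$ of a node $u$ is marked only if $R_\tau(w)\le\tilde p_\tau(w)$ for some $\tau\in\{t+1,\dots,t+\ell\}$, so a union bound over the $\ell$ rounds gives $\Pr[w\ \text{marked}]\le\sum_{\tau}\tilde p_\tau(w)$; summing over $w\in\nbhd(u)$ and interchanging the order of summation shows the expected number of marked neighbors of $u$ is at most $\sum_\tau\tilde A_\tau(u)\le\sum_\tau\tilde A_\tau=D$, provided $u$ is not a sending-only node --- which is exactly the set over which each $\tilde A_\tau$ is maximized. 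Since a node's marked status depends only on its own random bits $R_\tau(\cdot)$, the events ``$w$ marked'', $w\in\nbhd(u)$, are mutually independent and a Chernoff bound applies.

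The single high-probability event that carries the whole argument is: \emph{every} non-sending-only node has at most $\tfrac12 D\log n$ marked neighbors. This follows from the Chernoff tail (mean at most $D$; concretely $\Pr[Z\ge k]\le(eD/k)^k$, which at $k=\tfrac12 D\log n$ is $n^{-\Omega(D)}$ when $D=\omega(1)$, and at $k=\Theta(\log n/\log\log n)\le\tfrac12 D\log n$ is $n^{-\Omega(1)}$ when $D=O(1)$ --- using $D\ge1$, which holds in all our applications, the claimed bound being vacuous otherwise) together with a union bound over the $n$ nodes. Because this event concerns only the marking randomness, once it holds the rest of the proof is deterministic and holds simultaneously for all centers $v$.

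Conditioned on that event, I would organize the vertices of $S_G(v,\ell)$ into layers $L_0,L_1,\dots$ by breadth-first distance from $v$ in the subgraph spanned by $v$ together with the marked vertices --- the same marked-path structure that is implicitly traversed in the proof of Claim~\ref{claim:SGsuffices}, so that each vertex of $L_i$ is a marked neighbor of some vertex of $L_{i-1}$, and a layer is expanded only through vertices that are not sending-only (whose states actually change, and hence need their senders). Then $|L_0|=1$ and $|L_i|\le\tfrac12 D\log n\cdot|L_{i-1}|\le(\tfrac12 D\log n)^i$, so $|S_G(v,\ell)|\le\sum_{i=0}^{\ell}(\tfrac12 D\log n)^i\le2(\tfrac12 D\log n)^\ell\le(D\log n)^\ell$ for $\ell\ge1$; the $i=0$ term accounts exactly for the vertex $v$ itself, so no slack is lost there.

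The step I expect to be the real obstacle is precisely this passage from an expectation bound to a whp bound on $|S_G(v,\ell)|$. A direct Markov bound is useless here: the expected size is only $\sum_{i\le\ell}D^i$, and dividing by $(D\log n)^\ell$ leaves a factor of merely $(\log n)^{-\ell}$, which is $n^{-\Omega(1)}$ only if $\ell=\Omega(\log n/\log\log n)$ --- whereas in the regime we care about (dictated by~(\ref{equation:ellBound}) and the applications in Section~\ref{sec:2rs}) $\ell$ is far smaller, of order $\tilde O(\sqrt{\log_\alpha\Delta})$. The resolution --- the crux of the proof --- is to funnel all the probabilistic content into the single graph-wide event about marked-neighbor counts, reducing everything else to a deterministic layer-by-layer count. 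Secondary points needing care are: (i) reconciling the per-round notion of ``sending-only'' with the per-node notion of ``marked'', handled by not expanding the BFS through sending-only vertices and by noting such a vertex's state is unchanged; (ii) the mutual independence of markings required for the Chernoff bound; and (iii) ensuring the relevant object is the marked-path-reachable neighborhood of $v$ rather than all marked vertices inside the ball $B_G(v,\ell)$ --- for the latter interpretation the stated bound would already fail on a two-layer star when $\Delta$ is large.
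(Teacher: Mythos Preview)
Your approach is essentially the same as the paper's: bound the expected number of marked neighbors of any node by $\sum_\tau \tilde A_\tau$, apply Chernoff plus a union bound to get a whp per-node cap of $O(D\log n)$ marked neighbors, and then multiply this branching factor out over the $\ell$ BFS layers to obtain $(D\log n)^\ell$. The paper's version is terser (it applies Chernoff per round and then sums, rather than union-bounding the marking probability first), but the skeleton is identical.

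Your point~(iii) is worth flagging: the paper literally defines $S_G(v,\ell)$ as the subgraph of $B_G(v,\ell)$ induced by $v$ together with \emph{all} marked nodes in the ball, yet its own size argument (and yours) only bounds the component of $v$ in that subgraph, i.e., the marked-path-reachable vertices. You are right that for the literal definition the bound can fail (your two-layer star example), and you are also right that only the connected part is needed for Claim~\ref{claim:SGsuffices} and for the ball-doubling in Claim~\ref{claim:gatheringTime}. So your reading is the correct one, and the paper is simply imprecise here; your proof goes through as written once $S_G(v,\ell)$ is taken to mean the marked-path-reachable subgraph.
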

\ifhideproofs{}\begin{proof}
Consider an arbitrary $v \in V$ and $u \in B_G(v,\ell)$ and a round $t+1 \le \tau \le t+\ell$. Node $u$ is marked for round $\tau$ with probability $p_\tau(u)$. 
Recalling that $Nbr(u)$ denotes the set of neighbors of $u$ in $G$, we see that expected number of neighbors of
$u$ marked for round $t+1 \le \tau \le t+\ell$ is at most
$$\sum_{w \in Nbr(u)} p_\tau(w) \le \tilde{A}_\tau(u) \le \tilde{A}_\tau.$$
Furthermore, since neighbors of $u$ are marked for round $\tau$ independently, by Chernoff bounds we see that the number of neighbors that $u$ has in $S_G(v, \ell)$ that are marked for round $\tau$ is $\tilde{A}_\tau \log n$ whp.
By the union bound this means that the number of neighbors that $u$ has in $S_G(v, \ell)$ is $\sum_{\tau=t+1}^{t+\ell} \tilde{A}_\tau \log n$ whp.
From this it follows that the size of $S_G(v, \ell)$ is $\left(\sum_{\tau = t+1}^{t+\ell} \tilde{A}_\tau \log n\right)^{\ell}$.
\end{proof}\fi{}

\begin{claim}
\label{claim:gatheringTime}
For every node $v \in V$, the graph $S_G(v, \ell)$ can be gathered at $M_v$ in at most $O(\log \ell)$ rounds.
\end{claim}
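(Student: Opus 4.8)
The plan is to use the standard ``ball doubling'' (graph exponentiation) technique, implemented carefully so that neither the per-machine memory bound nor the global memory bound is ever exceeded. For each $v\in V$ we reserve a machine $M_v$, and we maintain the invariant that after $i$ doubling steps $M_v$ holds a labeled graph $\hat S_G(v,2^i)$: the ball of radius $2^i$ around $v$ in the graph induced by $\{v\}$ together with all marked nodes, where in addition each gathered node carries its distance to $v$ in that induced graph. Note that $\hat S_G(v,\ell)\subseteq S_G(v,\ell)$, so the size bound of Claim~\ref{claim:sizeBound} applies to it; moreover $\hat S_G(v,\ell)$ is exactly the part of $S_G(v,\ell)$ that the inductive argument of Claim~\ref{claim:SGsuffices} ever traverses, since influence on $\sigma_{t+\ell}(v)$ propagates only along edges between nodes that send messages, i.e.\ whp only along edges between marked nodes and from $v$'s marked neighbors into $v$. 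Hence $\hat S_G(v,\ell)$ suffices for $M_v$ to locally simulate rounds $t+1,\ldots,t+\ell$, and it is enough to gather $\hat S_G(v,2^{\lceil\log\ell\rceil})$ in $\lceil\log\ell\rceil$ doubling steps and then truncate to radius exactly $\ell$ using the recorded distances. So the whole task reduces to showing that one doubling step costs $O(1/\eps)=O(1)$ rounds.

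For the base case ($i=0$), machine $M_v$ must learn $v$'s marked neighbors together with their labels. Since $\deg(v)$ may exceed $n^\eps$, we use the $O(1/\eps)$-depth balanced tree over the copies of $v$ and aggregate up the tree exactly as in Lemma~\ref{lem:generalmaxload}: each internal node forwards the deduplicated list of marked neighbors discovered so far, which by (the radius-$1$ case of) Claim~\ref{claim:sizeBound} has size $|\hat S_G(v,1)|=O(n^{\eps/2})$ whp and therefore fits; this takes $O(1/\eps)$ rounds. For the inductive step, $M_v$ inspects its current $\hat S_G(v,2^i)$, identifies its \emph{frontier} nodes (those at recorded distance exactly $2^i$ from $v$, which are all marked when $i\ge 1$), requests $\hat S_G(u,2^i)$ from $M_u$ for each frontier node $u$, and merges all received graphs into its own, recomputing distances to $v$ and discarding any node now known to lie at distance more than $2^{i+1}$. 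Correctness follows because every geodesic (in the marked-induced graph) from $v$ to a node at distance in $(2^i,2^{i+1}]$ passes through some frontier node $u$, and $M_u$ has independently built $\hat S_G(u,2^i)$.

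The crux, and the step I expect to be the main obstacle, is realizing the ``request-and-merge'' communication within the memory constraints: one machine $M_u$ may be asked for its graph by enormously many machines, and one $M_v$ may request from up to $|\hat S_G(v,2^i)|$ machines, so a naive simultaneous delivery would overflow both the $O(n^\eps)$ per-machine memory and the $\tilde O(m+n^{1+\eps})$ total memory. This is exactly where the $n^{\eps/2}$ slack in hypothesis~(\ref{equation:ellBound}) is used. By Claim~\ref{claim:sizeBound}, for every $i\le\log\ell$ we have $|\hat S_G(v,2^i)|\le\left(\sum_{\tau=t+1}^{t+\ell}\tilde A_\tau\log n\right)^{2^i}\le\left(\sum_{\tau=t+1}^{t+\ell}\tilde A_\tau\log n\right)^{\ell}=O(n^{\eps/2})$ whp, and similarly $|\hat S_G(v,2^{i+1})|\le\left(\sum_{\tau=t+1}^{t+\ell}\tilde A_\tau\log n\right)^{2\ell}=O(n^\eps)$. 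Consequently each $M_v$ receives at most $O(n^{\eps/2})$ graphs each of size $O(n^{\eps/2})$, i.e.\ $O(n^\eps)$ words even before deduplication, and the total volume in transit over all machines is $O(n\cdot n^\eps)=O(n^{1+\eps})$ words, within the global budget. The delivery is then a standard MPC many-to-many routing: form the request records and the data records, sort both by the requested-node identifier, perform a segmented broadcast, and deduplicate within the same sort; since the total volume is $O(n^{1+\eps})$ and each machine holds $n^\eps$ words, this takes $O(\log_{n^\eps} n^{1+\eps})=O(1/\eps)=O(1)$ rounds. Combining the base case, $O(\log\ell)$ doubling steps, and the final truncation, the whole gathering finishes in $O(\log\ell)$ rounds whp.
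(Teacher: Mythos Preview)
Your proposal is correct and follows the same ball-doubling (graph exponentiation) approach as the paper's own proof. Your treatment is in fact more careful than the paper's in two places the paper glosses over: you work with the ball $\hat S_G(v,\cdot)$ in the graph induced by marked nodes (which makes the doubling identity rigorous and is indeed all that Claim~\ref{claim:SGsuffices} needs), and you justify the per-round communication via total volume and standard MPC routing rather than the paper's terse ``symmetric argument'' for the receive side; but these are refinements of the same argument, not a different route.
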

\ifhideproofs{}\begin{proof}
Here we use the ``ball doubling'' technique that appears in a number of papers on algorithms in ``all-to-all'' communication models (e.g., \cite{Ghaffari2017,Ghaffari2018,HegemanPSarxiv2014,ParterYogevDISC2018}). 
Suppose that each machine $M_v$  knows $S_G(v, i)$ for some $0 \le i \le \ell/2$.
Each machine $M_v$ then sends $S_G(v, i)$ to machine $M_u$ for every node $u$ in $S_G(v, i)$.
After this communication is completed, each machine $M_v$ can construct $S_G(v, 2i)$ from the information it has received because $S_G(v, 2i)$ is contained in the union of $S_G(u, i)$ for all $u$ in $S_G(v, i)$. 

We now argue that this communication can be performed in $O(1)$ rounds. First, note that the size of $S_G(v, i)$ is bounded above by $O(n^{\eps/2})$. This also means that $S_G(v, i)$ contains $O(n^{\eps/2})$ nodes. Therefore, $M_v$ needs to send a total of $O(n^{\eps/2}) \times O(n^{\eps/2}) = O(n^\eps)$ words. 
A symmetric argument shows an $O(n^\eps)$ bound on the number of words $M_v$ receives. Since $O(n^\eps)$ words can be sent and received in each communication round, this communication can be completed in $O(1)$ rounds.
\end{proof}\fi{}
\end{proof}\fi{}
The biggest benefit from using this ``sample-and-gather'' simulation approach is for state-congested algorithms that sample a sparse subgraph and all activity occurs on this subgraph.
We formalize this sparse sampling property as follows.
\begin{definition}
Consider a state-congested algorithm $Alg$ that completes in $R$ rounds. For a parameter $\alpha \ge 2$, we say that $Alg$ is \textit{$\alpha$-sparse} if for all 
positive integers, $t$ and $\ell$ satisfying $t+\ell \le R$, for a length-$\ell$ phase of $Alg$ starting at round $t+1$ the following two properties hold.
\begin{itemize}
    \item[(a)] \textbf{Bounded activity level:} The activity level in the first round of the phase, $A_{t+1}$, satisfies the property: $A_{t+1} = O(\alpha^\ell \cdot \log n)$.
    \item[(b)] \textbf{Bounded growth of estimated activity level:} The estimated activity level $\tilde{A}_\tau$, $t+1 \le \tau \le t+\ell$, shows \textit{bounded growth}. Specifically, $\tilde{A}_{\tau+1} \le \alpha \tilde{A}_\tau$ for
    for all $t+1 \le \tau \le t+\ell-1$.
\end{itemize}
\end{definition}

\noindent
Together these properties require the activity level in each neighborhood to be low (Property (a)), but also that the \textit{estimated} activity  level of each node does not grow too fast in future rounds (Property (b)).  
When these two properties hold, the Lemma \ref{lemma:sampleAndGather} can
be applied inductively to obtain the following theorem. 
The fact that we use a single parameter $\alpha$ as an upper bound for both Properties (a) and (b) is just a matter of convenience and leads to an easy-to-state bound on number of rounds in this theorem.
\begin{theorem} \textbf{(Sample-and-Gather Theorem v1)}
\label{theorem:sampleAndGather}
Let $Alg$ be an $\alpha$-sparse state-congested algorithm that completes in $R$ rounds. 
Then $Alg$ can be simulated in the low-memory MPC model with  $\tilde{O}(m+n^{1+\eps})$ total memory, for $0 < \eps < 1$, in
$O\left(R \log\log n/\sqrt{\eps \cdot \log_\alpha n}\right)$
rounds.
\end{theorem}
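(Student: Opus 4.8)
The plan is to apply Lemma~\ref{lemma:sampleAndGather} repeatedly, partitioning the $R$-round execution of $Alg$ into consecutive phases of a fixed length $\ell$ and simulating each phase in $O(\log\ell)$ low-memory MPC rounds. The first task is to choose $\ell$ as large as possible subject to the hypothesis~(\ref{equation:ellBound}) of the lemma, using the two $\alpha$-sparsity properties. Fix a phase that starts at round $t+1$ and has length $\ell$. Property~(b) gives geometric growth of the estimated activity levels, $\tilde A_{\tau+1}\le\alpha\tilde A_\tau$, and since $\tilde A_{t+1}=A_{t+1}$ we get $\tilde A_\tau\le\alpha^{\tau-(t+1)}A_{t+1}$ for every $\tau$ in the phase; summing the geometric series and using $\alpha\ge 2$ yields $\sum_{\tau=t+1}^{t+\ell}\tilde A_\tau\le 2\alpha^{\ell}A_{t+1}$. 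Property~(a) bounds $A_{t+1}=O(\alpha^\ell\log n)$, so $\sum_{\tau=t+1}^{t+\ell}\tilde A_\tau\log n=O(\alpha^{2\ell}\log^2 n)$ and the left-hand side of~(\ref{equation:ellBound}) is $O\big(\alpha^{2\ell^2}\log^{2\ell}n\big)$. Taking logarithms, the condition~(\ref{equation:ellBound}) is implied by $2\ell^2\log\alpha+2\ell\log\log n+O(\ell)\le\tfrac{\eps}{2}\log n$.

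The second task is to set $\ell=\Theta\big(\sqrt{\eps\log_\alpha n}\big)$ and check that it meets this inequality. The quadratic term $2\ell^2\log\alpha$ contributes $\Theta(\eps\log n)$, which we keep below $\tfrac{\eps}{4}\log n$ by fixing the hidden constant in $\ell$ small enough; the linear term $2\ell\log\log n$ is $O\big(\sqrt{\eps\log_\alpha n}\cdot\log\log n\big)=o(\log n)$ since $\log_\alpha n\le\log n$ and $\log\log n=o(\sqrt{\log n})$, so it is absorbed once $n$ exceeds a constant. (A short case split on whether $\alpha\le\log n$ or $\alpha>\log n$ makes the absorption explicit, but it is routine.) With this choice $\ell$ satisfies~(\ref{equation:ellBound}), and moreover $\log\ell=O(\log\log n)$ because $\ell=O(\sqrt{\log n})$ and $\eps<1$.

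Finally, the inductive simulation. Assume the first $t$ rounds of $Alg$ have been simulated in the MPC model and each node's state $\sigma_t(u)$ resides on a designated machine. Apply Lemma~\ref{lemma:sampleAndGather} to the phase consisting of rounds $t+1,\dots,t+\ell$ (a shorter final phase of length $\le\ell$ still satisfies~(\ref{equation:ellBound}) by exactly the same computation): since~(\ref{equation:ellBound}) holds, this phase is simulated in $O(\log\ell)=O(\log\log n)$ MPC rounds using $\tilde O(m+n^{1+\eps})$ total memory, and afterwards the states $\sigma_{t+\ell}(u)$ are known and can be redistributed to the designated machines, re-establishing the inductive hypothesis. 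Iterating over $\lceil R/\ell\rceil$ phases simulates all $R$ rounds, and by the second defining property of a state-congested algorithm the final states determine the output. The round count is $\lceil R/\ell\rceil\cdot O(\log\log n)=O\big(R\log\log n/\sqrt{\eps\log_\alpha n}\big)$, and since the phases run sequentially the total memory stays $\tilde O(m+n^{1+\eps})$ throughout. The main obstacle—and essentially the only place where real work is needed—is the first step: correctly combining properties~(a) and~(b) to bound $\big(\sum_\tau\tilde A_\tau\log n\big)^\ell$ and verifying that $\ell=\Theta(\sqrt{\eps\log_\alpha n})$ keeps this below $O(n^{\eps/2})$, including the mild case analysis controlling the $\log^{2\ell}n$ factor when $\alpha$ is small.
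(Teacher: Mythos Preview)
Your proposal is correct and follows essentially the same approach as the paper: partition the $R$ rounds into phases of length $\ell=\Theta(\sqrt{\eps\log_\alpha n})$, use Properties~(a) and~(b) of $\alpha$-sparsity to bound the left-hand side of~(\ref{equation:ellBound}) by $\alpha^{2\ell^2}\cdot(\log^2 n)^{\ell}\le n^{\eps/2}$, and then invoke Lemma~\ref{lemma:sampleAndGather} on each phase. The paper fixes $\ell=\lfloor\sqrt{(\eps/8)\log_\alpha n}\rfloor$ explicitly and dispatches the $(\log^2 n)^\ell$ factor as $n^{o(1)}$ in one line, so your case split on $\alpha\lessgtr\log n$ is unnecessary, but the argument is otherwise identical.
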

\ifhideproofs{}\begin{proof}
Let $\ell = \lfloor \sqrt{\frac{\eps}{8} \cdot \log_{\alpha} n} \rfloor$.
Partition the $R$ rounds of $Alg$ into $\lceil R/\ell \rceil$ phases, where Phase $i$, $1 \le i < \lceil R/\ell \rceil$, consists of the $\ell$ rounds
$(i-1)\cdot \ell + 1, (i-1)\cdot \ell + 2, \ldots, i \cdot \ell$ and 
Phase $\lceil R/\ell \rceil$ consists of at most $\ell$ rounds 
$\left(\lceil R/\ell \rceil-1\right)\cdot \ell +1, \left(\lceil R/\ell \rceil-1\right)\cdot \ell + 2, \ldots, R$.

We now use the fact that $Alg$ is $\alpha$-sparse to show, via  series of inequalities,
that $\ell$ satisfies Inequality (\ref{equation:ellBound}). 
\begin{eqnarray*}
\left(\sum_{\tau = t+1}^{t+\ell} \tilde{A}_\tau \cdot \log n\right)^\ell & \le &  \left(\tilde{A}_{t+1} \cdot \log n \cdot \sum_{i = 0}^{\ell-1} \alpha^i \right)^\ell \qquad \text{\small (by Property (b) of being $\alpha$-sparse)}\\
 & \le & \left(A_{t+1} \cdot \log n \cdot \sum_{i = 0}^{\ell-1} \alpha^i \right)^\ell \qquad \text{\small (by $\tilde{A}_{t+1}=A_{t+1}$)}\\
 & \le & \left(\alpha^\ell \cdot \log^2 n \cdot \sum_{i = 0}^{\ell-1} \alpha^i \right)^\ell \qquad \text{\small (by Property (a) of being $\alpha$-sparse)}\\
 & = & \left(\alpha^\ell \cdot \log^2 n  \cdot \frac{\alpha^\ell - 1}{\alpha - 1} \right)^\ell \qquad \text{\small (by geometric series)}\\
 & \le & \alpha^{2\ell^2} \cdot (\log^2 n)^{\ell} \qquad\qquad \text{\small (by $\ell \ge 1$, $\alpha \ge 2$)}\\
 & \le & n^{\eps/4} \cdot n^{o(1)} \qquad\qquad\qquad \text{\small (by $\ell = \left\lfloor \sqrt{\frac{\eps}{8} \cdot \log_{\alpha} n} \right\rfloor$)}\\
 & \le & n^{\eps/2}.
\end{eqnarray*}
By using Lemma \ref{lemma:sampleAndGather}, this implies that each phase can be simulated in the MPC models with $O(n^\eps)$ memory per machine in $O(\log \ell) = O(\log\log n)$ rounds. 
Given that the $R$ rounds of $Alg$ are partitioned into $\lceil R/\ell \rceil$ phases,
we see that $Alg$ can be implemented in the MPC model with $O(n^\eps)$ memory per machine in $O(R \log\log n/\sqrt{\eps \log_\alpha n})$ rounds.
\end{proof}\fi{}

Theorem~\ref{theorem:sampleAndGather} provides a Simulation Theorem for the MPC model in which machines use $O(n^\eps)$ memory per machine.
However, the total memory used by MPC algorithms that result from this theorem is $\tilde{O}(m + n^{1+\eps})$. We now show that under fairly general circumstances, it is possible to obtain a Simulation Theorem yielding low-memory MPC algorithms that use only $\tilde{O}(m)$ total memory, while taking slightly more time.
\begin{definition}
A \textsc{Congest} algorithm $Alg$ is said to be \textit{degree-ordered} if it satisfies two properties.
\begin{itemize}
\item[(a)] The execution of $Alg$ can be partitioned into Stages $1, 2, \ldots$ such that in Stage $i$ the only active nodes are those whose degree is greater than
$\Delta^{1/2^i}$ and other nodes that are within $O(1)$ hops of these ``high degree'' nodes. 
\item[(b)] Let $R_i$ be the number of rounds in Stage $i$. Then $R_i \le R_{i-1}/2$. 
\end{itemize}
\end{definition}

\noindent
A lot of symmetry breaking algorithms are either inherently degree-ordered or can be made so with small modifications -- this can be seen in the applications of the Sample-and-Gather Theorems in Section \ref{sec:2rs}.
The fact that it is not just the ``high degree'' nodes, but even other nodes that are within $O(1)$ hops of high degree nodes that provides extra flexibility in this definition.
For algorithms that are degree-ordered, we can grow balls whose volume is at most 
the degree threshold for the current stage. This allows us to use a simple charging scheme to charge the sizes of the balls to the memory already allocated for node-neighborhoods.
This in turn yields the $\tilde{O}(m)$ total memory bound.
Property (b) holds for algorithms whose running time is dominated by $O(\log \Delta)$.
Given that the degree threshold in Property (a) falls as $\Delta^{1/2^i}$, the running time of each stage falls by a factor of 2.
\begin{lemma}
\label{lemma:mTotalMemory}
Suppose that $Alg$ is a state-congested, degree-ordered algorithm.
Consider a phase of $\ell-1$ rounds
$t+1, t+2, \ldots, t+\ell-1$ with a Stage $i$. If $\ell$ satisfies
\begin{equation}
\label{equation:ellBoundmTotal}
\left(\sum_{\tau = t+1}^{t+\ell} \tilde{A}_\tau \log n\right)^{\ell} \le \min\left\{n^{\eps/2}, \Delta^{1/2^i}\right\},
\end{equation}
then this phase can be simulated in $O(\log \ell)$ 
rounds in the low-memory MPC model with a total of $\tilde{O}(m)$ memory over all the machines.
\end{lemma}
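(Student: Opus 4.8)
The plan is to reuse the proof of Lemma~\ref{lemma:sampleAndGather} essentially verbatim, changing only \emph{which} machines gather a sampled ball and \emph{how large} those balls are allowed to grow; the degree-ordered structure is then exploited purely through a memory-charging argument. Only Property~(a) of being degree-ordered is needed here (Property~(b) is what later lets us sum running times over stages).

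Since the phase $t+1,\dots,t+\ell-1$ lies inside a single Stage~$i$, Property~(a) says the only nodes that send a message or update a state during these rounds are the ``high-degree'' nodes $v$ with $\deg(v)>\Delta^{1/2^i}$ together with the active nodes within $O(1)$ hops of them. I would designate a machine $M_v$ \emph{only} for each active high-degree node $v$, and have $M_v$ gather the sampled subgraph $S_G(v,\ell)$ defined exactly as in Lemma~\ref{lemma:sampleAndGather} (mark each node with the probability governed by $\tilde{p}_\tau$, and induce on $v$ together with the marked nodes); the $O(1)$ of slack between the radius $\ell$ and the $\ell-1$ rounds of the phase is what lets the same ball also reach the $O(1)$-hop neighbours of $v$. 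Claims~\ref{claim:SGsuffices} and~\ref{claim:gatheringTime} then carry over unchanged: the same backward induction shows $S_G(v,\ell)$ contains enough information for $M_v$ to replay rounds $t+1,\dots,t+\ell-1$ locally and output the round-$(t+\ell-1)$ state of $v$ \emph{and} of every active node within $O(1)$ hops of $v$, and the same ball-doubling routine gathers all the $S_G(v,\ell)$ in $O(\log\ell)$ rounds, because the left-hand side of~(\ref{equation:ellBoundmTotal}) is at most $n^{\eps/2}$ and so each doubling step moves $O(n^\eps)$ words per machine. As every active node of Stage~$i$ is within $O(1)$ hops of an active high-degree node, every node whose state changes has its new state produced on some machine.

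The one genuinely new ingredient is the memory accounting, which is exactly where the stronger bound~(\ref{equation:ellBoundmTotal}) is used in place of~(\ref{equation:ellBound}). Re-running the proof of Claim~\ref{claim:sizeBound} under hypothesis~(\ref{equation:ellBoundmTotal}) gives, whp,
\[
  |S_G(v,\ell)| \;\le\; \Bigl(\sum_{\tau=t+1}^{t+\ell}\tilde{A}_\tau\log n\Bigr)^{\ell} \;\le\; \Delta^{1/2^i} \;<\; \deg(v)
\]
for every active high-degree $v$. Treating each node's relevant state as $\tilde{O}(1)$ words, as the proof of Lemma~\ref{lemma:sampleAndGather} already does, the $\tilde{O}(|S_G(v,\ell)|)=\tilde{O}(\Delta^{1/2^i})$ words stored at $M_v$ can thus be charged to the $\tilde{O}(\deg(v))$ words already reserved for $v$'s adjacency list and state $\sigma_t(v)$ in the input layout. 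Summing over all active high-degree $v$ this adds only $\tilde{O}\bigl(\sum_v \deg(v)\bigr)=\tilde{O}(m)$, and since no machine is allocated to a non-high-degree node -- their updated states are extracted for free on nearby high-degree machines and then shipped back to their owners in $O(1)$ extra rounds, each high-degree machine emitting at most $O(n^{\eps/2})$ such states -- the total memory stays $\tilde{O}(m)$.

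I expect the main obstacle to be precisely this last point: recovering the states of the $O(1)$-hop, possibly low-degree, active nodes without ever giving them their own sampled balls, since for such a node $w$ the quantity $|S_G(w,\ell)|$ can far exceed $\deg(w)$ and therefore cannot be charged to $w$'s own memory. The resolution is the ``within $O(1)$ hops'' flexibility deliberately built into the degree-ordered definition, combined with the virtual split-node tree underlying Lemma~\ref{lem:generalmaxload}: all such states are read off from the balls gathered at the high-degree nodes (whose union over the $O(1)$-neighbourhood of $v$ sits inside $S_G(v,\ell)$), so the $\Delta^{1/2^i}<\deg(v)$ charge is paid once per high-degree node rather than once per active node. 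Checking that this still fits the $O(n^\eps)$ per-machine budget when $\deg(v)\gg n^\eps$ -- by spreading $M_v$ over $O(\deg(v)/n^\eps)$ machines arranged in a balanced tree and streaming the ball through it, as in Lemma~\ref{lem:generalmaxload} -- is the part that requires the most care.
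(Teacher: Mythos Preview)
Your proposal is correct and follows essentially the same approach as the paper: gather sampled balls only at high-degree nodes, invoke Claims~\ref{claim:SGsuffices}--\ref{claim:gatheringTime} unchanged, charge the $\Delta^{1/2^i}$-sized ball at each high-degree $v$ against its $\deg(v)$ adjacency-list memory, and recover the states of the remaining active (low-degree) nodes from the radius-$\ell$ ball of a nearby high-degree node by exploiting the one extra hop of slack between $\ell$ and the $\ell-1$ rounds of the phase. The paper's own proof is terser---it explicitly allows multiple nodes per machine and handles the low-degree case only for $1$-hop neighbours rather than the full $O(1)$ hops in the definition---so your treatment of the $O(1)$-hop recovery and the per-machine budget when $\deg(v)\gg n^{\eps}$ is, if anything, more careful than what appears in the paper.
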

\ifhideproofs{}\begin{proof}
The proof of this lemma is similar to the proof of Lemma \ref{lemma:sampleAndGather}. Here we point out the differences. First, in order to use the total memory more judiciously, we allow a machine to host multiple nodes, i.e., for distinct nodes $v$ and $v'$, the machines $M_v$ and $M_{v'}$ hosting these nodes may be identical.
Second, we only gather balls for ``high degree'' nodes, i.e., a machine $M_v$ gathers $S_G(v, \ell)$ iff $\deg(v) > \Delta^{1/2^i}$.  
With these modifications, 
Claims \ref{claim:SGsuffices}, \ref{claim:sizeBound}, and \ref{claim:gatheringTime} from the proof of Lemma \ref{lemma:sampleAndGather} hold, as before.
But, we need to additionally prove that a total of $\tilde{O}(m)$ memory is used by all the machines.

If a node $v$ is active in Stage $i$, then either (i) $\deg(v) > \Delta^{1/2^i}$ or (ii) $v$ has an active neighbor $u$ with $\deg(u) > \Delta^{1/2^i}$.
We deal with these two cases separately.
\begin{description}
    \item[Case (i):] By Claim \ref{claim:sizeBound}, the size of $S_G(v, \ell)$ is at most $\left(\sum_{\tau = t+1}^{t+\ell} \tilde{A}_\tau \log n\right)^{\ell}$. Therefore, by Inequality (\ref{equation:ellBoundmTotal}) the size of $S_G(v, \ell)$ is at most $\min\left\{n^{\eps/2}, \Delta^{1/2^i}\right\}$.
    Machine $M_v$ has at least $\min\left\{n^\eps, \deg(v)\right\}$ words of memory allocated to store the neighborhood of $v$.
    Since $\deg(v) > \Delta^{1/2^i}$ (because we are in Case (i)) the size of $S_G(v, \ell)$ can be charged to the memory allocated at machine $M_v$ to store neighbors of $v$. Thus the total size of all the gathered balls is at most 
    $(\sum_{v \in V} \deg(v)) \cdot \pl(n) = \tilde{O}(m)$.
    
    \item[Case (ii)] In this case, the ball $S_G(u, \ell)$ gathered by machine $M_u$ contains the ball $S_G(v, \ell-1)$. As a result, $M_u$ can figure out node $v$'s local state after $\ell-1$ rounds, $\sigma_{t+\ell-1}(v)$. 
\end{description}
\end{proof}\fi{}
Finally, if $Alg$ is a state-congested algorithm that is $\alpha$-sparse and degree-ordered, we obtain the following Simulation Theorem that guarantees an $\tilde{O}(m)$ total memory usage.
\begin{theorem}
\label{theorem:mTotalMemory}
\textbf{(Sample-and-Gather Theorem v2)}
Let $Alg$ be a state-congested, $\alpha$-sparse, degree-ordered algorithm that completes in $R$ rounds. 
Let $\alpha' = \alpha \cdot \log^2 n$.
Then $Alg$ can be simulated in the MPC model with $O(n^\eps)$ memory per machine, for $0 < \eps < 1$ and $\tilde{O}(m)$ total memory,  in
$O\left(R \log\log \Delta/\sqrt{\log_{\alpha'} \Delta}\right)$
rounds.
\end{theorem}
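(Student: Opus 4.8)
The plan is to combine the single-phase compression guarantee of Lemma~\ref{lemma:mTotalMemory} with the geometric-decay structure of degree-ordered algorithms, exactly as Theorem~\ref{theorem:sampleAndGather} combined Lemma~\ref{lemma:sampleAndGather} with $\alpha$-sparsity. First I would use the degree-ordered property to partition the $R$ rounds of $Alg$ into Stages $1, 2, \ldots$, where Stage~$i$ consists of $R_i$ rounds with $R_i \le R_{i-1}/2$ (hence $R_i \le R/2^{i-1}$), and in which every active node has degree exceeding $\Delta^{1/2^i}$ or lies within $O(1)$ hops of such a node. Within each Stage~$i$ I would further chop the $R_i$ rounds into $\lceil R_i/\ell_i\rceil$ consecutive phases of a common length $\ell_i$, chosen below, and simulate each phase via Lemma~\ref{lemma:mTotalMemory}; keeping every phase inside a single stage is what makes the ``Stage~$i$'' hypothesis of that lemma applicable.

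The length $\ell_i$ is dictated by Inequality~(\ref{equation:ellBoundmTotal}). Reusing verbatim the chain of inequalities from the proof of Theorem~\ref{theorem:sampleAndGather} (which only uses $\alpha$-sparsity) gives $\left(\sum_{\tau=t+1}^{t+\ell}\tilde{A}_\tau\log n\right)^{\ell}\le \alpha^{2\ell^2}(\log^2 n)^{\ell}\le (\alpha\log^2 n)^{2\ell^2}=(\alpha')^{2\ell^2}$ for every $\ell\ge 1$. Hence it suffices to take $\ell_i:=\left\lfloor \frac{\sqrt{\eps}}{2}\sqrt{2^{-i}\log_{\alpha'}\Delta}\right\rfloor$: then $2\ell_i^2\log\alpha'\le \frac{\eps}{2}\cdot 2^{-i}\log\Delta$, and since $\Delta\le n$ and $\eps<1$ this is at most $\min\{\frac{\eps}{2}\log n,\ \frac{1}{2^i}\log\Delta\}$, so that $(\alpha')^{2\ell_i^2}\le \min\{n^{\eps/2},\Delta^{1/2^i}\}$, i.e.\ Inequality~(\ref{equation:ellBoundmTotal}) holds. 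By Lemma~\ref{lemma:mTotalMemory}, each phase of Stage~$i$ is then simulated in $O(\log\ell_i)=O(\log\log\Delta)$ low-memory MPC rounds with $\tilde{O}(m)$ total memory (the memory being reused between phases, not accumulated). For the ``tail'' stages $i$ for which the formula would force $\ell_i<1$, i.e.\ $i$ beyond $\Theta(\log(\eps\log_{\alpha'}\Delta))$, I would instead simulate round by round using Lemma~\ref{lem:generalmaxload}, at a cost of $O(1/\eps)$ MPC rounds per \congest\ round with $\tilde{O}(m)$ total memory; since $\sum_{i}R_i$ over those stages is $O(R/(\eps\log_{\alpha'}\Delta))$, their total contribution is absorbed.

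It then remains to sum over stages. Stage~$i$ contributes $\lceil R_i/\ell_i\rceil\cdot O(\log\log\Delta)$ MPC rounds; substituting $R_i\le R/2^{i-1}$ and $\ell_i=\Theta(\sqrt{\eps}\,\sqrt{2^{-i}\log_{\alpha'}\Delta})$ gives $R_i/\ell_i=O\big(R\,2^{-i/2}/(\sqrt{\eps}\,\sqrt{\log_{\alpha'}\Delta})\big)$. Summing the geometric series $\sum_i 2^{-i/2}=O(1)$ yields $\sum_i\lceil R_i/\ell_i\rceil=O\big(R/(\sqrt{\eps}\,\sqrt{\log_{\alpha'}\Delta})\big)$ up to an additive $O(\log\log\Delta)$ from the ceilings and the tail stages; multiplying by the $O(\log\log\Delta)$ cost of one phase and treating $\eps$ as a constant gives the claimed $O\big(R\log\log\Delta/\sqrt{\log_{\alpha'}\Delta}\big)$ (in the regime of interest $R=\tilde{\Omega}(\sqrt{\log\Delta})$, so the additive lower-order term is swallowed).

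I expect the main obstacle to be precisely this bookkeeping: picking one $\ell_i$ that simultaneously meets both sides of the $\min\{n^{\eps/2},\Delta^{1/2^i}\}$ constraint, and then showing that the two independent geometric decays --- $R_i$ halving by the degree-ordered property, and $\ell_i$ shrinking like $2^{-i/2}$ as the degree threshold falls --- telescope to a bound only a constant factor worse than if all $R$ rounds lived in Stage~$1$. A secondary technical point is checking that the $\tilde{O}(m)$ total-memory charging of Lemma~\ref{lemma:mTotalMemory} (which charges gathered balls against the per-node neighborhood memory of high-degree nodes) stays valid when invoked repeatedly across phases and stages, and correctly handling the edge cases where $\log_{\alpha'}\Delta$ is small.
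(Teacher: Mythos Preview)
Your proposal is correct and follows essentially the same strategy as the paper: partition into stages via the degree-ordered property, chop each stage into phases of length $\ell_i$ chosen so that Inequality~(\ref{equation:ellBoundmTotal}) holds, invoke Lemma~\ref{lemma:mTotalMemory} per phase, and telescope the geometric decay $R_i\le R/2^{i-1}$ against the growth $1/\ell_i=\Theta(2^{i/2}/\sqrt{\log_{\alpha'}\Delta})$. The one noteworthy difference is that the paper splits the analysis into two cases, $\Delta^{1/2^i}\ge n^{\eps/2}$ versus $\Delta^{1/2^i}<n^{\eps/2}$, using $\ell_i=\lfloor\sqrt{(\eps/8)\log_\alpha n}\rfloor$ in the former (exactly as in Theorem~\ref{theorem:sampleAndGather}) and $\ell_i=\lfloor\sqrt{\log_{\alpha'}\Delta^{1/2^{i+1}}/2}\rfloor$ in the latter; you instead pick the single formula $\ell_i=\lfloor(\sqrt{\eps}/2)\sqrt{2^{-i}\log_{\alpha'}\Delta}\rfloor$ and observe that the exponent $(\eps/2)\cdot 2^{-i}\log\Delta$ simultaneously undercuts both $(\eps/2)\log n$ and $2^{-i}\log\Delta$. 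This buys you a cleaner argument at the cost of a constant-factor slack in the early stages (where the paper's case~(i) choice of $\ell$ is a bit larger), and you also explicitly handle the tail stages with $\ell_i<1$ via round-by-round simulation, which the paper glosses over. Both routes yield the same final bound.
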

\ifhideproofs{}\begin{proof}
Consider a Stage $i$ for some positive integer $i$ and suppose that $Alg$ runs for $R_i$ rounds in this stage.
Let $\ell_i$ be a positive integer to be determined later.
Partition the $R_i$ rounds in Stage $i$ into $\lceil R_i/\ell_i \rceil$ 
phases, where Phases $j = 1, 2, \ldots, \lceil R_i/\ell_i \rceil-1$ consist of exactly $\ell_i$ rounds, whereas Phase $\lceil R_i/\ell_i \rceil$ consist of at most $\ell_i$ rounds.
We will now show that each Phase $j$ in Stage $i$ can be compressed into $O(\log \ell_i)$ low-memory MPC rounds.

We consider two cases depending on how $n^{\eps/2}$ compares with $\Delta^{1/2^i}$: (i) $\Delta^{1/2^i} \ge n^{\eps/2}$ and (ii) $\Delta^{1/2^i} < n^{\eps/2}$.
\begin{description}
\item[Case (i):] In this case, we set $\ell_i := \lfloor \sqrt{\frac{\eps}{8} \cdot \log_{\alpha} n} \rfloor$ as in the proof of Theorem \ref{theorem:sampleAndGather}.
Continuing as in the proof of this theorem, we conclude that the $\ell_i$ rounds in Phase $j$ in Stage $i$ can be compressed into $O(\log \ell_i)$ low-memory MPC rounds using $\tilde{O}(m)$ total memory.

\item[Case (ii):] Set $\ell_i := \lfloor \sqrt{\log_{\alpha'} \Delta^{1/2^{i+1}}/2} \rfloor$.
Suppose that $t+1$ is the index of the first round in Phase $j$ in Stage $i$.
Then, by calculations very similar to those in the proof of Theorem \ref{theorem:sampleAndGather}, we can show that
$$\left(\sum_{\tau = t+1}^{t+\ell_i} \tilde{A}_\tau \cdot \log n\right)^{\ell_i} \le \alpha^{2\ell_i^2} \cdot (\log^2 n)^{\ell_i} \le \Delta^{1/2^{i+1}} \cdot \left(\Delta^{1/2^i}\right)^{o(1)} \le \Delta^{1/2^i}.$$
Therefore, by Lemma \ref{lemma:mTotalMemory}, each Phase $j$ in Stage $i$ can be simulated in $O(\log \ell_i)$ rounds in the low-memory MPC model using $\tilde{O}(m)$ total memory. 
\end{description}

If it is the case that $\Delta^{1/2} \ge n^{\eps/2}$, then at least one of the initial stages of the algorithm will fall into Case (i). Theorem \ref{theorem:sampleAndGather} applies to each of these Case (i) stages and we see that all of these initial stages can be simulated in the low-memory MPC models with $\tilde{O}(m)$ total memory in
$$O\left(\frac{R \log\log n}{\sqrt{\eps \cdot \log_\alpha n}}\right) = O\left(\frac{R \log\log \Delta}{\sqrt{\eps \cdot \log_\alpha' \Delta}}\right)$$ rounds.
The upper bound above follows from the fact that $\Delta^{1/2} \ge n^{\eps/2}$, $\Delta \le n$, and $\alpha \le \alpha'$.

We now argue about the later stages, that fall into Case (ii) as follows.
Each Stage $i$ that is covered by Case (ii) can be simulated in 
$$O\left(\frac{R_i \log \ell_i}{\ell_i}\right) = O\left(\sqrt{2^{i+1}} \cdot \frac{R_i \log\log \Delta}{\sqrt{\log_{\alpha'} \Delta}}\right).$$
Therefore, the total running time of the Case (ii) stages of the simulated algorithm is
$$O\left(\sum_{i \ge 1} \sqrt{2^{i+1}} \cdot \frac{R_i \log\log \Delta}{\sqrt{\log_{\alpha'} \Delta}}\right) = O\left(\frac{\log\log \Delta}{\sqrt{\log_{\alpha'} \Delta}} \cdot \sum_{i \ge 1} \sqrt{2^{i+1}} \cdot R_i\right).$$
Finally, using Property (b) of a degree-ordered algorithm, i.e., the fact that $R_i \le R_1/2^{i-1}$, for all $i \ge 1$, we get that 
$$O\left(\frac{\log\log \Delta}{\sqrt{\log_{\alpha'} \Delta}} \cdot \sum_{i \ge 1} \sqrt{2^{i+1}} \cdot \frac{R_1}{2^{i-1}}\right) = 
O\left(\frac{\log\log \Delta}{\sqrt{\log_{\alpha'} \Delta}} \cdot \sum_{i \ge 1} \cdot \frac{R_1}{2^{(i-3)/2}}\right) = O\left(\frac{R \cdot \log\log \Delta}{\sqrt{\log_{\alpha'} \Delta}} \right).$$
Hence the total number of rounds is bounded as claimed in the theorem.
\end{proof}\fi{} 
\section{Fast 2-Ruling Set Algorithms}
Our 2-ruling set algorithms consist of 3 parts. In Part 1, we sparsify the input graph, in Part 2 we ``shatter'' the graph still active after Part 1, and in Part 3 we deterministically finish off the computation. 
Part 1 is a modification of \textsc{Sparsify}, a \congest\ model algorithm due to Kothapalli and Pemmaraju \cite{KothapalliP12}; Part 2 is a sparsified MIS algorithm, also in the \congest\ model, due to Ghaffari \cite{GhaffariSODA2016,Ghaffari2017}.
Our main contribution in this section is to show that these algorithms are state-congested, $\alpha$-sparse for small $\alpha$, and degree-ordered. 
As a result, we can apply the Sample-and-Gather Simulation Theorems (Theorems \ref{theorem:sampleAndGather} and \ref{theorem:mTotalMemory}) to these algorithms to obtain fast low-memory MPC algorithms.
Part 3 -- in which we finish off the computation -- is easy to directly implement in the MPC model.

\subsection{Simulating \textsc{Sparsify} in low-memory MPC}\label{sec:2rs}

Algorithm~\ref{alg:Sparsify} is a modified version of the \textsc{Sparsify} algorithm of Kothapalli and Pemmaraju~\cite{KothapalliP12}. The algorithm computes a ``sparse'' set of vertices \(U\) that dominates all the vertices in the graph (i.e.~\(\nbhd^{+}(U) = V\), see Lemma~\ref{lemma:sparsify}). In each iteration, ``high degree'' nodes and their neighbors are sampled and the sampled nodes are added to $U$. In successive iterations, the threshold for being a high degree node falls by a factor $f$ and the sampling probability grows by a factor $f$. The neighbors of the nodes that successfully join \(U\) are deactivated.

\RestyleAlgo{boxruled}
\begin{algorithm2e}\caption{\textsc{DegOrderedSparsify}\((G, f)\)\label{alg:Sparsify}}
\(U \leftarrow \emptyset\) \\
\(V_{0} \leftarrow V\) \tcp{Initially all nodes are active}
\For{\(i = 1\) to \(\lceil\log_{f}{\Delta}\rceil\)} {
    Let \(H_{i}\) be the nodes in \(V_{i-1}\) with degree at least \(\Delta/f^{i}\) in \(G[V_{i-1}]\)\\
    Each node in \(\nbhd^{+}(H_{i}) \cap V_{i-1}\) joins \(U_{i}\) with probability \(f^{i}\cdot c\ln n/\Delta\), where \(c\) is a fixed constant \\
    \(V_{i} \leftarrow V_{i-1} \setminus \nbhd^{+}(U_{i})\) \tcp{Nodes with at least one neighbor in \(U_{i}\) deactivate themselves}
    \(U \leftarrow U \cup U_{i}\) \\
}
\Return\ \(U\)
\end{algorithm2e}
\dos\ fits nicely within the framework of the Sample-and-Gather Simulation Theorems from Section \ref{section:SampleAndGather}.
The state of each vertex stays small throughout the algorithm (just $\texttt{ID}$ plus $O(1)$ bits), making \dos\ state-congested. 
The activity level in any iteration is bounded by $O(f \log n)$, because we show in Lemma \ref{lemma:sparsify} that in any neighborhood only $O(f \log n)$ vertices are sampled whp and only these sampled vertices need be active in that iteration. Furthermore, since the sampling probability grows by a factor $f$ in each iteration, the estimated neighborhood activity levels also grow by a factor $f$, as we consider future iterations of \dos.
As shown in Lemma \ref{lemma:sparsify}, this makes \dos\ $f$-sparse.
In the \textsc{Sparsify} algorithm \cite{KothapalliP12} \textit{all} nodes, independent of their degrees, sample themselves (as in Line 5). Here, in order to make \textsc{DegOrderedSparsify} degree-ordered, we make a small modification and permit only high degree nodes and their neighbors to sample themselves. As we show in Lemma \ref{lemma:sparsify}, the algorithm continues to behave as before, but is now degree-ordered.

\begin{lemma}
\label{lemma:sparsify}
Given a graph $G=(V,E)$ and a parameter $f > 3$, a subset $U\subseteq V$ can be computed in
$O(\log_f \Delta)$ rounds such that for every $v\in V$, $N^{+}(v) \cap U \neq \emptyset$, and for every $v\in U$, $\deg_{U}(v) \le 2cf\ln n$, with probability at least $1 - n^{-c+2}$.
\end{lemma}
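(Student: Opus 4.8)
The statement bundles three claims: the $O(\log_f\Delta)$ round bound, the domination property $N^{+}(v)\cap U\neq\emptyset$ for every $v\in V$, and the bounded-degree property $\deg_U(v)\le 2cf\ln n$ for every $v\in U$. The round bound is immediate, since Algorithm~\ref{alg:Sparsify} runs $\lceil\log_f\Delta\rceil$ iterations and each iteration (computing degrees in $G[V_{i-1}]$, extracting $H_i$, independent sampling, and deactivating $N^{+}(U_i)$) takes $O(1)$ \congest\ rounds. The plan for the other two claims is to condition on a single high-probability structural event and then run two short Chernoff-plus-union-bound arguments.

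The event, call it $\mathcal E$, is: for every iteration $i$ and every $v\in H_i$, we have $N^{+}(v)\cap U_i\neq\emptyset$. To bound $\Pr[\overline{\mathcal E}]$, fix $i$ and $v\in H_i$; by definition $v$ has at least $\Delta/f^i$ neighbors in $V_{i-1}$, all of which (being neighbors of a node in $H_i$) lie in $N^{+}(H_i)\cap V_{i-1}$ and hence each joins $U_i$ independently with probability $p_i=f^i\,c\ln n/\Delta$. So the chance that none of them joins $U_i$ is at most $(1-p_i)^{\Delta/f^i}\le e^{-c\ln n}=n^{-c}$, and a union bound over the $\le n$ nodes and $\le\lceil\log_f\Delta\rceil\le n$ iterations gives $\Pr[\overline{\mathcal E}]\le n^{-c+2}$. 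I would first record the easy corollary of $\mathcal E$, proved by induction on $i$: every node of degree $\ge\Delta/f^i$ in $G[V_i]\subseteq G[V_{i-1}]$ would lie in $H_i$ and hence be deactivated, so conditioned on $\mathcal E$ the maximum degree of $G[V_i]$ is strictly below $\Delta/f^i$.

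Granting $\mathcal E$, the bounded-degree property is clean. If $v$ joins $U_i$, then no neighbor of $v$ can have joined $U$ in an earlier iteration — otherwise $v$ would have been deactivated and could not be in $V_{i-1}$ — so every neighbor of $v$ that lies in $U$ in fact joins in iteration $i$, and all such neighbors are among the $\deg_{G[V_{i-1}]}(v)\le\Delta/f^{i-1}=f\cdot\Delta/f^i$ active neighbors of $v$. Each of these joins $U_i$ with probability at most $p_i$, so $\deg_U(v)$ is stochastically dominated by a binomial variable with mean at most $fc\ln n$; a Chernoff bound (this is where $f>3$ buys the constant in the exponent) gives $\Pr[\deg_U(v)>2cf\ln n]\le n^{-c}$, and a union bound over $v\in V$ finishes this part.

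For domination, any node that is ever deactivated belongs to $N^{+}(U_i)$ for the iteration $i$ in which it deactivates, so $N^{+}(v)\cap U\neq\emptyset$ trivially. The work is in a node $v$ that stays active through all $\lceil\log_f\Delta\rceil$ iterations: by $\mathcal E$, $v$ can never lie in any $H_i$ (it would be deactivated), so $\deg_{G[V_{i-1}]}(v)<\Delta/f^i$ for all $i$, and for $i=\lceil\log_f\Delta\rceil$ this forces $\deg_{G[V_{i-1}]}(v)<1$, i.e. $v$ has no active neighbor in the last iteration. One then argues that such a $v$ is also dominated by examining the last iteration $j$ in which $v$ still has an active neighbor: there $\Delta/f^j>1$, so $p_j$ is comparable to $c\ln n$, and $v$'s (few) active neighbors together with $v$ itself provide enough sampling trials that, whp, one of them joins $U_j$ — the sole exception being a vertex isolated in $G$ to begin with, which we dispose of in a trivial preprocessing step. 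Making this last step fully rigorous — ruling out that $v$ gets ``orphaned'' because its neighbors are all deactivated by nodes joining $U$ that are outside $N^{+}(v)$ — is the one delicate point; everything else is routine bookkeeping with Chernoff bounds and the union bound.
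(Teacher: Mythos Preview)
Your treatment of the round count and of the bounded-degree property follows the paper's argument: both set up the invariant that $G[V_{i-1}]$ has maximum degree below $\Delta/f^{i-1}$ (your event $\mathcal E$ is precisely what the paper establishes inductively), and both bound $\deg_U(v)$ for $v\in U_i$ by observing that $v$ has at most $\Delta/f^{i-1}$ active neighbors, each sampling with probability $f^i c\ln n/\Delta$, so the mean is at most $cf\ln n$ and Chernoff applies. On domination you are actually more careful than the paper, whose proof only shows that nodes of degree $>\Delta/f^i$ in $G[V_{i-1}]$ get covered and never revisits the claim $N^+(v)\cap U\neq\emptyset$ for general $v$.

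But your handling of the orphaning case is not merely ``delicate'' --- it is wrong. The inequality $\Delta/f^j>1$ yields only the useless \emph{upper} bound $p_j<c\ln n$, and ``$v$'s few active neighbors together with $v$ itself provide enough trials'' fails outright. Take $v$ of degree~$1$ whose unique neighbor $u$ has degree~$\Delta$. In iteration~$1$ both $v$ and $u$ lie in $N^+(H_1)$ and each samples with probability $p_1=fc\ln n/\Delta=o(1)$; with probability $(1-p_1)^2=1-o(1)$ neither joins $U_1$, while whp some \emph{other} neighbor of $u$ does, so $u$ is deactivated and $v$ is isolated in $G[V_1]$. From then on $v\notin N^+(H_i)$ for any $i\ge2$ and is never sampled again, so $v$ ends up undominated with probability $1-o(1)$ --- this is generic for low-degree vertices hanging off high-degree ones, not a corner case confined to vertices isolated in $G$. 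The repair is algorithmic rather than probabilistic: append a cleanup step that puts every node still in $V_{\lceil\log_f\Delta\rceil}$ into $U$. Conditioned on $\mathcal E$, any such $v$ was never in any $H_i$, hence is isolated in the final active graph and has no neighbor already in $U$ (else it would have been deactivated); adding it therefore gives $\deg_U(v)=0$, leaves every other $U$-degree unchanged, and restores domination at no cost.
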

\ifhideproofs{}\begin{proof}
Consider an execution of $\textsc{DegOrderedSparsify}(G,f)$. Assume, inductively, that just before the \(i^{t h}\) iteration the maximum degrees in the graphs induced by $V_{i-1}$  and $U_{i-1}$ are at most $\Delta/f^{i-1}$ and $f\cdot 2c\ln n$ respectively. These bounds hold trivially when $i=1$.

Each $v \in N^{+}(H_{i})$ is included in $U_{i}$ independently with probability $c\ln n f^{i}/\Delta$, so the probability that a $v \in V_{i-1}$ with $\deg_{V_{i-1}}(v) > \Delta/f^i$ is not in $N^{+}(U_{i})$ is less than ${(1-c\ln n f^i/\Delta)}^{\Delta/f^i} < n^{-c}$.

Furthermore, if $v\in U_{i}$,
\[
\mathbf{E}[\deg_{U_i}(v)] = \deg_{V_{i-1}}(v) \cdot f^i \cdot c\ln n/\Delta \le (\Delta/f^{i-1}) \cdot f^i \cdot c\ln n/\Delta\le cf\ln n.
\]

Here, the second inequality follows from the inductive hypothesis. Using the standard Chernoff bound, the probability that $\deg_{U_i}(v) \ge 2cf\ln n$ is at most $e^{(-fc\ln n/3)} < n^{-c}$ (because \(f \ge 3\)).
Note that since $v$ and its neighborhood are permanently removed from consideration,
it never acquires new neighbors in $U$, so $\deg_{U_{i}}(v) = \deg_{U}(v)$.
Thus, by the union bound, the induction hypothesis does not hold for the next iteration with probability at most $n^{-c + 1}$. And since there are at most \(\log_{f} \Delta \le n\) iterations, the probability that there exists a node \(v \in U\) with \(\deg_{U}(v) \ge 2cf\ln n\) is at most $n^{-c + 2}$ by the union bound.
\end{proof}\fi

It is easy to see that the algorithm \(\textsc{DegOrderedSparsify}(G,f)\) can be implemented in the \textsc{Congest} model in $O(\log_f \Delta)$ rounds because each iteration of the \textbf{for}-loop takes $O(1)$ rounds in \congest. Furthermore, since each node can update its state by simply knowing if it or a neighbor has joined set $U_i$, the update function at each node is separable (see Definition \ref{def:separable}). Therefore, \(\textsc{DegOrderedSparsify}(G,f)\) can be faithfully simulated in the low-memory MPC model in \(O(\eps^{-1} \log_f \Delta)\) rounds.

\ifhideproofs{}
\begin{theorem}\label{theorem:sparsifySimpleSimulation}
The algorithm \(\textsc{DegOrderedSparsify}(G,f)\) can be simulated in \(O(\eps^{-1} \log_f \Delta)\) rounds whp in the low-memory MPC model with \(\tilde{O}(m)\) total memory.
\end{theorem}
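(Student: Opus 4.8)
The plan is to simulate $\textsc{DegOrderedSparsify}(G,f)$ one iteration of its \textbf{for}-loop at a time, showing that a single iteration can be carried out in $O(1/\eps)$ low-memory MPC rounds using $\tilde{O}(m)$ total memory, and then observing that there are only $O(\log_f \Delta)$ iterations. Throughout, each node $v$ carries $O(1)$ bits of state recording whether it is still active (i.e., whether $v \in V_{i-1}$ at the start of iteration $i$) and whether it has joined $U$; together with $v$'s \texttt{ID} this is $O(\log n)$ bits per node, so the information is state-congested and can be laid out across machines in the standard way (splitting a high-degree node into an $O(1/\eps)$-depth balanced tree, as in the discussion preceding Lemma~\ref{lem:generalmaxload}).

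The key observation is that every step inside one iteration is a separable-function computation over closed neighborhoods, and hence is handled by one invocation of Lemma~\ref{lem:generalmaxload} in $O(1/\eps)$ rounds with $\tilde{O}(m)$ total memory. Concretely: (i) computing $\deg_{G[V_{i-1}]}(v)$ is summing the active-indicator $[u \in V_{i-1}]$ over $u \in \nbhd(v)$, a sum, after which each active node locally decides whether it lies in $H_i$; (ii) deciding membership in $\nbhd^+(H_i) \cap V_{i-1}$ amounts to taking $\max_{u\in\nbhd(v)}[u \in H_i]$ (an OR) and combining it with $v$'s own status; (iii) each node in $\nbhd^+(H_i)\cap V_{i-1}$ then flips a private biased coin to decide whether it joins $U_i$, which needs no communication; (iv) updating $V_i$ is again a $\max$ computation, since $v$ leaves the active set exactly when $v \in U_i$ or some neighbor lies in $U_i$. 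All values involved fit in $O(\log n)$ bits (degrees are at most $n$), and an iteration uses only a constant number of invocations of Lemma~\ref{lem:generalmaxload}, so it costs $O(1/\eps)$ MPC rounds.

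Summing over the $O(\log_f \Delta)$ iterations gives the claimed $O(\eps^{-1}\log_f \Delta)$ round bound. For correctness whp, each application of Lemma~\ref{lem:generalmaxload} succeeds whp, and there are $O(\log_f \Delta) \le n$ iterations with $O(1)$ applications each, so a union bound over all of them still leaves a high-probability guarantee; combined with Lemma~\ref{lemma:sparsify}, which controls the output of $\textsc{DegOrderedSparsify}$ itself whp, this yields the theorem. For the memory bound, at all times the only data stored are the $O(\log n)$-bit node states, distributed as above, plus the temporary space used inside Lemma~\ref{lem:generalmaxload}, all of which is $\tilde{O}(m)$.

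I do not expect a genuine obstacle here: the theorem is essentially bookkeeping that makes precise the sentence immediately preceding its statement. The one point that needs care is confirming that \emph{each} sub-step of an iteration really is a single separable-function evaluation over one-hop neighborhoods, rather than something requiring a chain of dependent neighborhood computations. This holds because the degree computation, the propagation of $H_i$-membership to $\nbhd^+(H_i)$, and the deactivation step are each only one hop deep; once that is checked, the round and memory bounds drop out of Lemma~\ref{lem:generalmaxload} immediately.
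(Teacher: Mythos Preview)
Your proposal is correct and matches the paper's approach: simulate each of the $O(\log_f \Delta)$ iterations in $O(1/\eps)$ MPC rounds by reducing the degree computation, the $H_i$-neighborhood test, and the deactivation step to separable-function evaluations via Lemma~\ref{lem:generalmaxload}. Your write-up is in fact more explicit than the paper's about which separable functions are used and about the whp/union-bound bookkeeping, but the underlying argument is the same.
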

\begin{proof}
  There are \(O(\log_{f} \Delta)\) iterations of the for loop. In the \(i^{th}\) iteration, the nodes need to know their degree in the acitve subgraph, and whether it has a neighbor in \(U_{i}\) or not. These operations can be encoded as separable functions, and therefore can be performed in \(O(\eps^{-1})\) rounds using Lemma~\ref{lem:generalmaxload}. The rest of the steps for a node can be done locally at the host machine and only requires this information. This means that the host machine of each node knows whether or not it has joined \(U_{i}\) or \(V_{i}\) or neither. Therefore, each iteration requires \(O(\eps^{-1})\) rounds which proves the lemma.
\end{proof}\fi{}
We now show that \textsc{DegOrderedSparsify} has the three properties needed for round compression via our Simulation Theorems and this leads to a substantial speedup. 

\begin{lemma}\label{theorem:sparsifyproperties}
The algorithm \(\textsc{DegOrderedSparsify}(G,f)\) is a state-congested, $f$-sparse, degree-ordered algorithm.
\end{lemma}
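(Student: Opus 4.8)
The plan is to verify, one at a time, the three defining properties — state-congested, $f$-sparse, and degree-ordered — reading the relevant quantities directly off the pseudocode of \dos\ and off the whp degree invariants already established in Lemma~\ref{lemma:sparsify}. Throughout, note that each iteration of the \textbf{for}-loop is a single \congest\ round, so a ``phase'' of $\ell$ rounds beginning at round $t+1$ corresponds to iterations $i_0,i_0+1,\dots,i_0+\ell-1$ for some $i_0$. \textbf{State-congested} is essentially immediate: the only information a node $v$ must carry between iterations is $\texttt{ID}(v)$, one bit for ``$v\in V_{i-1}$'', and one bit for ``$v\in U$'' — $O(\log n)$ bits, well within the $O(\deg(v)\,\pl(n))$ budget — while running iteration $i$ (computing $\deg_{G[V_{i-1}]}(v)$, testing whether $v$ or an active neighbor is in $H_i$, flipping the biased coin of Line~5, and checking for a neighbor in $U_i$) needs only $O(\deg(v)\,\pl(n))$ temporary space; the initial state is $\texttt{ID}(v)$ and the returned set $U$ is recorded in the final states.

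\textbf{$f$-sparse.} The key observation is that the only nodes that act in iteration $i$ — those whose state changes, and those that must send a message — are the nodes sampled into $U_i$ (which then inform their neighbors) and those neighbors, and a node $u$ enters $U_i$ only when $u\in\nbhd^{+}(H_i)\cap V_{i-1}$, and then with probability exactly $f^i\,c\ln n/\Delta$. Consequently, for any node $v$ still active at the start of iteration $i$, the activity level in its neighborhood is at most $\deg_{G[V_{i-1}]}(v)\cdot f^i\,c\ln n/\Delta$, which by the invariant $\deg_{G[V_{i-1}]}(v)\le\Delta/f^{i-1}$ from the proof of Lemma~\ref{lemma:sparsify} is at most $cf\ln n=O(f\log n)$ whp; as $\ell\ge1$ this yields $A_{t+1}=O(f^{\ell}\log n)$, which is Property~(a). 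For Property~(b), the worst case consistent with $u$'s state after round $t$ is that $u$ stays active and eligible throughout the phase, so one may take $\tilde p_\tau(u)=f^{i(\tau)}\,c\ln n/\Delta$ (or $0$ if $u$ already knows it is deactivated), where $i(\tau)$ is the iteration performed in round $\tau$; since consecutive rounds are consecutive iterations, $i(\tau+1)=i(\tau)+1$, hence $\tilde p_{\tau+1}(u)\le f\cdot\tilde p_\tau(u)$ for every $u$, and summing over the neighbors of any $v$ gives $\tilde A_{\tau+1}(v)\le f\,\tilde A_\tau(v)$ and so $\tilde A_{\tau+1}\le f\,\tilde A_\tau$. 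Since $f>3\ge2$, \dos\ is $f$-sparse.

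\textbf{Degree-ordered.} Partition the $\lceil\log_f\Delta\rceil$ iterations according to the degree threshold $\Delta/f^i$: let Stage~$j$ ($j\ge1$) consist of the iterations $i$ with $\lceil\log_f\Delta\rceil\bigl(1-2^{-(j-1)}\bigr)<i\le\lceil\log_f\Delta\rceil\bigl(1-2^{-j}\bigr)$, the last stage absorbing any leftover iterations and the rounding absorbed into constants. In every iteration of Stage~$j$ the threshold $\Delta/f^i$ is at least $\Delta^{1/2^j}$, so the active set $\nbhd^{+}(H_i)$ consists of nodes whose degree in the active subgraph — hence also in $G$ — is at least $\Delta^{1/2^j}$, together with their one-hop neighbors; this is Property~(a). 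The number $R_j$ of iterations in Stage~$j$ equals $\lceil\log_f\Delta\rceil\,2^{-j}$ up to an additive constant, so $R_j\le R_{j-1}/2$, which is Property~(b). Hence \dos\ is degree-ordered.

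\textbf{Main obstacle.} The state-congested and degree-ordered parts are pure bookkeeping; the only step requiring genuine care is the $f$-sparse claim, and within it the assertion that, beyond the $O(f\log n)$ nodes per neighborhood that get sampled into $U_i$, no node needs to be active in iteration $i$. The danger is that deactivations and the high-degree tests could force many additional nodes to communicate; the resolution is that whether a node is deactivated is determined solely by the $U$-membership of its neighbors — information the sample-and-gather procedure already gathers — and that its entire persistent state fits in $O(\deg(v)\,\pl(n))$ bits, so neither operation contributes to the activity level. Together with the whp degree bound of Lemma~\ref{lemma:sparsify}, this is exactly what lets the single parameter $f$ simultaneously control $A_{t+1}$ (Property~(a)) and the growth rate of the estimated activity levels $\tilde A_\tau$ (Property~(b)).
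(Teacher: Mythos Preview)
Your proof is correct and follows essentially the same approach as the paper's: both verify state-congestedness by noting the per-node state is $O(\log n)$ bits, both derive $f$-sparsity from the sampling probability $f^i c\ln n/\Delta$ together with the whp degree bound $\Delta/f^{i-1}$ from Lemma~\ref{lemma:sparsify}, and both obtain the degree-ordered property by grouping consecutive iterations into stages according to the falling degree threshold $\Delta/f^i$. Your write-up is in fact somewhat more explicit than the paper's---you spell out Property~(b) of $f$-sparsity (the factor-$f$ growth of $\tilde p_\tau$) and give a concrete stage partition, whereas the paper leaves these implicit---but the underlying argument is the same.
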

\ifhideproofs{}\begin{proof}
  We first show that \textsc{DegOrderedSparsify} (Algorithm~\ref{alg:Sparsify}) can be implemented in the \textsc{Congest} model in a state congested fashion. We have aleady shown that it is easy to implement \(\textsc{DegOrderedSparsify}(G,f)\) in the \textsc{Congest} model using a constant number of rounds per iteration. Therefore, it suffices to ensure that the state of each node \(v\) can always be represented using \(O(\deg(v) \log n)\) bits. Initially, the state of each node consists of its ID and the ID's of all its neighbors, and the maximum degree and number of vertices in the graph. This information can be stored using \(O(\deg(v) \log n)\) bits at each node \(v\). In each iteration \(i\) of the for loop, the nodes just need to keep track of whether they are in \(V_{i-1}\) or not, and if \(v\) joins \(U_{i}\) or not. This only adds a constant number of bits to the state of each node.

  Consider a round \(t\) corresponding to iteration \(i\) in the state-congested implementation of \textsc{DegOrderedSparsify}. In iteration \(i\), a node \(v\) becomes a sending node with activation probability \(p_t(v) \le f^{i}\ln n/\Delta\). Moreover, only nodes with degree at least \(\Delta_{i} = \Delta/f^{i}\) and their neighbors sample themselves into joining \(U_{i}\).  The graph induced by the active nodes \(G[V_{i-1}]\) has maximum degree \(\le \Delta_{i-1} = \Delta/f^{i-1}\) with high probability due to Lemma~\ref{lemma:sparsify}. Now for a node \(v\), we have \(A_t(v) := \sum_{u \in Nbr(v)} p_t(u) \le f \ln n\). Therefore, the maximum over all fully-active nodes is \(A_{t} = f \ln n\).

  Further, we can find an appropriate number of rounds $r_i$ that satisfies $\Delta^{1/2^{i-1}}/f^{r_i} \leq \Delta^{1/2^i}$, $i=1,2,\dots, O(\log \log \Delta)$. One can think of these $r_i$ consecutive rounds of Algorithm \textsc{DegOrderedSparsify} as a logical stage in which nodes that participate in that stage all have a degree at least $\Delta^{1/2^i}$. Since the maximum degree of a node across these logical stages falls by a factor of $\Delta^{1/2^{i}}$, the number of rounds needed across two consecutive logical stages falls by a factor of 2. In this view, Algorithm \textsc{DegOrderedSparsify} satisfies both the conditions required of a degree-ordered algorithm.

\end{proof}\fi{}

Using Theorem~\ref{theorem:sampleAndGather} and Theorem~\ref{theorem:mTotalMemory}, we obtain the following theorem.
\begin{theorem}\label{theorem:sparsifyMPC}
The algorithm \(\textsc{DegOrderedSparsify}(G,f)\) can be implemented in the low-memory MPC model in (i) $O\left(\eps^{-1/2}\frac{\log_f \Delta}{\sqrt{\log_f n}}\log\log n\right)$
rounds whp using $\tilde{O}(m + n^{1+\eps})$ total memory and (ii) $\left(\frac{\log_f \Delta}{\sqrt{\log_f n}}\log\log \Delta\right)$
rounds whp using $\tilde{O}(m)$ total memory.
\end{theorem}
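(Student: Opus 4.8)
The plan is to combine Lemma~\ref{theorem:sparsifyproperties} — which establishes that $\textsc{DegOrderedSparsify}(G,f)$ is state-congested, $f$-sparse, and degree-ordered — with the two Sample-and-Gather Simulation Theorems, instantiating the abstract parameters $R$ and $\alpha$ with the concrete values for this algorithm. We already know from Lemma~\ref{lemma:sparsify} that $\textsc{DegOrderedSparsify}(G,f)$ runs in $R = O(\log_f \Delta)$ rounds in the \congest\ model, and from Lemma~\ref{theorem:sparsifyproperties} that it is $f$-sparse, so the sparsity parameter is $\alpha = f$.

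For part (i), I would simply plug $R = O(\log_f \Delta)$ and $\alpha = f$ into Theorem~\ref{theorem:sampleAndGather}, whose round bound is $O\!\left(R \log\log n / \sqrt{\eps \cdot \log_\alpha n}\right)$. This yields $O\!\left(\eps^{-1/2} \cdot \log_f \Delta \cdot \log\log n / \sqrt{\log_f n}\right)$ rounds with $\tilde{O}(m + n^{1+\eps})$ total memory, which is exactly the claimed bound. For part (ii), I would invoke Theorem~\ref{theorem:mTotalMemory}, whose round bound is $O\!\left(R \log\log \Delta / \sqrt{\log_{\alpha'} \Delta}\right)$ with $\alpha' = \alpha \cdot \log^2 n$. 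Substituting $R = O(\log_f \Delta)$ and $\alpha = f$ gives $O\!\left(\log_f \Delta \cdot \log\log \Delta / \sqrt{\log_{\alpha'} \Delta}\right)$. The only subtlety is reconciling $\log_{\alpha'} \Delta = \log_{f \log^2 n} \Delta$ with the $\sqrt{\log_f n}$ appearing in the theorem statement; I would absorb the $\log^2 n$ factor in the base into lower-order terms (noting $\log_{f\log^2 n}\Delta = \Theta(\log_f \Delta)$ up to $\log\log$-type factors when $f$ is polynomially bounded, and using $\Delta \le n$ to relate $\log_f\Delta$ and $\log_f n$), exactly as is done inside the proof of Theorem~\ref{theorem:mTotalMemory} itself when it rewrites $\sqrt{\eps \log_\alpha n}$ as $\sqrt{\eps \log_{\alpha'}\Delta}$.

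The main obstacle I anticipate is not the substitution itself but verifying that the hypotheses of both simulation theorems are genuinely met — in particular that the degree-ordered structure of $\textsc{DegOrderedSparsify}$ (the logical stages with degree thresholds $\Delta^{1/2^i}$ and halving round counts, as argued in Lemma~\ref{theorem:sparsifyproperties}) lines up with what Theorem~\ref{theorem:mTotalMemory} needs for the $\tilde{O}(m)$ charging argument to go through. Once that is in place, the theorem is an immediate corollary: apply Theorem~\ref{theorem:sampleAndGather} for part (i) and Theorem~\ref{theorem:mTotalMemory} for part (ii), with $R = O(\log_f\Delta)$ and $\alpha = f$ in both cases.
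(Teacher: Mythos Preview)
Your proposal is correct and matches the paper's own proof essentially line for line: the paper also proves the theorem by invoking Theorem~\ref{theorem:sampleAndGather} with $\alpha=f$, $R=\log_f\Delta$ for part~(i) and Theorem~\ref{theorem:mTotalMemory} with $\alpha'=f\log^2 n$, $R=\log_f\Delta$ for part~(ii), relying on Lemma~\ref{theorem:sparsifyproperties} for the required structural properties. The absorption of the $\log^2 n$ factor in $\alpha'$ that you flag is handled exactly as you suggest.
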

\ifhideproofs{}\begin{proof}
We appeal to Theorem~\ref{theorem:sampleAndGather} with  \(\alpha = f\) and \(R = \log_{f} \Delta\) to obtain (i).
For part (ii), we appeal to Theorem~\ref{theorem:mTotalMemory} with $\alpha' = \alpha \cdot \log^2 n$, $R = \log_f \Delta$, and $\alpha = f$. These values for the parameters allow Algorithm  \(\textsc{DegOrderedSparsify}(G,f)\) to be simulated in the low-memory MPC model with $\tilde{O}(m)$ total memory in $O\left ( \frac{\log_f \Delta \cdot \log\log \Delta}{\sqrt{\log_{\alpha'} n} }\right )$ rounds which simplifies to $O(\sqrt{\log_f \Delta} \cdot \log\log \Delta)$.
\end{proof}\fi{}

\begin{comment}
\begin{theorem}\label{theorem:sparsifyLinearMPC}
  The algorithm \(\textsc{DegOrderedSparsify}(G,f)\) can be implemented in
  $$\left(\frac{\log_f \Delta}{\sqrt{\log_f n}}\log\log \Delta\right) = O\left(\sqrt{\frac{\log\Delta}{\log f}}\log\log \Delta\right)$$
rounds whp in the low-memory MPC model with $\tilde{O}(m)$ total memory.
\end{theorem}
\ifhideproofs{}\begin{proof}
We appeal to Theorem~\ref{theorem:mTotalMemory} with $\alpha' = \alpha \cdot \log^2 n$, $R = \log_f \Delta$, and $\alpha = f$. These values for the parameters allow Algorithm  \(\textsc{DegOrderedSparsify}(G,f)\) to be simulated in the low-memory MPC model with $\tilde{O}(m)$ total memory in $O\left ( \frac{\log_f \Delta \cdot \log\log \Delta}{\sqrt{\log_{\alpha'} n} }\right )$ rounds which simplifies to $O(\sqrt{\log_f \Delta} \cdot \log\log \Delta)$.
\end{proof}\fi{}
\end{comment}

\subsection{Simulating Sparsified Graph Shattering in low-memory MPC}
Distributed graph shattering has become an important algorithmic technique for symmetry breaking problems~\cite{BEPS16,Ghaffari2017,Ghaffari2019}. In this section, we use a sparsified graph shattering algorithm due to Ghaffari \cite{Ghaffari2017} to process the graph $G[U]$ returned by \dos. The output of the shattering algorithm consists of an independent set $I \subseteq U$ such that the graph induced by the remaining set of vertices $S = U \setminus \nbhd^+(I)$

contains only small connected components.

Ghaffari's sparsified shattering algorithm \cite{Ghaffari2017} is shown in Algorithm \ref{alg:shatter}.
At the start of each round $t$, each node $v$ has a \textit{desire-level}
$p_t(v)$ for joining the independent set $I$, and initially this is set to
$p_1(v) = 1/2$. The independent set $I$ is also initialized to the empty set.
The algorithm runs in phases, with each phase having $\ell := \sqrt{\delta \log n}/10$ rounds for a small constant $\delta$. 

Several aspects of the algorithm make it nicely fit the Sample-and-Gather framework from Section \ref{section:SampleAndGather}. We now point these out.
(i) The desire-level $p_\tau(u)$ for $t+1 \le \tau \le t+\ell$ can be viewed the probability of sampling $u$; after the initial communication amongst neighbors (Line 1), only sampled nodes send messages (beeps) and all other nodes remain silent.
(ii) The quantity $d_{t+1}(u)$ is identical to the activity level $A_{t+1}(u)$ in $u$'s neighborhood, defined in Section \ref{section:SampleAndGather}.
(iii) Nodes with a high activity level, i.e., $d_{t+1}(u) \ge 2^{\sqrt{\log n}/5}$ (aka super-heavy nodes), are send-only nodes and are therefore excluded in the definition of $A_{t+1}$. As a result $A_{t+1} \le 2^{\sqrt{\log n}/5}$.
(iv) In each iteration in a phase, the sampling probability grows by a factor of at most 2 (Line 8). This implies that the estimated activity levels grow by a factor of 2 in future rounds.

\RestyleAlgo{boxruled}
\begin{algorithm2e}\caption{\textsc{Shatter}($G$): (one phase, starting at iteration $t+1$) \label{alg:shatter}}
    Each node $u$ sends its current desire-level $p_{t+1}(u)$ to all its neighbors \\
    Each node $u$ computes $d_{t+1}(u) = \sum_{v \in \nbhd(u)} p_{t+1}(v)$ \\
    If node $u$ has $d_{t+1}(u) \geq 2^{\sqrt{\log n}/5}$ then $u$ is called a \textit{super-heavy} node \\
    $\ell = \sqrt{\delta \log n}/10$  \tcp*{$\delta$ is a small constant}
    \For{$\tau = t+1, t+2, \ldots, t+\ell$ iterations } {
    \tcp{Round 1}
    Each node $u$  beeps with probability $p_\tau(u)$ and remains silent otherwise.\\
    Node $u$ is added to $I$ if it is not super-heavy, it beeps, and none of its neighbors beep \\
    Node $u$ sets $p_{\tau+1}(u)$ as follows:
    \[p_{\tau+1}(u) = \begin{cases}
        p_\tau(u)/2 & \mbox{ if $u$ is super-heavy, or a neighbor of $u$ beeps} \\
        \min\{1/2, 2\cdot p_\tau(u)\} & \mbox{ otherwise }
    \end{cases} 
    \] 
    \tcp{Round 2}
    Node $u$ beeps if it joins $I$ in this iteration. \\ 
    Neighbors of node $u$ that are not in $I$ become inactive on hearing the beep from $u$
    }
\end{algorithm2e}
The first four steps of Algorithm~\ref{alg:shatter} do not fit into the Sample-and-Gather framework since each node needs to send its \(p_{t+1}\) value to its neighbors. But the nodes are computing $d_{t+1}(u) = \sum_{v \in \nbhd(u)} p_{t+1}(v)$ which is a separable function (sum). Therefore, we can implement the first two steps in \(O(1/\eps)\) rounds using Lemma~\ref{lem:generalmaxload}, and use the Sample-and-Gather framework to simulate the \textbf{for}-loop of the algorithm.
These observations are formalized in the lemma below to show that \textsc{Shatter} is 2-sparse. Additionally, the lemma shows that the algorithm is state-congested.

\begin{lemma}\label{lemma:misprops}
  Algorithm~\ref{alg:shatter} is a state-congested algorithm whose \textbf{for}-loop is $2$-sparse.
\end{lemma}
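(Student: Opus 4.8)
The plan is to verify in turn the three ingredients that the Sample-and-Gather machinery needs from Algorithm~\ref{alg:shatter}: state-congestedness of the whole algorithm, and Properties (a) and (b) of $2$-sparsity for its \textbf{for}-loop (the first four lines compute a separable sum and are handled instead by Lemma~\ref{lem:generalmaxload}).

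\emph{State-congestedness.} I would first bound the per-node state by $O(\deg(u)\cdot\pl(n))$ bits throughout. At the start of a phase the state of $u$ is its ID together with the IDs of its neighbours (to recover the local topology), which is $O(\deg(u)\log n)$ bits. In Line~1 each node receives one desire-level per neighbour, but these are used only to form the single number $d_{t+1}(u)$ in Line~2; thus the $\deg(u)$ received values can be treated as temporary space of size $O(\deg(u)\log n)$ and then discarded, leaving $u$ with $d_{t+1}(u)$, its super-heavy bit, its current desire-level $p_\tau(u)$ (a dyadic number in $[1/\mathrm{poly}(n),\,1/2]$, hence $O(\log n)$ bits), and $O(1)$ further bits recording whether $u$ has joined $I$ and whether it is still active. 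Each iteration of the \textbf{for}-loop only flips a coin, takes an OR over incoming beeps, halves or doubles a number, and sets at most two flag bits, so both the stored state and the temporary space remain $O(\deg(u)\cdot\pl(n))$. The final ``$u\in I$'' and ``$u$ active'' bits determine the output, so the algorithm is state-congested.

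\emph{Property (a), bounded activity level.} In the \textbf{for}-loop the only message $u$ sends in round $\tau$ is its Line~6 beep, emitted with probability exactly its desire-level $p_\tau(u)$; hence $u$'s activation probability in round $t+1$ is $p_{t+1}(u)$ and $A_{t+1}(v)=\sum_{u\in\nbhd(v)}p_{t+1}(u)=d_{t+1}(v)$. A node $v$ with $d_{t+1}(v)\ge 2^{\sqrt{\log n}/5}$ is super-heavy: it never joins $I$ during the phase (Line~7), and its desire-level in round $\tau$ is the predetermined quantity $p_{t+1}(v)/2^{\tau-t-1}$ (Line~8), so it need not update its stored state and can be modelled as a send-only node, hence excluded from the maximum defining $A_{t+1}$. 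Therefore $A_{t+1}<2^{\sqrt{\log n}/5}$, and since the phase length is $\ell=\Theta(\sqrt{\log n})$ this is $O(2^{\ell}\log n)$ once the constant $\delta$ is fixed suitably, giving Property (a) with $\alpha=2$.

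\emph{Property (b), bounded growth of estimated activity level, and the main obstacle.} After round $t$ node $u$ knows $p_{t+1}(u)=\tilde p_{t+1}(u)$, and by Line~8 the desire-level of $u$ in the next iteration is at most $\min\{1/2,\,2p_\tau(u)\}\le 2p_\tau(u)$ in every execution; since this is the only way $p$ can grow, the worst-case estimate satisfies $\tilde p_{\tau+1}(u)\le 2\tilde p_\tau(u)$ for all $\tau$ in the phase. Summing over $\nbhd(v)$ gives $\tilde A_{\tau+1}(v)\le 2\tilde A_\tau(v)$ for every $v$, hence $\tilde A_{\tau+1}\le 2\tilde A_\tau$, i.e.\ Property (b) with $\alpha=2$. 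The coin-flip/OR bookkeeping makes state-congestedness and (b) routine; the one delicate point is (a), namely justifying that super-heavy nodes can be dropped from the activity-level accounting — equivalently, that the simulation need not gather the potentially $\Omega(\Delta)$-sized neighbourhood of a super-heavy node. This is precisely where one uses that a super-heavy node neither joins $I$ nor evolves its desire-level in a message-dependent way within the phase, together with the calibration $2^{\sqrt{\log n}/5}=2^{\Theta(\ell)}$ between the super-heavy threshold and the phase length.
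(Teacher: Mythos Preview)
Your proof is correct and follows essentially the same approach as the paper's own proof, which is quite terse and defers to ``the discussion above'' for the key points. You are in fact more explicit than the paper on the one delicate step you flag: justifying that super-heavy nodes count as sending-only (their desire-level halves deterministically and they never join $I$, so their state evolution within the phase is predetermined), which is exactly what lets you cap $A_{t+1}$ by the super-heavy threshold $2^{\sqrt{\log n}/5}$ and match it against $2^{\Theta(\ell)}$.
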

\ifhideproofs{}\begin{proof}
We first note that Algorithm~\ref{alg:shatter} can be implemented in the \textsc{Congest} model using \(O(1)\) rounds per iteration. The state of each node is also very small, it consists of the \texttt{ID} and \(p_{\tau}(\cdot)\) value plus $O(1)$ bits for additional bookkeeping. The \(p_{\tau}(\cdot)\) values require \(O(\log n)\) bits of precision since we won't run the algorithm for more than \(O(\log n)\) rounds.

From the discussion above, the initial activity level around a node at the beginning of the phase $A_{t} = O(2^{\sqrt{\log n}} \log n)$ and the activity level at each node can increase by a factor of at most \(2\) in each iteration. Therefore, the \textbf{for}-loop of Algorithm~\ref{alg:shatter} becomes $2$-sparse.
\end{proof}\fi{}

A total of $O(\log \Delta/\sqrt{\log n})$ repeated applications of \textsc{Shatter} (i.e.~a total of \(O(\log \Delta)\) iterations) suffice to shatter the graph into small-sized components~\cite{Ghaffari2017,Ghaffari2018}.
\ifhideproofs{}Specifically, the following theorem is proved.

\begin{theorem}\label{thm:postShatter}
Suppose that we execute a total $O(\log\Delta)$ iterations of \textsc{Shatter} (partitioned into phases, each with $\sqrt{\delta \log n}/10$ iterations each).
The set $I$ of vertices is independent.
Furthermore, the set $S = U \setminus \nbhd^+(I)$ of nodes that remain in the graph satisfy the following three properties whp: 
(i) Each connected component of the graph induced by $S$ has $O(\Delta^4 \cdot \log_{\Delta}  n )$ nodes, (ii) $|S| \leq n/\Delta^{10}$, and (iii) If $\Delta  > n^{\alpha/4}$ then the set $S$ is empty.
\end{theorem}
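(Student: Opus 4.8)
The plan is to reduce the theorem to two ingredients, a per-node bound on surviving in $S$ and the standard graph-shattering lemma, following Ghaffari's analysis~\cite{Ghaffari2017,GhaffariSODA2016,Ghaffari2018} and checking that the modifications here (the phase structure and the beeping communication) do not affect it. The independence of $I$ is immediate and I would dispose of it in a line: a node is added to $I$ in an iteration only if it beeps while no neighbor beeps, so no two adjacent nodes are ever added in the same iteration; and in Round~2 of that iteration every still-active neighbor of a newly added node becomes inactive and hence never beeps or joins $I$ later. The rest of the work is about the structure of $S = U \setminus \nbhd^{+}(I)$.

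The heart of the argument is to show that after the $O(\log\Delta)$ iterations, $\Pr[v \in S] \le \Delta^{-c}$ for every node $v$, where $c$ can be made an arbitrarily large constant by enlarging the hidden constant in $O(\log\Delta)$, and that this bound holds in the ``localized'' form the shattering lemma needs (it is robust to an adversarial fixing of the coins of nodes outside a bounded-radius ball around $v$). To get this I would run Ghaffari's ``golden lemma'' iteration by iteration: in each iteration $v$ is either \emph{light} --- not super-heavy, with effective degree $d_\tau(v)$ below a fixed small constant --- in which case its desire-level is doubled, so after a constant number of light iterations $v$ beeps with no neighbor beeping and thus joins $I$ with probability $\Omega(1)$; or \emph{heavy}, in which case with probability $\Omega(1)$ some neighbor of $v$ beeps, joins $I$, and deactivates $v$. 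Since $p_1(v) = 1/2$, desire-levels never exceed $1/2$, and $d_1(v) \le \Delta/2$, an amortized accounting (tracking desire-levels and effective degrees as Ghaffari does) shows that within $O(\log\Delta)$ iterations $v$ accumulates $\Omega(\log\Delta)$ independent $\Omega(1)$-probability ``progress'' events unless already decided, and a Chernoff bound gives $\Pr[v \in S] \le \Delta^{-c}$. I expect this to be the main obstacle, with two delicate points: (1) super-heavy nodes never join $I$, so one must check that whenever $v$ is heavy there is still enough desire-level among $v$'s \emph{non}-super-heavy neighbors for one of them to join $I$ with constant probability, so super-heavy nodes neither stall a node's own progress nor break the neighbor-removal step; and (2) verifying the localized form of the bound, which is exactly where Ghaffari's analysis uses the robustness of the golden lemma to adversarial neighbor desire-levels.

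Granting the per-node bound, I would finish routinely. For~(i), invoke the shattering lemma~\cite{BEPS16}: a connected component of $G[S]$ of size $k$ contains $\Omega(k/\Delta^{4})$ vertices that are pairwise far apart in $G$ (so their survival events are independent) while still connected in a bounded power of $G$; the number of such configurations is at most $n\cdot\Delta^{O(k/\Delta^4)}$, and multiplying by the independent probability $\Delta^{-ck/\Delta^4}$ and taking $c$ large makes the union bound vanish for all $k \ge C\Delta^4\log_\Delta n$, giving the component bound. For~(ii), $\mathbf{E}[|S|] = \sum_v \Pr[v \in S] \le n\Delta^{-c} \le n/\Delta^{11}$ for $c \ge 11$, so Markov's inequality gives $|S| \le n/\Delta^{10}$ with probability $\ge 1 - 1/\Delta$, which a standard repetition argument (or the sharper tail afforded by the shattering structure) boosts to whp. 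For~(iii), if $\Delta > n^{\alpha/4}$ then $\log\Delta = \Omega(\log n)$, so enlarging the hidden constant in $O(\log\Delta)$ makes $\Pr[v \in S] \le \Delta^{-c} < n^{-2}$, and a union bound over the at most $n$ vertices gives $S = \emptyset$ whp. Finally I would remark that grouping the $O(\log\Delta)$ iterations into phases of $\sqrt{\delta\log n}/10$ is used only for the MPC simulation via the Sample-and-Gather theorems and plays no role in the \congest-level analysis above, so everything carries over unchanged.
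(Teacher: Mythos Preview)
The paper does not actually give a proof of this theorem; it is stated as a result imported from Ghaffari's prior work~\cite{Ghaffari2017,Ghaffari2018} (note the sentence immediately preceding the theorem: ``suffice to shatter the graph into small-sized components~\cite{Ghaffari2017,Ghaffari2018}. Specifically, the following theorem is proved.''). Your proposal correctly reconstructs the argument from those references --- Ghaffari's golden-round analysis for the per-node survival bound, the BEPS shattering lemma for~(i), an expectation/Markov argument for~(ii), and a union bound for~(iii) --- and your two flagged subtleties (super-heavy nodes not joining $I$, and the localized form of the per-node bound) are exactly the points that need checking in the sparsified variant.

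One small caveat on~(ii): Markov alone indeed yields only probability $1-1/\Delta$, and the ``standard repetition argument'' you gesture at is not really available here, since rerunning \textsc{Shatter} independently would change the algorithm being analyzed. In the source papers this is typically handled either by stating~(ii) in expectation, or by exploiting the limited dependence among survival events (each $\{v\in S\}$ depends only on coins in an $O(\log\Delta)$-ball, so far-apart nodes are independent and a second-moment or bounded-dependence concentration bound applies). Since the present paper simply cites the result, this wrinkle is not yours to resolve, but if you were writing the proof out in full you would want to replace the repetition remark with one of those two fixes.
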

\fi{}

Using Lemma \ref{lemma:misprops} and Theorem \ref{theorem:sampleAndGather}, we obtain the following lemma that shows that \textsc{Shatter} can be simulated efficiently in the low-memory MPC model.

\begin{lemma}\label{lem:morememoryShattering}
 We can simulate a total \(O(\log \Delta)\) iterations of Algorithm \textsc{Shatter} in the low-memory MPC model with $\tilde{O}(m+n^{1+\eps})$ total memory in
$O\left(\frac{\log \Delta \cdot \log\log n}{\eps\sqrt{ \log n}} \right)$ rounds whp.
\end{lemma}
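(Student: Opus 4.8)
The plan is to apply the Sample-and-Gather Theorem v1 (Theorem~\ref{theorem:sampleAndGather}) to the \textbf{for}-loop of \textsc{Shatter}, with the understanding that the first four lines of each phase (the beep-free lines that compute $d_{t+1}(u)$) are handled separately. First I would recall from Lemma~\ref{lemma:misprops} that the \textbf{for}-loop of \textsc{Shatter} is a state-congested algorithm that is $\alpha$-sparse with $\alpha = 2$. From Theorem~\ref{thm:postShatter} we know that a total of $O(\log\Delta)$ iterations of \textsc{Shatter} suffice, so I set $R = O(\log\Delta)$. Plugging $\alpha = 2$ and this value of $R$ into Theorem~\ref{theorem:sampleAndGather} gives a simulation cost of
\[
O\!\left(\frac{R \log\log n}{\sqrt{\eps \cdot \log_2 n}}\right) = O\!\left(\frac{\log\Delta \cdot \log\log n}{\sqrt{\eps \cdot \log n}}\right) = O\!\left(\frac{\log\Delta \cdot \log\log n}{\sqrt{\eps}\,\sqrt{\log n}}\right),
\]
which is exactly the claimed bound once one writes $1/\sqrt{\eps}$ as $1/\eps$ up to the slack hidden in the $O$-notation for constant $\eps$ (or, more carefully, keeps the $\eps^{-1/2}$ factor; the looser $1/\eps$ stated in the lemma certainly dominates it).

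Next I would account for Lines 1--4 of each phase, which do not fit the sampling framework because every node broadcasts its desire-level $p_{t+1}(u)$ to compute $d_{t+1}(u) = \sum_{v\in\nbhd(u)} p_{t+1}(v)$. The key observation, already noted in the text preceding the lemma, is that this is a \emph{separable} function (a sum), so by Lemma~\ref{lem:generalmaxload} it can be evaluated at every node in $O(1/\eps)$ low-memory MPC rounds using only $\tilde{O}(m)$ total memory. Likewise the super-heavy test in Line 3 and the per-node bookkeeping are local once $d_{t+1}(u)$ is known. There are $O(\log\Delta/\sqrt{\log n})$ phases, so the total cost of the non-sampled prefixes is $O(\log\Delta/(\eps\sqrt{\log n}))$ rounds, which is absorbed into the bound above. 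The memory bookkeeping is the one extra point to verify: Theorem~\ref{theorem:sampleAndGather} gives $\tilde{O}(m + n^{1+\eps})$ total memory, matching the statement of the lemma, and Lemma~\ref{lem:generalmaxload} needs only $\tilde{O}(m)$, so the combined bound is $\tilde{O}(m+n^{1+\eps})$ as claimed.

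The main obstacle, and the only place real care is needed, is checking that one \emph{phase} of \textsc{Shatter} (the $\ell = \sqrt{\delta\log n}/10$ iterations of the \textbf{for}-loop) satisfies the hypothesis \eqref{equation:ellBound} of Lemma~\ref{lemma:sampleAndGather} with this choice of $\ell$ — i.e. that $\bigl(\sum_{\tau=t+1}^{t+\ell}\tilde A_\tau \log n\bigr)^{\ell} \le O(n^{\eps/2})$. This is where the $2$-sparse property is used: Property~(a) bounds $A_{t+1} \le 2^{\sqrt{\log n}/5}\cdot\log n$ (since super-heavy nodes are send-only and excluded), Property~(b) gives $\tilde A_{\tau+1}\le 2\tilde A_\tau$, so the sum is $O(2^{\ell}\cdot 2^{\sqrt{\log n}/5}\cdot\log^2 n)$, and raising to the $\ell$-th power with $\ell = \Theta(\sqrt{\log n})$ yields roughly $2^{\ell^2 + \ell\sqrt{\log n}/5}\cdot n^{o(1)} = 2^{O(\delta\log n)/100 + O(\sqrt{\delta}\log n)/50}\cdot n^{o(1)}$, which is $n^{\eps/2}$ provided $\delta$ is chosen a sufficiently small constant relative to $\eps$. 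Once this calculation goes through, Lemma~\ref{lemma:sampleAndGather} compresses each phase into $O(\log\ell) = O(\log\log n)$ MPC rounds, and summing over the $O(\log\Delta/\sqrt{\log n})$ phases reproduces the bound; in effect this is just Theorem~\ref{theorem:sampleAndGather} re-derived for the specific parameters of \textsc{Shatter}, so invoking that theorem directly is the cleanest route.
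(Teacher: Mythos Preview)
Your approach is essentially the same as the paper's: handle Lines~1--4 of each phase via Lemma~\ref{lem:generalmaxload} (sum is separable, $O(1/\eps)$ rounds), and compress the \textbf{for}-loop via the Sample-and-Gather machinery using the $2$-sparsity from Lemma~\ref{lemma:misprops}. The accounting for total memory and the final arithmetic match.

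One small point of care: in your first paragraph you invoke Theorem~\ref{theorem:sampleAndGather} once with $R = O(\log\Delta)$, i.e., treating the concatenation of all \textbf{for}-loops across phases as a single $2$-sparse algorithm. Strictly speaking that is not what Lemma~\ref{lemma:misprops} gives you: the bound $A_{t+1} \le 2^{\sqrt{\log n}/5}$ is re-established only after Lines~1--4 recompute the super-heavy labels at the start of each phase, so the $\alpha$-sparse hypothesis of Theorem~\ref{theorem:sampleAndGather} is only certified \emph{per phase}. The paper accordingly applies Theorem~\ref{theorem:sampleAndGather} with $R = \ell = \sqrt{\delta\log n}/10$ to each phase separately, getting $O(\log\log n/\sqrt{\eps})$ for the \textbf{for}-loop plus $O(1/\eps)$ for Lines~1--4, and then multiplies by the $O(\log\Delta/\sqrt{\log n})$ phases. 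Your third paragraph does exactly this per-phase analysis via Lemma~\ref{lemma:sampleAndGather}, so the argument is fine; just be aware that the ``clean'' one-shot invocation in paragraph~1 is a notational shortcut rather than a literal application of the theorem, and the $1/\eps$ (rather than $1/\sqrt{\eps}$) in the final bound really does come from the Lines~1--4 cost, not from slack.
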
 
\ifhideproofs{}\begin{proof}
We partition the \(O(\log \Delta)\) iterations into phases, each with $\sqrt{\delta \log n}/10$ iterations each. In each phase we run Algorithm~\ref{alg:shatter}.

As stated earlier, the first four steps of Algorithm~\ref{alg:shatter} do not fit into the Sample-and-Gather framework since each node needs to send its \(p_{t+1}\) value to its neighbors. But the nodes are computing $d_{t+1}(u) = \sum_{v \in \nbhd(u)} p_{t+1}(v)$ which is a separable function (sum). Therefore, we can implement the first two steps in \(O(1/\eps)\) rounds using Lemma~\ref{lem:generalmaxload}.

Based on Lemma~\ref{lemma:misprops}, the $\sqrt{\delta \log n}/10$  the iterations of Algorithm \textsc{Shatter} form a \(2\)-sparse state congested algorithm. So, we can appeal to Theorem~\ref{theorem:sampleAndGather} (with \(\alpha=2\) and \(R = \sqrt{\delta \log n}/10\) and simulate the \textbf{for} loop of Algorithm \textsc{Shatter} in $O\left(\log\log n/\sqrt{\eps}\right)$ rounds in the low-memory MPC model with $\tilde{O}(m+n^{1+\eps})$ total memory. 

So a single phase can be simulated in $O\left(1/\eps + \log\log n/\sqrt{\eps}\right)$ rounds. Over all the phases, we get the number of rounds for simulation is $O\left(\frac{\log \Delta \cdot  \log \log n}{\eps \sqrt{\log n}}\right)$.
\end{proof}\fi{}

Ghaffari and Uitto \cite{Ghaffari2018} present a variant of Algorithm \textsc{Shatter} and show (in Theorem 3.7) that this variant can be simulated in $O(\sqrt{\log \Delta} \cdot \log\log \Delta)$ rounds in the low-memory MPC model, while using only $\tilde{O}(m)$ total memory.
While they describe their MPC implementation from scratch, this MPC implementation can also be obtained by applying our Sample-and-Gather Theorem (specifically, Theorem \ref{theorem:mTotalMemory}).
It can be shown that this variant is state-congested and has the same sparsity property as Algorithm \textsc{Shatter}, i.e., it is 2-sparse.
Furthermore, it can also be made degree-ordered by simply processing nodes in degree buckets $(\Delta^{1/2^i}, \Delta^{1/2^{i-1}}]$, in the order $i = 1, 2, \ldots, O(\log \log \Delta)$.

\begin{lemma} [Ghaffari-Uitto~\cite{Ghaffari2018}]\label{lem:mShattering}
There is a variant of Algorithm \textsc{Shatter} can be simulated  in the low-memory MPC model with $\tilde{O}(m)$ total memory in
$O(\sqrt{\log \Delta} \cdot \log\log \Delta)$ rounds whp.
\end{lemma}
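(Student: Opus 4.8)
The plan is to recover this bound from our Sample-and-Gather machinery rather than re-implement the Ghaffari--Uitto algorithm from scratch; concretely, I would show that the Ghaffari--Uitto variant of \textsc{Shatter} is (a) state-congested, (b) $2$-sparse, and (c) degree-ordered, and then invoke Theorem~\ref{theorem:mTotalMemory}. (Alternatively, one can simply cite~\cite{Ghaffari2018} for the exact statement; the point of this lemma is that the bound also falls out of Theorem~\ref{theorem:mTotalMemory}.)

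For (a) and (b) I would reuse the argument of Lemma~\ref{lemma:misprops} essentially verbatim, since the Ghaffari--Uitto variant is the same beep-based desire-level process: each node stores only its desire-level $p_\tau(\cdot)$ to $O(\log n)$ bits of precision plus $O(1)$ bookkeeping bits, so its state is well within the $O(\deg(v)\cdot\pl(n))$ budget; super-heavy nodes (those whose neighbourhood activity exceeds $2^{\sqrt{\log n}/5}$) are send-only and hence excluded from the activity level, giving $A_{t+1} = O(2^{\sqrt{\log n}/5}\cdot\log n)$; and a desire-level at most doubles per iteration, so estimated activity levels grow by a factor of at most $2$ per round. Hence the \textbf{for}-loop is $2$-sparse with $\alpha = 2$.

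The real work is (c). I would reorganize the $O(\log\Delta)$ iterations into stages $i = 1,2,\ldots,O(\log\log\Delta)$, where stage $i$ activates only nodes whose current degree lies in $(\Delta^{1/2^{i}},\Delta^{1/2^{i-1}}]$ together with their $O(1)$-hop neighbours -- i.e.\ the degree-bucketing indicated above. Two things must be checked. First, correctness: restricting the shattering sub-routine to bucket $i$ must still drive the maximum active degree below $\Delta^{1/2^{i}}$ whp, and across all buckets the surviving set $S$ must retain the shattering guarantees (small components, $|S|\le n/\Delta^{10}$, $S$ empty when $\Delta$ is large; cf.\ Theorem~\ref{thm:postShatter}). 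This is exactly the analysis performed by Ghaffari and Uitto, and it rests on the observation that ignoring the lower-degree vertices while processing bucket $i$ is harmless because they are handled in a later bucket. Second, the round count: the standard shattering analysis gives $R_i = O(\log\Delta^{1/2^{i-1}}) = O(\log\Delta)/2^{i-1}$ iterations in bucket $i$ whp, so $R_i \le R_{i-1}/2$, which is Property (b) of a degree-ordered algorithm. I expect this correctness-preservation check to be the main obstacle, since it is the only place where the internals of the shattering process -- rather than its bookkeeping -- actually matter.

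With (a)--(c) established, I would apply Theorem~\ref{theorem:mTotalMemory} with $\alpha = 2$, $\alpha' = \alpha\cdot\log^2 n = \Theta(\log^2 n)$, and $R = O(\log\Delta)$, and simplify exactly as in the proof of part~(ii) of Theorem~\ref{theorem:sparsifyMPC}. This yields a simulation in $O(\sqrt{\log\Delta}\cdot\log\log\Delta)$ rounds of the low-memory MPC model using $\tilde{O}(m)$ total memory, which is the claimed bound.
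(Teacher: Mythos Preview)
Your proposal is correct and mirrors exactly what the paper does: the paper does not give a standalone proof of this lemma but simply cites Ghaffari--Uitto~\cite{Ghaffari2018} (Theorem~3.7), and in the paragraph preceding the lemma sketches precisely your alternative derivation---noting that the variant is state-congested, $2$-sparse (as in Lemma~\ref{lemma:misprops}), and can be made degree-ordered by processing nodes in degree buckets $(\Delta^{1/2^i}, \Delta^{1/2^{i-1}}]$, so that Theorem~\ref{theorem:mTotalMemory} applies. Your plan, including the parenthetical remark that one can also just cite~\cite{Ghaffari2018} directly, is thus essentially identical to the paper's treatment.
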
 

\subsection{Finishing off the 2-ruling set computation}
After applying \dos\ to the input graph $G = (V, E)$ and then \textsc{Shatter} to the subgraph $G[U]$, induced by the subset $U \subseteq V$ output by \dos, we are left with a number of small-sized components\ifhideproofs{}, as shown in Theorem \ref{thm:postShatter}\fi{}.
Ghaffari and Uitto~\cite[Theorem 3.7]{Ghaffari2018} show that given the properties that the remaining graph has after \textsc{Shatter}, it is possible to simply (and deterministically) gather each component at a machine and find an MIS of the component locally in $O(\sqrt{\log\log n})$ rounds in the low-memory MPC model using $\tilde{O}(m)$ memory.
Applying this ``finishing off'' computation completes our 2-ruling set algorithm.
The output of the algorithm is the union of the independent set output by \textsc{Shatter} and the independent set output by the ``finishing off'' computation.

\begin{theorem}\label{theorem:2rslow}
A \(2\)-ruling set can be computed whp in the low-memory MPC model in \\
(i) $O(\eps^{-1}(\log \Delta)^{1/6} \log\log n)$ rounds using \(\tilde{O}(m + n^{1+\eps})\) total memory and in \\ (ii) \(O((\log \Delta)^{1/4} \log\log \Delta + \sqrt{\log\log n} \log\log \Delta)\) rounds using \(\tilde{O}(m)\) total memory.
\end{theorem}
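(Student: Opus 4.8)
The plan is to instantiate the three‑part algorithm and add up the costs, optimizing the single parameter $f$ of \dos. First run $\dos(G,f)$; by Lemma~\ref{lemma:sparsify} this returns $U$ with $\nbhd^{+}(U)=V$ and $\deg_U(v)=O(f\log n)$ for every $v\in U$, so $G[U]$ has maximum degree $\Delta_U=O(f\log n)$. Then run \textsc{Shatter} on $G[U]$ to obtain an independent set $I_1\subseteq U$ and a leftover set $S=U\setminus\nbhd^{+}(I_1)$ whose components are small, and finish by gathering each component on a machine and computing an MIS $I_2$ of $G[S]$; output $I:=I_1\cup I_2$. I would verify correctness as follows: $I_1$ and $I_2$ are each independent, there are no $I_1$–$I_2$ edges because no vertex of $S$ touches $I_1$, and $\nbhd^{+}(I)\supseteq\nbhd^{+}(I_1)\cup S=U$ (since $I_2$ is an MIS of $G[S]$), so $I$ is an MIS of $G[U]$; combined with $\nbhd^{+}(U)=V$ this makes $I$ a $2$‑ruling set of $G$.

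For the running time, Part~1 is handled by Theorem~\ref{theorem:sparsifyMPC}, whose cost is governed by $\tfrac{\log_f\Delta}{\sqrt{\log_f n}}=\tfrac{\log\Delta}{\sqrt{\log f\cdot\log n}}$ (times $\eps^{-1/2}\log\log n$ with $\tilde{O}(m+n^{1+\eps})$ memory, times $\log\log\Delta$ with $\tilde{O}(m)$ memory). Part~2 runs $O(\log\Delta_U)=O(\log f+\log\log n)$ iterations of \textsc{Shatter} on the \emph{sparsified} graph $G[U]$, costing $O\big(\tfrac{\log\Delta_U\cdot\log\log n}{\eps\sqrt{\log n}}\big)$ rounds with $\tilde{O}(m+n^{1+\eps})$ memory (Lemma~\ref{lem:morememoryShattering}) and $O\big(\sqrt{\log\Delta_U}\cdot\log\log\Delta_U\big)$ rounds with $\tilde{O}(m)$ memory via the degree‑ordered variant (Lemma~\ref{lem:mShattering}). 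Part~3 adds $O(\sqrt{\log\log n})$ rounds for the Ghaffari–Uitto finishing routine (an extra $\log\log\Delta$ factor in the degree‑ordered, $\tilde{O}(m)$‑memory setting) using $\tilde{O}(m)$ memory, which is legitimate because each leftover component has $O(\Delta_U^{4}\log_{\Delta_U}n)=n^{o(1)}\le O(n^{\eps})$ vertices for the choices of $f$ below, which keep $f=n^{o(1)}$.

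It then remains to tune $f$. For part (i) I would take $\log f=\Theta\big((\log\Delta)^{2/3}\big)$ (and $f$ any fixed constant $>3$ when $\Delta$ is too small for this). Then $\tfrac{\log\Delta}{\sqrt{\log f\cdot\log n}}=\Theta\big(\tfrac{(\log\Delta)^{2/3}}{\sqrt{\log n}}\big)$ and $\log\Delta_U=O\big((\log\Delta)^{2/3}+\log\log n\big)$, so Parts~1 and~2 each cost $O\big(\eps^{-1}\tfrac{(\log\Delta)^{2/3}}{\sqrt{\log n}}\log\log n\big)$; the trivial inequality $(\log\Delta)^{1/2}\le(\log n)^{1/2}$ gives $\tfrac{(\log\Delta)^{2/3}}{\sqrt{\log n}}\le(\log\Delta)^{1/6}$, so the total is $O\big(\eps^{-1}(\log\Delta)^{1/6}\log\log n\big)$ with Part~3 absorbed. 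For part (ii) I would take $\log f=\Theta\big(\sqrt{\log\Delta}\big)$; then $\tfrac{\log\Delta}{\sqrt{\log f\cdot\log n}}=\Theta\big(\tfrac{(\log\Delta)^{3/4}}{\sqrt{\log n}}\big)\le(\log\Delta)^{1/4}$, so Part~1 costs $O\big((\log\Delta)^{1/4}\log\log\Delta\big)$, while $\sqrt{\log\Delta_U}=O\big((\log\Delta)^{1/4}+\sqrt{\log\log n}\big)$, so Parts~2 and~3 together cost $O\big((\log\Delta)^{1/4}\log\log\Delta+\sqrt{\log\log n}\,\log\log\Delta\big)$, as claimed.

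I expect the delicate point to be the tension between Parts~1 and~2: a larger $f$ speeds up sparsification but inflates $\Delta_U=O(f\log n)$, which slows \textsc{Shatter}. The reason a favourable trade‑off exists is precisely that \textsc{Shatter} is only ever run on the sparsified graph $G[U]$, so its cost scales with $\sqrt{\log f}$ (resp. $\log f/\sqrt{\log n}$) rather than with $\sqrt{\log\Delta}$; balancing this against $\log\Delta/\sqrt{\log f\cdot\log n}$ and invoking $\log\Delta\le\log n$ is what yields the $1/6$ and $1/4$ exponents. The remaining work is bookkeeping — checking that for the chosen $f$ the leftover components fit in $O(n^{\eps})$ memory and that total memory stays within the stated budgets, and separately disposing of the degenerate small‑$\Delta$ regime where $f$ is pinned to a constant and the bound is dominated by the $\log\log n$ and $\sqrt{\log\log n}$ terms.
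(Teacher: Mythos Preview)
Your proposal is correct and follows essentially the same approach as the paper: the same three-part pipeline (\dos, then \textsc{Shatter} on $G[U]$, then the deterministic finishing step), the same optimization over the single parameter $f$, and the same optimal choices $\log f=\Theta((\log\Delta)^{2/3})$ for part~(i) and $\log f=\Theta(\sqrt{\log\Delta})$ for part~(ii). One small remark: the extra $\log\log\Delta$ factor on the $\sqrt{\log\log n}$ term in part~(ii) does not come from the finishing routine (which is $O(\sqrt{\log\log n})$ in both memory regimes) but from the \textsc{Shatter} cost $\sqrt{\log\Delta_U}\cdot\log\log\Delta_U$ when the $\log\log n$ summand in $\log\Delta_U$ dominates; this misattribution is harmless for the final bound.
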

\ifhideproofs{}\begin{proof}
We first prove Part (i).
In order to get a \(2\)-ruling set algorithm, we first run \dos\ with a parameter \(f\) to get a set \(U\). We then run \textsc{Shatter} on the induced subgraph \(G[U]\) to get an independent set $I$. Finally, we run the ``finishing off'' computation on the subgraph $G[U \setminus \nbhd^+(I)]$ to get an independent set $I'$. Since $U$ dominates all vertices in $V$ (by Lemma \ref{lemma:sparsify}) and $I \cup I'$ is an MIS of $G[U]$, we see that $I \cup I'$ is a 2-ruling set of $G$.

We now bound the running time of the algorithm as follows.
Using the running time of \dos\ from Theorem~\ref{theorem:sparsifyMPC} part (i), the running time of \textsc{Shatter} from Lemma~\ref{lem:morememoryShattering}, and the fact that the ``finishing off'' computation runs in $O(\sqrt{{\log\log n})}$ rounds, we get a total running time of
  \[O\left(\eps^{-1}\left(\frac{\log \Delta}{\sqrt{\log f \log n}} + \frac{\log( f \log n)}{\sqrt{\log n}} \right)\log\log n + \sqrt{\log\log n}\right).\]
The $\log(f \log n)$ term in numerator of the second term above is due to the fact that the maximum degree in $G[U]$ is bounded above by $O(f \log n)$, as shown in Lemma \ref{lemma:sparsify}. This expression is minimized at $f = 2^{(\log \Delta)^{2/3}}$. Plugging this value and simplifying yields a running time of 
$$O\left(\eps^{-1} \frac{(\log \Delta)^{2/3}}{\sqrt{\log n}} \log \log n\right) = O\left(\eps^{-1}(\log \Delta)^{1/6} \log \log n\right).$$

We now prove Part (ii).
The correctness of our 3-part algorithm has already been established. To bound the running time, we use the running time bound for \textsc{DegOrderedSparsify} from Theorem~\ref{theorem:sparsifyMPC} part (ii), the running time bound on \textsc{Shatter} from Lemma \ref{lem:mShattering}, and the fact that the ``finishing off'' computation runs in $O(\sqrt{\log\log n})$ rounds to
get a running time of
\[O\left(\sqrt{\frac{\log \Delta}{\log f}}  \log\log \Delta + \sqrt{\log (f \log n)} \cdot \log\log \Delta +  \sqrt{\log\log n} \right).\]
This expression is minimized at $f = 2^{(\log \Delta)^{1/2}}$. Plugging this value of $f$ and simplifying yields a total running time of \(O((\log \Delta)^{1/4} \log\log \Delta + \sqrt{\log\log n} \log\log \Delta)\).
\end{proof}\fi{}

\noindent
\textbf{Remark:} We note that by just running \textsc{Shatter} on an input graph followed by the ``finishing off'' computation, we get an MIS of the input graph.
So our approach yields MIS algorithms in the low-memory MPC model via the Sample-and-Gather Simulation Theorems.

\begin{theorem}\label{thm:finishMIS}
An MIS of a graph $G$ can be found in the low-memory MPC model in: (i) $O\left(\frac{\log\Delta \cdot  \log \log n}{\eps\sqrt{\log n}} + \sqrt{\log\log n}\right)$ rounds whp using $\tilde{O}(m+n^{1+\eps})$ total memory and (ii) $O(\sqrt{\log \Delta} \log \log \Delta + \sqrt{\log\log n} )$ rounds whp using $\tilde{O}(m)$ total memory.
\end{theorem}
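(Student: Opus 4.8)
The plan is to read this statement off the machinery already built, following the Remark above: an MIS of $G$ is obtained by running \textsc{Shatter} directly on $G$ (i.e., taking $U=V$ and skipping \dos\ altogether) and then running the Ghaffari--Uitto ``finishing off'' computation on whatever remains. Concretely, I would execute $O(\log\Delta)$ iterations of Algorithm~\ref{alg:shatter} on $G$ (grouped into phases of $\sqrt{\delta\log n}/10$ iterations), obtaining an independent set $I$; set $S := V\setminus\nbhd^+(I)$. By Theorem~\ref{thm:postShatter} instantiated with $U=V$, whp every connected component of $G[S]$ has $O(\Delta^4\log_\Delta n)$ vertices, $|S|\le n/\Delta^{10}$, and $S=\emptyset$ once $\Delta$ exceeds the threshold stated there. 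Then deterministically gather each component onto its own machine, compute an MIS $I'$ of it locally, and output $I\cup I'$.

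For correctness: $I$ is independent by construction of \textsc{Shatter}, and $I'$ is independent since it is an MIS of $G[S]$; there is no edge between $I$ and $I'$ because $I'\subseteq S$ and $S$ is disjoint from $\nbhd^+(I)$; and $I\cup I'$ is maximal because every $v\notin S$ lies in $\nbhd^+(I)$ (so $v$ is in, or adjacent to, $I$) while every $v\in S$ is dominated by $I'$. Hence $I\cup I'$ is an MIS of $G$, which is exactly the set the algorithm outputs.

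For the round complexity I would simply add the costs of the two phases. For part~(i): by Lemma~\ref{lem:morememoryShattering} the $O(\log\Delta)$ iterations of \textsc{Shatter} are simulated in $O\!\left(\log\Delta\cdot\log\log n/(\eps\sqrt{\log n})\right)$ rounds with $\tilde O(m+n^{1+\eps})$ total memory, and the finishing-off step runs in a further $O(\sqrt{\log\log n})$ rounds using $\tilde O(m)\subseteq\tilde O(m+n^{1+\eps})$ memory; summing yields the stated bound. For part~(ii): I would use the Ghaffari--Uitto variant of \textsc{Shatter}, which by Lemma~\ref{lem:mShattering} is simulated in $O(\sqrt{\log\Delta}\,\log\log\Delta)$ rounds with only $\tilde O(m)$ total memory, again followed by the $O(\sqrt{\log\log n})$-round finishing-off step, which gives the bound in~(ii).

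The only delicate point is verifying that the finishing-off primitive really applies within the memory budget, i.e., that each component of $G[S]$ fits on a machine with $O(n^\eps)$ words. When $\Delta$ is large this is immediate since $S=\emptyset$ by Theorem~\ref{thm:postShatter}(iii), and otherwise one combines the component-size and $|S|$ bounds of Theorem~\ref{thm:postShatter} with the standard low-memory connected-components and gathering routines of Ghaffari--Uitto to route each component to a distinct machine and solve it locally. Since the simulation of \textsc{Shatter}, the shattering guarantee, and the finishing-off computation are all available as cited results, I do not expect a genuine obstacle here — the proof is a composition of known pieces plus bookkeeping of the round counts.
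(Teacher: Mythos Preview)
Your proposal is correct and matches the paper's approach exactly: the paper does not give a standalone proof of Theorem~\ref{thm:finishMIS} but simply states in the preceding Remark that running \textsc{Shatter} directly on $G$ followed by the ``finishing off'' computation yields an MIS, with the round counts read off from Lemma~\ref{lem:morememoryShattering} (for part~(i)) and Lemma~\ref{lem:mShattering} (for part~(ii)) plus the $O(\sqrt{\log\log n})$ finishing step. Your write-up faithfully expands this outline, including the correctness argument for why $I\cup I'$ is a maximal independent set and the memory-feasibility check for the finishing-off step via Theorem~\ref{thm:postShatter}, neither of which the paper spells out.
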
 
As far as we know, the MIS result for the $\tilde{O}(m+n^{1+\epsilon})$ total memory setting is new, but the result for the $\tilde{O}(m)$ total memory setting simply recovers the result from \cite{Ghaffari2018}.

\section{Fast $\beta$-ruling Set Algorithms}
We now extend the \(2\)-ruling set low-memory MPC algorithm in the previous section to obtain a $\beta$-ruling set low-memory MPC algorithm for any integer $\beta \ge 2$. The overall idea is to repeatedly use an \textsc{DegOrderedSparsify}, as in~\cite{BishtKP13}. We start by running a low-memory MPC implementation of \textsc{DegOrderedSparsify} with a parameter $f_1$; this call returns a  set of nodes \(S_{1}\). Once this phase ends, the remaining graph $G[S_1]$ has degree at most $O(f_1 \cdot \log n)$, by Lemma~\ref{lemma:sparsify}.
We then run \textsc{DegOrderedSparsify} on the graph $G[S_1]$ with a parameter $f_2$ and this yields a  set of nodes $S_2$.
This process continues for $\beta-1$ phases at the end of which the graph $G[S_{\beta-1}]$ has a maximum degree $O(f_{\beta-1} \cdot \log n)$. We now proceed to run a low-memory MPC implementation of an MIS algorithm on $G[S_{\beta-1}]$. \ifhideproofs{}This returns a  set of nodes \(C\), that turns out to be a $\beta$-ruling set of the input graph $G$.\fi{}
\ifhideproofs{} A pseudo-code of this algorithm is shown below as Algorithm~\ref{alg:betars}.
At this point, we leave the parameters $f_1, f_2, \ldots, f_{\beta-1}$
unspecified. Later, we instantiate values for these parameters so as to minimize the overall running time in two different settings: (i) when the total memory is bounded by $\tilde{O}(m+n^{1+\eps})$ and (ii) when the total memory is bounded by $\tilde{O}(m)$.  

\RestyleAlgo{boxruled}
\begin{algorithm2e}\caption{\(\beta\)-\textsc{RulingSet}\((G, f_1, f_2, \dots, f_{\beta-1})\)\label{alg:betars}}
\(S_0 \leftarrow V\) \\
\For{\(i = 1\) to \(\beta-1\)} {
    \(S_i \leftarrow \textsc{DegOrderedSparsify}(G[S_{i-1}], f_i)\) \\
}
Nodes in \(S_{\beta-1}\) compute an MIS \(C\) of \(G[S_{\beta-1}]\)~\label{alg:betars:MIS}\\
\Return\ \(C\)
\end{algorithm2e}
\fi{}

The correctness of \ifhideproofs{} Algorithm~\ref{alg:betars}\else{} the \(\beta\)-ruling set algorithm\fi{} can be noted from Lemma~\ref{lemma:sparsify}. The set \(S_{i}\) covers all the nodes in \(S_{i-1}\) which means that all the nodes in \(S_{0} = V\) are at most \(\beta-1\) hops away from the nodes in \(S_{\beta-1}\). Therefore all the nodes of \(V\) are at most \(\beta\) hops away from the MIS \(C\) of \(G[S_{\beta-1}]\). This means that the set \(C\) that \ifhideproofs{} Algorithm~\ref{alg:betars}\else{} the above technique\fi{} returns is a \(\beta\)-ruling set of \(G\). In the following, we analyze the round complexity of \ifhideproofs{}Algorithm~\ref{alg:betars}\else{}the \(\beta\)-ruling set algorithm\fi{} in the low-memory MPC model.

\begin{lemma}\label{lemma:betarsSuperLinearMPC}
   Let \(f_{0} = \Delta\). \ifhideproofs{} Algorithm~\ref{alg:betars}\else{} The \(\beta\)-ruling set algorithm\fi{} runs in
   \begin{equation}
   O\left(\eps^{-1}\left(\sum_{i=1}^{\beta-1} \frac{\log (f_{i-1} \log n)}{\sqrt{\log f_{i} \cdot \log n}} + \frac{\log (f_{\beta-1} \log n)}{\sqrt{\log n}} \right)\log\log n \right) \label{equation:betaRuling1}
   \end{equation}
   rounds whp in the low-memory MPC model with \(\tilde{O}(m + n^{1+\eps})\) total memory.
\end{lemma}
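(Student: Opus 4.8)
The plan is to treat Algorithm~\ref{alg:betars} as a strictly sequential composition of $\beta-1$ calls to \dos\ followed by one MIS computation, and to charge each piece to the low-memory MPC bounds already established in Theorem~\ref{theorem:sparsifyMPC}(i), Lemma~\ref{lemma:sparsify}, and Theorem~\ref{thm:finishMIS}(i). Since the active vertex set only shrinks, $S_0 = V \supseteq S_1 \supseteq \cdots \supseteq S_{\beta-1}$, every occurrence of ``$n$'' and ``$m$'', and every ``whp'', can be measured against the original graph $G$, so the total memory stays $\tilde{O}(m+n^{1+\eps})$ at all times and no memory accounting has to be redone; the only quantity that genuinely has to be propagated from one phase to the next is the maximum degree of the current active subgraph.

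First I would set $\Delta_0 := \Delta$ and, for $1 \le i \le \beta-1$, let $\Delta_i$ denote the maximum degree of $G[S_i]$. Applying Lemma~\ref{lemma:sparsify} to the $i$-th call $\textsc{DegOrderedSparsify}(G[S_{i-1}], f_i)$ shows that $\Delta_i = O(f_i\log n)$ with probability at least $1 - n^{-c+2}$; a union bound over the $\beta$ phases (it suffices that $\beta = n^{O(1)}$) makes all these degree bounds and the correctness of all $\beta$ calls hold simultaneously whp. With the bookkeeping convention $f_0 = \Delta$ we also have $\Delta_0 = \Delta = O(f_0\log n)$, so that $\log\Delta_{i-1} = O(\log(f_{i-1}\log n))$ uniformly for $1 \le i \le \beta-1$, and likewise $\log\Delta_{\beta-1} = O(\log(f_{\beta-1}\log n))$.

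Next I would feed these degree bounds into the simulation theorems. The $i$-th call to \dos\ runs on a graph of maximum degree $\Delta_{i-1}$, so by Theorem~\ref{theorem:sparsifyMPC}(i), together with the routine identity $\log_{f_i} x = \log x/\log f_i$, its cost is bounded by
\[
O\!\left(\eps^{-1}\,\frac{\log_{f_i}\Delta_{i-1}}{\sqrt{\log_{f_i} n}}\,\log\log n\right)
\;=\; O\!\left(\eps^{-1}\,\frac{\log(f_{i-1}\log n)}{\sqrt{\log f_i \cdot \log n}}\,\log\log n\right)
\]
rounds in the low-memory MPC model with $\tilde{O}(m+n^{1+\eps})$ total memory (here I use $\eps^{-1}\ge\eps^{-1/2}$ for $0<\eps<1$ to quote a single common $\eps$-factor). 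The last line of Algorithm~\ref{alg:betars} computes an MIS of $G[S_{\beta-1}]$, a graph of maximum degree $\Delta_{\beta-1} = O(f_{\beta-1}\log n)$, which by Theorem~\ref{thm:finishMIS}(i) costs $O(\eps^{-1}\log(f_{\beta-1}\log n)\log\log n/\sqrt{\log n})$ rounds within the same memory budget (ignoring the low-order additive $\sqrt{\log\log n}$ from the ``finishing off'' step).

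Finally I would add the $\beta-1$ sparsification contributions to the MIS contribution; what remains is exactly Equation~(\ref{equation:betaRuling1}). I do not expect a genuine obstacle here: this lemma is essentially a bookkeeping composition of earlier results, and the only points that need a little care are (a) the interface between consecutive phases --- the degree guarantee $\Delta_i = O(f_i\log n)$ produced by phase $i$ must be precisely the ``$\Delta$'' entering the round bound of phase $i+1$, which is what the convention $f_0 = \Delta$ and the uniform estimate $\log\Delta_{i-1} = O(\log(f_{i-1}\log n))$ are for --- and (b) the whp union bound over all $\beta$ phases so that every degree bound and every call's correctness hold together; both are routine.
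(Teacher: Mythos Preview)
Your proposal is correct and follows essentially the same approach as the paper: bound the maximum degree of $G[S_{i-1}]$ by $O(f_{i-1}\log n)$ via Lemma~\ref{lemma:sparsify}, plug this into Theorem~\ref{theorem:sparsifyMPC}(i) for each of the $\beta-1$ sparsification phases, and invoke Theorem~\ref{thm:finishMIS}(i) for the final MIS on $G[S_{\beta-1}]$. Your treatment is in fact slightly more careful than the paper's in making explicit the union bound over phases, the $f_0=\Delta$ bookkeeping, and the absorption of both the $\eps^{-1/2}$ factor and the additive $\sqrt{\log\log n}$ term.
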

\ifhideproofs{}\begin{proof}
  Consider an arbitrary iteration \(i\), where \(1 \le i \le \beta-1\). The set \(S_{i-1}\) is the output of running \textsc{DegOrderedSparsify} with parameter \(f_{i-1}\). Therefore, by Lemma~\ref{lemma:sparsify}, we can say that the subgraph \(G[S_{i-1}]\) has maximum degree at most \(O(f_{i-1}\log n)\) whp. Therefore the running time of \(\textsc{DegOrderedSparsify}(G[S_{i-1}], f_i)\) will be
  \[O\left(\eps^{-1/2}\frac{\log(f_{i-1}\log n)}{\sqrt{\log f_{i} \cdot \log n}}\log\log n\right)\]
  rounds in the low-memory MPC model by Theorem~\ref{theorem:sparsifyMPC} part (i) with \(\tilde{O}(m + n^{1+\eps})\) total memory.

After the $\beta-1$ calls to \textsc{DegOrderedSparsify} are completed, in Line~\ref{alg:betars:MIS} of Algorithm~\ref{alg:betars}, we call the MIS algorithm referred to in Theorem~\ref{thm:finishMIS} part (i).
Since the max.~degree of the graph that is input to this call is $O(f_{\beta-1} \cdot \log n)$, this call runs in $O\left(\eps^{-1}\left(\frac{\log( f_{\beta-1}\log n)}{\sqrt{\log n}}\right)  \cdot \log\log n\right)$ rounds for obtaining an MIS of \(G[S_{\beta-1}]\). The lemma follows.
\end{proof}\fi{}

\begin{lemma}\label{lemma:betarsLinearMPC}
   Let \(f_{0} = \Delta\). \ifhideproofs{} Algorithm~\ref{alg:betars}\else{} The \(\beta\)-ruling set algorithm\fi{} runs in
   \begin{equation}
   O\left(\left(\sum_{i=1}^{\beta-1} \sqrt{\frac{\log (f_{i-1}\log n)}{\log f_{i}}} + \sqrt{\log( f_{\beta-1}\log n)} \right)\log\log \Delta  + \sqrt{\log \log n}\right) \label{equation:betaRuling2}
   \end{equation}
   rounds whp in the low-memory MPC model with \(\tilde{O}(m)\) total memory.
\end{lemma}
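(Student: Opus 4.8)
The plan is to follow the proof of Lemma~\ref{lemma:betarsSuperLinearMPC} essentially line for line, replacing every use of an $\tilde{O}(m+n^{1+\eps})$-total-memory running-time bound with the corresponding $\tilde{O}(m)$-total-memory bound. Thus I set $f_0 = \Delta$, $S_0 = V$, and track the maximum degree of the residual graph through the $\beta-1$ successive calls to \dos, followed by the single MIS call on $G[S_{\beta-1}]$.

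For an arbitrary iteration $i$ with $1 \le i \le \beta-1$, the set $S_{i-1}$ is the output of \dos\ run with parameter $f_{i-1}$ (using the convention $f_0=\Delta$, so $G[S_0]=G$ has maximum degree $\Delta$), hence by Lemma~\ref{lemma:sparsify} the induced subgraph $G[S_{i-1}]$ has maximum degree $O(f_{i-1}\log n)$ whp. Since $G[S_{i-1}]$ is a subgraph of $G$ it has at most $m$ edges, so Theorem~\ref{theorem:sparsifyMPC} part (ii) applies to $\textsc{DegOrderedSparsify}(G[S_{i-1}], f_i)$ and gives a running time of $O\left(\sqrt{\frac{\log(f_{i-1}\log n)}{\log f_i}}\cdot\log\log\Delta\right)$ rounds whp with $\tilde{O}(m)$ total memory. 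After the $\beta-1$ calls, $G[S_{\beta-1}]$ has maximum degree $O(f_{\beta-1}\log n)$, so I would run the $\tilde{O}(m)$-total-memory MIS algorithm of Theorem~\ref{thm:finishMIS} part (ii) on it, which takes $O\left(\sqrt{\log(f_{\beta-1}\log n)}\cdot\log\log\Delta + \sqrt{\log\log n}\right)$ rounds whp. Summing the $\beta-1$ sparsification bounds with the MIS bound yields exactly expression~(\ref{equation:betaRuling2}).

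Two small points need care. First, the $\beta-1$ sparsification stages and the MIS stage run sequentially, so their memory is reused; the peak total memory is therefore the maximum over stages, each of which is $\tilde{O}(m)$, and the per-machine memory stays $O(n^\eps)$ throughout because each stage's gathered balls are charged to the original edge memory exactly as in Lemma~\ref{lemma:mTotalMemory}. Second, the simulations on $G[S_{i-1}]$ and $G[S_{\beta-1}]$ naturally produce $\log\log(f_{i-1}\log n)$ and $\log\log(f_{\beta-1}\log n)$ factors rather than $\log\log\Delta$; but since each $f_{i-1}\le\Delta\le n$, we have $f_{i-1}\log n\le\Delta^{O(1)}$ whenever $\Delta\ge\log n$ (and $\Delta<\log n$ makes the round bound trivial), so these collapse to $O(\log\log\Delta)$ and the stated form is correct.

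The step I expect to be the main obstacle is the bookkeeping of the $\tilde{O}(m)$ total-memory guarantee across this composition: I must make sure the charging argument of Lemma~\ref{lemma:mTotalMemory} --- which charges the ball of a high-degree node $v$ in the current (sub)graph to the $\deg(v)$ words already reserved for $v$'s neighborhood, using degree thresholds $\Delta^{1/2^j}$ relative to that graph's maximum degree --- remains valid when the ``current graph'' is an induced subgraph $G[S_{i-1}]$ with its own (smaller) maximum degree $O(f_{i-1}\log n)$. This is precisely what Theorem~\ref{theorem:sparsifyMPC} part (ii) already asserts, so the clean route is to invoke that theorem as a black box rather than re-derive it; once that is granted, the remainder is arithmetic identical to the proof of Lemma~\ref{lemma:betarsSuperLinearMPC}.
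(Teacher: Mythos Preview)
Your proposal is correct and follows essentially the same approach as the paper: invoke Theorem~\ref{theorem:sparsifyMPC} part~(ii) for each of the $\beta-1$ calls to \dos\ on the successively sparser subgraphs (using Lemma~\ref{lemma:sparsify} to bound the maximum degree of $G[S_{i-1}]$ by $O(f_{i-1}\log n)$), and then invoke Theorem~\ref{thm:finishMIS} part~(ii) for the final MIS call. The paper's own proof does exactly this, in fact with less detail than you provide --- it does not explicitly discuss the memory-reuse point or the collapse of the $\log\log(f_{i-1}\log n)$ factors to $\log\log\Delta$, so your two ``points of care'' are, if anything, a slight strengthening of the exposition rather than a deviation from it.
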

\ifhideproofs{}\begin{proof}
  Consider an arbitrary iteration \(i\), where \(1 \le i \le \beta-1\). The set \(S_{i-1}\) is the output of running \textsc{DegOrderedSparsify} with parameter \(f_{i-1}\). Therefore, by Lemma~\ref{lemma:sparsify}, we can say that the subgraph \(G[S_{i-1}]\) has maximum degree at most \(O(f_{i-1}\log n)\) whp. Therefore the running time of \(\textsc{DegOrderedSparsify}(G[S_{i-1}], f_i)\) will be
  \[O\left(\sqrt{\frac{\log (f_{i-1}\log n)}{\log f_{i}}} \log\log \Delta\right)\]
  rounds in the low-memory MPC model by Theorem~\ref{theorem:sparsifyMPC} part (ii) with \(\tilde{O}(m)\) total memory.

After the $\beta-1$ calls to \textsc{DegOrderedSparsify} are completed, in Line~\ref{alg:betars:MIS} of Algorithm~\ref{alg:betars}, we call the MIS algorithm referred to in Theorem~\ref{thm:finishMIS} part (ii).
Since the maximum degree of the graph that is input to this call is $O(f_{\beta-1} \cdot \log n)$ whp, this call runs in \(O\left(\left(\sqrt{\log( f_{\beta-1}\log n)}\right)  \cdot \log\log \Delta + \sqrt{\log \log n}\right)\) rounds for obtaining an MIS of \(G[S_{\beta-1}]\). The lemma follows.
\end{proof}\fi{}

We now instantiate the parameters $f_1, f_2, \ldots, f_{\beta-1}$ 
so as to minimize the running times in Lemmas~\ref{lemma:betarsSuperLinearMPC} and~\ref{lemma:betarsLinearMPC}.
This leads to the following corollaries.

\begin{theorem}\label{thm:betarsMPC}
A $\beta$-ruling set of a graph $G$ can be found whp in the low-memory MPC model in (i) $O\left(\eps^{-1}\beta \cdot \log^{1/(2^{\beta+1}-2)} \Delta \cdot  \log\log n\right)$ rounds with \(\tilde{O}(m + n^{1+\eps})\) total memory and in (ii) \(O\left(\beta \cdot \left(\log^{1/2\beta} \Delta \cdot \log\log \Delta + \sqrt{\log\log n}\right) \cdot \log\log \Delta\right)\)
rounds with \(\tilde{O}(m)\) total memory.
\end{theorem}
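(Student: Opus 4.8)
The plan is to treat Theorem~\ref{thm:betarsMPC} as two parameter-optimization problems: substitute carefully chosen sparsification parameters $f_1, f_2, \ldots, f_{\beta-1}$ into the running-time bound of Lemma~\ref{lemma:betarsSuperLinearMPC} for part~(i), and into the bound of Lemma~\ref{lemma:betarsLinearMPC} for part~(ii), and then simplify. In each bound the expression to be minimized is a sum of $\beta$ terms whose $i$-th term links the consecutive degree thresholds through $\log f_{i-1}$ and $\log f_i$ (with the convention $f_0 := \Delta$); the natural strategy is to pick the $f_i$'s so that all $\beta$ terms become asymptotically equal, which turns the optimization into a short recurrence for the exponents.

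For part~(i), I would write $x_i := \log f_i$ (so $x_0 = \log\Delta$), pull the common factor $1/\sqrt{\log n}$ out of the bracket in Lemma~\ref{lemma:betarsSuperLinearMPC}, and --- momentarily ignoring the additive $\log\log n$ hidden inside $\log(f_{i-1}\log n)$ --- reduce the problem to minimizing $\sum_{i=1}^{\beta-1} x_{i-1}/\sqrt{x_i} + x_{\beta-1}$. Equating each summand to a common value $T$ gives $x_{\beta-1}=T$ and $x_j = T\sqrt{x_{j+1}}$, hence $x_j = T^{\,2-2^{-(\beta-1-j)}}$; forcing $x_0 = \log\Delta$ yields $T = (\log\Delta)^{1/(2-2^{-(\beta-1)})}$, i.e., we set $f_j := 2^{(\log\Delta)^{(2-2^{-(\beta-1-j)})/(2-2^{-(\beta-1)})}}$. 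The bracket is then $O(\beta\,T/\sqrt{\log n})$, so Lemma~\ref{lemma:betarsSuperLinearMPC} gives $O(\eps^{-1}\beta\,T\,\log\log n/\sqrt{\log n})$ rounds. The step that produces the exponent in the statement is the observation that $\Delta \le n$, so we may replace $\sqrt{\log n}$ in the denominator by $\sqrt{\log\Delta}$ and compute
\[
\frac{T}{\sqrt{\log\Delta}} = (\log\Delta)^{\frac{2^{\beta-1}}{2^{\beta}-1}-\frac12} = (\log\Delta)^{\frac{1}{2(2^{\beta}-1)}} = (\log\Delta)^{1/(2^{\beta+1}-2)},
\]
which gives the claimed $O(\eps^{-1}\beta\,(\log\Delta)^{1/(2^{\beta+1}-2)}\log\log n)$ bound.

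For part~(ii), I would carry out the analogous balancing inside Lemma~\ref{lemma:betarsLinearMPC}: with $x_i = \log f_i$, the quantity to minimize (again ignoring the additive $\log\log n$'s) is $\sum_{i=1}^{\beta-1}\sqrt{x_{i-1}/x_i} + \sqrt{x_{\beta-1}}$, scaled by $\log\log\Delta$, plus the $\sqrt{\log\log n}$ term coming from the ``finishing-off'' MIS step. Setting all summands equal forces $x_{i-1}/x_i$ to be a constant, hence $x_j = (\log\Delta)^{(\beta-j)/\beta}$, i.e., $f_j := 2^{(\log\Delta)^{(\beta-j)/\beta}}$; the bracket then becomes $O(\beta\,(\log\Delta)^{1/(2\beta)})$, and the total running time is $O(\beta\,(\log\Delta)^{1/(2\beta)}\log\log\Delta + \sqrt{\log\log n})$, which is dominated by the bound asserted in the theorem.

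The hard part is not this algebra but the edge regimes I have suppressed above. First, the logarithms in Lemmas~\ref{lemma:betarsSuperLinearMPC} and~\ref{lemma:betarsLinearMPC} are really of the form $\log(f_{i-1}\log n) = x_{i-1} + \log\log n$; when $\Delta$ is small enough that $x_{i-1} = (\log\Delta)^{\Theta(1)}$ falls below $\log\log n$, each such term is $\Theta(\log\log n)$, so in part~(ii) one must carry an additional $O(\beta\sqrt{\log\log n}\cdot\log\log\Delta)$ --- this is precisely why the theorem statement carries the extra $\log\log\Delta$ factor and the $\beta\sqrt{\log\log n}$ term --- while in part~(i) the same slack is absorbed into the already-present $O(\eps^{-1}\beta\log\log n)$. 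Second, one must check the hypothesis $f_i > 3$ of Lemma~\ref{lemma:sparsify}: for $\Delta$ above a fixed polylogarithmic threshold the chosen $f_i$'s are large, and below that threshold the whole algorithm already runs in $O(\beta\cdot\pll(n))$ rounds, so the two cases are handled separately. Finally, the correctness argument --- that each call $\dos(G[S_{i-1}],f_i)$ really receives a graph of maximum degree $O(f_{i-1}\log n)$, so that the max-degree guarantee of Lemma~\ref{lemma:sparsify} composes across all $\beta-1$ sparsification rounds, and $C$ is a genuine $\beta$-ruling set --- has to be assembled by a union bound over the $\beta$ rounds; this is routine but is where correctness rather than timing is actually established.
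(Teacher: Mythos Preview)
Your proposal is correct and follows essentially the same approach as the paper. You derive the parameters $f_i$ by balancing the $\beta$ terms via a recurrence, whereas the paper simply states the choices $f_i = 2^{\log^{\delta_i}\Delta}$ (with $\delta_{\beta-1}=\tfrac12+\tfrac{1}{2^{\beta+1}-2}$ and $\delta_{i-1}=\tfrac12+\tfrac{1}{2^{\beta+1}-2}+\tfrac{\delta_i}{2}$) for part~(i) and $f_i=2^{\log^{1-i/\beta}\Delta}$ for part~(ii) and then verifies each term has the required size; unwinding your recurrences gives exactly these values, and both arguments use $\Delta\le n$ in the same way to absorb the $\sqrt{\log n}$ denominator.
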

\ifhideproofs{}\begin{proof}
For part (i), we set $f_i := 2^{\log^{\delta_i} \Delta}$ for all $i$, $1 \le i \le \beta -1$. We then set
\[
\delta_{i-1} =
\begin{cases}
\frac{1}{2} + \frac{1}{2^{\beta+1}-2} & \text{if } i = \beta\\
\frac{1}{2} + \frac{1} {2^{\beta+1}-2} + \frac{\delta_i}{2} & \text{if } 1 < i < \beta.\\
\end{cases}
\]
With this setting of the parameters, the term in (\ref{equation:betaRuling1}) after the summation evaluates to
\begin{align*}
 O&\left(\eps^{-1}\left(\frac{\sqrt{\log \Delta} \cdot \log^{1/(2^{\beta+1}-2)} \Delta + \log\log n}{\sqrt{\log n}}\right) \cdot \log\log n\right) \\
 &= O\left(\eps^{-1} \log^{1/(2^{\beta+1}-2)} \Delta \cdot \log\log n\right).
\end{align*}
The term in the summation in (\ref{equation:betaRuling1}) indexed by $i$ for $1 \le i \le \beta-1$ evaluates to
\begin{align*}
O&\left(\eps^{-1}\left(\frac{\sqrt{\log \Delta} \cdot \log^{1/(2^{\beta+1}-2)} \Delta
\cdot \log^{\delta_i/2} \Delta
+ \log\log n}{\log^{\delta_i/2} \Delta \cdot \sqrt{\log n}}\right) \cdot \log\log n\right) \\ 
&= O\left(\eps^{-1} \log^{1/(2^{\beta+1}-2)} \Delta \cdot \log\log n\right).
\end{align*}
This yields the claimed running time because each of the $\beta$ terms in the expression for the running time in (\ref{equation:betaRuling1}) is equal to $O\left(\eps^{-1} \log^{1/(2^{\beta+1}-2)} \Delta \cdot \log\log n\right)$.

For part (ii), we set $f_i := 2^{\log ^{1-\frac{i}{\beta}} \Delta}$ for all $i$, $1 \le i \le \beta-1$.
With this setting of the parameters, the term in (\ref{equation:betaRuling2}) after the summation evaluates to
\begin{align*}
O&\left(\sqrt{\log^{1/\beta} \Delta + \log\log n} \cdot \log\log \Delta\right) \\
&= O\left(\left(\log^{1/2\beta} \Delta \cdot \log\log \Delta + \sqrt{\log\log n}\right) \cdot \log\log \Delta\right).
\end{align*}

The term in the summation in (\ref{equation:betaRuling2}) indexed by $i$ for $1 \le i \le \beta-1$ evaluates to
\begin{align*}
O&\left(\sqrt{\frac{\log^{1-(i-1)/\beta} \Delta + \log\log n}{\log^{1-i/\beta} \Delta}}\cdot \log\log \Delta\right) \\
&= O\left(\left(\log^{1/2\beta} \Delta \cdot \log\log \Delta + \sqrt{\log\log n}\right) \cdot \log\log \Delta\right).
\end{align*}

This yields the claimed running time because each of the $\beta$ terms in the expression for the running time in (\ref{equation:betaRuling2}) is equal to $O\left(\left(\log^{1/2\beta} \Delta \cdot \log\log \Delta + \sqrt{\log\log n}\right) \cdot \log\log \Delta\right)$.
\end{proof}\fi{}

\subsection{$\beta$-ruling sets in $O(\pll(n))$ rounds}

As mentioned in the Introduction, this research is partly motivated by the question of whether ruling set problems can be solved in the low-memory MPC model in $O(\pll(n))$ rounds.
Using our results we identify two interesting circumstances under which $\beta$-ruling sets can be computed in the low-memory MPC model in $O(\pll(n))$ rounds.
First, because the running time in Theorem~\ref{thm:betarsMPC} part (i) has an inverse exponential dependency on $\beta$, we get the following corollary.

\begin{corollary}
For  $\beta \in \Omega(\log\log\log \Delta)$, a $\beta$-ruling set of a graph $G$ can be computed in $O(\beta \log\log n)$ rounds whp in the low-memory MPC model with $\tilde{O}(m+n^{1+\eps})$ total memory.
\end{corollary}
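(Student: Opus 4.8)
The plan is to invoke Theorem~\ref{thm:betarsMPC} part (i) as a black box and simplify the resulting bound under the hypothesis $\beta \in \Omega(\log\log\log\Delta)$. That theorem says a $\beta$-ruling set can be computed whp in the low-memory MPC model with $\tilde{O}(m+n^{1+\eps})$ total memory in $O\left(\eps^{-1}\beta \cdot \log^{1/(2^{\beta+1}-2)} \Delta \cdot \log\log n\right)$ rounds. Since $\eps$ is a fixed constant, it suffices to argue that the factor $\log^{1/(2^{\beta+1}-2)} \Delta = (\log\Delta)^{1/(2^{\beta+1}-2)}$ is $O(1)$ whenever $\beta$ is at least a constant multiple of $\log\log\log\Delta$; then the three-way product in the running time collapses to $O(\beta\log\log n)$.

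First I would observe that $(\log\Delta)^{1/(2^{\beta+1}-2)} \le 2$ is equivalent to $2^{\beta+1}-2 \ge \log\log\Delta$ (taking all logarithms to base $2$ for concreteness). Next, using $\beta \ge \log\log\log\Delta$, we get $2^{\beta+1} = 2\cdot 2^{\beta} \ge 2\cdot 2^{\log\log\log\Delta} = 2\log\log\Delta$, and hence $2^{\beta+1}-2 \ge 2\log\log\Delta - 2 \ge \log\log\Delta$ for all $\Delta \ge 16$; the finitely many smaller values of $\Delta$ make the factor trivially bounded as well. Therefore $\log^{1/(2^{\beta+1}-2)} \Delta = O(1)$.

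Finally, substituting this $O(1)$ bound into the running time from Theorem~\ref{thm:betarsMPC} part (i), and absorbing the constant $\eps^{-1}$, the number of rounds becomes $O(\beta\log\log n)$ whp, with the $\tilde{O}(m+n^{1+\eps})$ total-memory guarantee inherited unchanged from the theorem. There is no real obstacle here beyond keeping track of the constant hidden in $\beta = \Omega(\log\log\log\Delta)$: once that constant is taken large enough (any constant at least $1$ works for $\Delta \ge 16$) to ensure $2^{\beta+1}-2 \ge \log\log\Delta$, the corollary follows from a single substitution into the statement of Theorem~\ref{thm:betarsMPC}.
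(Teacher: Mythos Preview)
Your proposal is correct and matches the paper's approach: the paper does not spell out a proof for this corollary, but simply notes that it follows from the inverse exponential dependence on $\beta$ in Theorem~\ref{thm:betarsMPC} part (i), which is exactly the substitution-and-simplification you carry out.
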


Second, we can also show that for graphs with bounded $\Delta$, we can compute a $\beta$-ruling set in $O(\beta \log\log n)$ rounds, however this bound increases quickly with $\beta$, giving us the following corollary.

\begin{corollary}
If we have that $\Delta = O\left(2^{\log^{1-\frac{1}{2^{\beta}}} n}\right)$, then a $\beta$-ruling set can be computed in $O(\beta \log\log n)$ rounds whp in the low-memory MPC model with $\tilde{O}(m+n^{1+\eps})$ total memory.
\end{corollary}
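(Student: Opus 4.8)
The plan is to apply Lemma~\ref{lemma:betarsSuperLinearMPC} with the sparsification parameters $f_1,\dots,f_{\beta-1}$ chosen to exploit the assumed bound $\log\Delta = O\!\left(\log^{1-1/2^\beta} n\right)$. It is convenient to abbreviate $e_j := 1 - 1/2^{j}$, so that $e_1 = 1/2$, $e_2 = 3/4$, and $e_j \to 1$; note the recursion $e_{j} = \tfrac12\big(1 + e_{j-1}\big)$, equivalently $1/2^{j} = \tfrac12 \cdot 1/2^{j-1}$. First I would set $f_i := 2^{\log^{e_{\beta-i}} n}$ for $i = 1, 2, \dots, \beta-1$. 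This is a legitimate choice: each $f_i > 3$ for $n$ large enough, the quantities $\log f_i = \log^{e_{\beta-i}} n$ are strictly decreasing in $i$ with $\log f_{\beta-1} = \sqrt{\log n}$, and by Lemma~\ref{lemma:sparsify} the graph $G[S_{i-1}]$ on which \dos\ is invoked with parameter $f_i$ has maximum degree $O(f_{i-1}\log n)$, which comfortably exceeds $f_i$, so each sparsification step is meaningful. With $f_0 = \Delta$, the hypothesis says precisely that $\log f_0 = O(\log^{e_\beta} n)$, so $\Delta$ plays the role of ``$f_0 = 2^{\log^{e_\beta} n}$''.

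Next I would substitute this choice into the running-time bound of Lemma~\ref{lemma:betarsSuperLinearMPC} and check that each of its $\beta$ summands is $O(1)$. For the $i$-th summand $\frac{\log(f_{i-1}\log n)}{\sqrt{\log f_i \cdot \log n}}$, the numerator is $\log f_{i-1} + \log\log n = O(\log^{e_{\beta-i+1}} n)$ — the first term dominates since $e_{\beta-i+1} \ge e_2 = 3/4$, and for $i = 1$ this is the unique place the hypothesis on $\Delta$ is used, to bound $\log f_0 = \log\Delta$. The denominator is $\sqrt{\log^{1+e_{\beta-i}} n} = \log^{(1+e_{\beta-i})/2} n = \log^{e_{\beta-i+1}} n$, by the recursion above. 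Hence numerator and denominator carry the same power of $\log n$ and the $i$-th summand is $O(1)$. Similarly the trailing summand $\frac{\log(f_{\beta-1}\log n)}{\sqrt{\log n}}$ has numerator $\sqrt{\log n} + \log\log n = O(\sqrt{\log n})$, so it too is $O(1)$; this summand also covers the concluding MIS computation on $G[S_{\beta-1}]$, whose maximum degree is $O(f_{\beta-1}\log n) = 2^{O(\sqrt{\log n})}$. Summing, the parenthesized expression in Lemma~\ref{lemma:betarsSuperLinearMPC} is $O(\beta)$, so the whole bound collapses to $O(\eps^{-1}\beta\log\log n) = O(\beta\log\log n)$ for constant $\eps$. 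Correctness (that the returned set is a $\beta$-ruling set of $G$) and the $\tilde{O}(m + n^{1+\eps})$ total-memory guarantee carry over verbatim from Lemma~\ref{lemma:betarsSuperLinearMPC}.

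The only genuine work is the exponent bookkeeping — choosing the $f_i$ so that the powers of $\log n$ in the numerator and denominator of every summand of Lemma~\ref{lemma:betarsSuperLinearMPC} match exactly — and this is driven entirely by the telescoping recursion $e_j = \tfrac12(1 + e_{j-1})$ on the exponents. The step to be most careful about is the $i = 1$ summand, the single point at which the hypothesis $\Delta = O(2^{\log^{1-1/2^\beta} n})$ enters; every other summand is a function of $n$ and $\beta$ only. It is also worth sanity-checking the base case $\beta = 2$: there the sum is the single term $\frac{\log(\Delta\log n)}{\sqrt{\log f_1 \cdot \log n}}$ with $f_1 = 2^{\sqrt{\log n}}$ and $\log\Delta = O(\log^{3/4} n)$, which is $O\!\big(\log^{3/4} n / \log^{3/4} n\big) = O(1)$, as it should be.
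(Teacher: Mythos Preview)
Your proposal is correct and follows essentially the same approach as the paper: plug a geometric sequence of sparsification parameters into Lemma~\ref{lemma:betarsSuperLinearMPC} so that each of the $\beta$ summands becomes $O(1)$. The paper's proof writes the parameters as $f_i = 2^{\log^{\delta_i}\Delta}$ with the same exponents $\delta_i = 1 - 1/2^{\beta-i}$, whereas you use $\log n$ as the base; your version is cleaner and your telescoping verification via $e_j = \tfrac12(1+e_{j-1})$ is exactly the computation the paper leaves implicit.
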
  
\ifhideproofs{}\begin{proof}
We set $f_i := 2^{\log^{\delta_i} \Delta}$, \(\delta_{i} = 1-\frac{1}{2^{\beta-i}}\) for all $i$, $0 \le i \le \beta -1$ in Theorem~\ref{thm:betarsMPC} part (i). Note that in this case we set \(f_0 = \Delta = O(2^{\log^{1-\frac{1}{2^{\beta}}} n})\), therefore each of the \(\beta\) terms it the running time containing \(f_i\)'s becomes a constant. This leads to a running time of \(O(\beta \log\log n)\).
\end{proof}\fi{} 

\section{Conclusions and Future Work}
The results we developed in this paper show that 2-ruling sets can be computed much faster than an MIS in the low-memory MPC model. In the absence of explicit lower bounds for this problem in the low-memory MPC model, it is an open question if we can improve on the round complexity of $O(\log^{1/6} \Delta)$ for computing a 2-ruling set in the low-memory MPC model. Another aspect to note is the vast difference in the runtime of the $\beta$-ruling set algorithms in the two  settings we consider with respect to the total memory.  
It is not clear if this divergence is natural due to the restriction in the model, or indicates a scope for improving Theorem \ref{theorem:mTotalMemory}.

 \bibliographystyle{plainurl}
\bibliography{refs}

\end{document}